\newtheorem{thm}{Theorem}
\newtheorem{lem}{Lemma}
\newtheorem{cor}{Corollary}
\newtheorem{claim}{Claim}
\newtheorem{assump}{Assumption}
\newcommand{\ones}{{{\mathbbm{1}}}}
\newcommand{\ind}[1]{\ones\{#1\}}
\newcommand{\bR}{\mathbb{R}}
\newcommand{\bO}{\mathbb{O}}
\newcommand{\cO}{\mathcal{O}}
\DeclareMathOperator*{\argmax}{arg\,max}
\DeclareMathOperator*{\argmin}{arg\,min}
\newcommand{\h}[1]{\widehat{#1}}
\newcommand{\til}[1]{\widetilde{#1}}
\newtheorem{remark}{Remark}
\newtheorem{theorem}{Theorem}
\def\bSig\mathbf{\Sigma}
\newcommand{\cG}{\mathcal{G}}
\newcommand{\bbR}{\mathbb{R}}
\newcommand{\bbE}{\mathbb{E}}
\newcommand{\bbP}{\mathbb{P}}
\newcommand{\bbO}{\mathbb{O}}
\DeclareMathSymbol{\sms}{\mathbin}{AMSa}{"39}
\newcommand*{\rom}[1]{\expandafter\@slowromancap\romannumeral #1@}
\providecommand{\keywords}[1]
{ 
	\small	
	\textbf{\textit{Keywords---}} #1
}
\begin{document}
\title{\bf Cluster Quilting: Spectral Clustering for Patchwork Learning}
\author[1]{Lili Zheng$^\dagger$\thanks{Corresponding author: lilizheng.stat@gmail.com}}
\author[2]{Andersen Chang$^\dagger$}
\author[3]{Genevera I. Allen}
		
\affil[1]{Department of Statistics, University of Illinois Urbana-Champaign}
\affil[2]{Department of Neuroscience, Baylor College of Medicine}
\affil[3]{Department of Statistics, Columbia University}

\date{}
\maketitle
\begin{abstract}
		Patchwork learning arises as a new and challenging data collection paradigm where both samples and features are observed in fragmented subsets. Due to technological limits, measurement expense, or multimodal data integration, such patchwork data structures are frequently seen in neuroscience, healthcare, and genomics, among others. Instead of analyzing each data patch separately, it is highly desirable to extract comprehensive knowledge from the whole data set. In this work, we focus on the clustering problem in patchwork learning, aiming at discovering clusters amongst all samples even when some are never jointly observed for any feature. We propose a novel spectral clustering method called Cluster Quilting, consisting of (i) patch ordering that exploits the overlapping structure amongst all patches, (ii) patchwise SVD, (iii) sequential linear mapping of top singular vectors for patch overlaps, followed by (iv) k-means on the combined and weighted singular vectors. Under a sub-Gaussian mixture model, we establish theoretical guarantees via a non-asymptotic misclustering rate bound that reflects both properties of the patch-wise observation regime as well as the clustering signal and noise dependencies. We also validate our Cluster Quilting algorithm through extensive empirical studies on both simulated and real data sets in neuroscience and genomics, where it discovers more accurate and scientifically more plausible clusters than other approaches.
	\end{abstract}
	\noindent
	\keywords{Spectral clustering, blockwise missingness, patchwork learning, mixture models, multi-omics, data integration, matrix completion}
\footnotetext{\hspace{-0.5em}$\dagger$: Equal contribution.}

	\section{Introduction}
	\label{s:intro}
	
	Clustering is a fundamental statistical tool for discovering prominent patterns from large-scale data sets, such as finding gene pathways \citep{dalton2009clustering}, tumor cell subtypes \citep{shen2009integrative}, text topics \citep{aggarwal2012survey}, communities in social networks \citep{mishra2007clustering}, and many others. Various clustering techniques were proposed in the literature and widely applied, e.g., k-means \citep{macqueen1967some}, hierarchical clustering \citep{bridges1966hierarchical}, convex clustering \citep{chen2015convex}, and spectral clustering \citep{donath1972algorithms}. Although clustering for fully observed data has been extensively studied, in this paper, we consider a novel setting where one only has access to a collection of data patches, i.e., partial observations of both features and samples (see e.g., Fig. \ref{fig:illustration_PL}). This is motivated by many biomedical applications. For example, in neuroscience, genomics, and proteomics, simultaneous measurement of all observations of interest can be extremely expensive or impossible due to technological constraints; thus, in these situations, data may be collected such that only a subset of features and observations are recorded within a single experimental session \citep{microns2021functional, browning2008missing,baghfalaki2016missing}. Another example arises in healthcare applications, where data collected on patients from different sites or hospitals may only cover distinct subsets of all possible modalities due to clinician discretion or availability of facilities and thus lead to a lack of measurements for entire blocks of observations and features \citep{cai2019survey,kline2022multimodal}. This particular data missingness problem was introduced as the new ``patchwork learning paradigm'' in \cite{rajendran2023patchwork}; there, the authors also outline numerous opportunities and challenges in this area, one of which is the fragmented observational structure where missingness is non-random. Given such patchwise observations, naively applying existing clustering techniques to each patch fails to give a holistic view of the whole population. Instead, we aim to effectively utilize all data patches and perform a comprehensive clustering for all samples available. We term this as the ``cluster quilting" problem, following a recent work on ``graph quilting" \citep{vinci2019graph} which aims to learn graph structure from a similar type of data.

	\subsection{Problem Set-up}
	To fix ideas, we start by introducing the commonly studied mixture model for data generation and the patchwork data observation setting. Suppose there are $K>1$ clusters with $p$-dimensional centroids $\theta^*_1,\dots,\theta^*_K$. For each sample $1\leq i\leq n$, let $z_i\in [K]$ be its cluster label, and its associated full feature vector $X_i=\theta_{z_i}+\epsilon_i\in \mathbb{R}^p$ with mean-zero random noise $\epsilon_i$. Let $X=(X_1,\cdots,X_n)\in \bR^{p\times n}$ be the data matrix, $\Theta^* = (\theta_1,\cdots,\theta_K)\in \bR^{p\times K}$ be the centroid matrix, $E = (\epsilon_1,\dots,\epsilon_n)\in \bR^{p\times K}$ be the noise matrix, and $F^*\in \bR^{n\times K}$ encode the cluster membership: $F^*_{i,k} = \ind{z_i=k}$. The model can be written in the following matrix format:
	\begin{equation}\label{eq:model}
	X = X^* + E = \Theta^*F^{*\top}+E\in \bR^{p\times n},
	\end{equation}
	where $X^* = \bbE X$ is the expectation of observational data. Our goal is then to assign the $n$ samples into $K$ clusters, or equivalently, to estimate the cluster labels $z_1,\dots,z_n$ up to permutation on $[K]$. However, in the patchwork observation setting, we do not have access to the full data matrix $X$ but rather only subsets of the rows and columns structured as patches. More formally, we suppose here that all $p$ features can be divided into $M$ disjoint subsets $T_1,\,\dots,\,T_M$, $\cup_{m=1}^M T_m= [p]$ and that each feature subset $T_m$ of size $p_m$ is associated with a corresponding observation subset $I_m\subset[n]$ of size $n_m$; the observed data thus consists of submatrices of the full data: $\{X_{T_1,I_1},\dots,X_{T_M,I_M}\}$. Following the prior work on patchwork learning \citep{rajendran2023patchwork}, we also use a graph to characterize the observational relationship between different feature blocks (e.g., modalities). In particular, consider a graph $\cG=(V, E)$ where the node set $V=[M]$ are the $M$ patches, and $(j,k)\in E$ if and only if $I_j\cap I_k \neq \emptyset$ (see Fig. \ref{fig:illustration_PL} for illustration). Throughout the paper, we assume the graph $\cG$ to be connected so that the signals can be shared across different patches.
	
	\begin{figure}[t]
		\centering
		\begin{subfigure}[t]{0.45\linewidth}
			\centering
			\includegraphics[width=.95\linewidth]{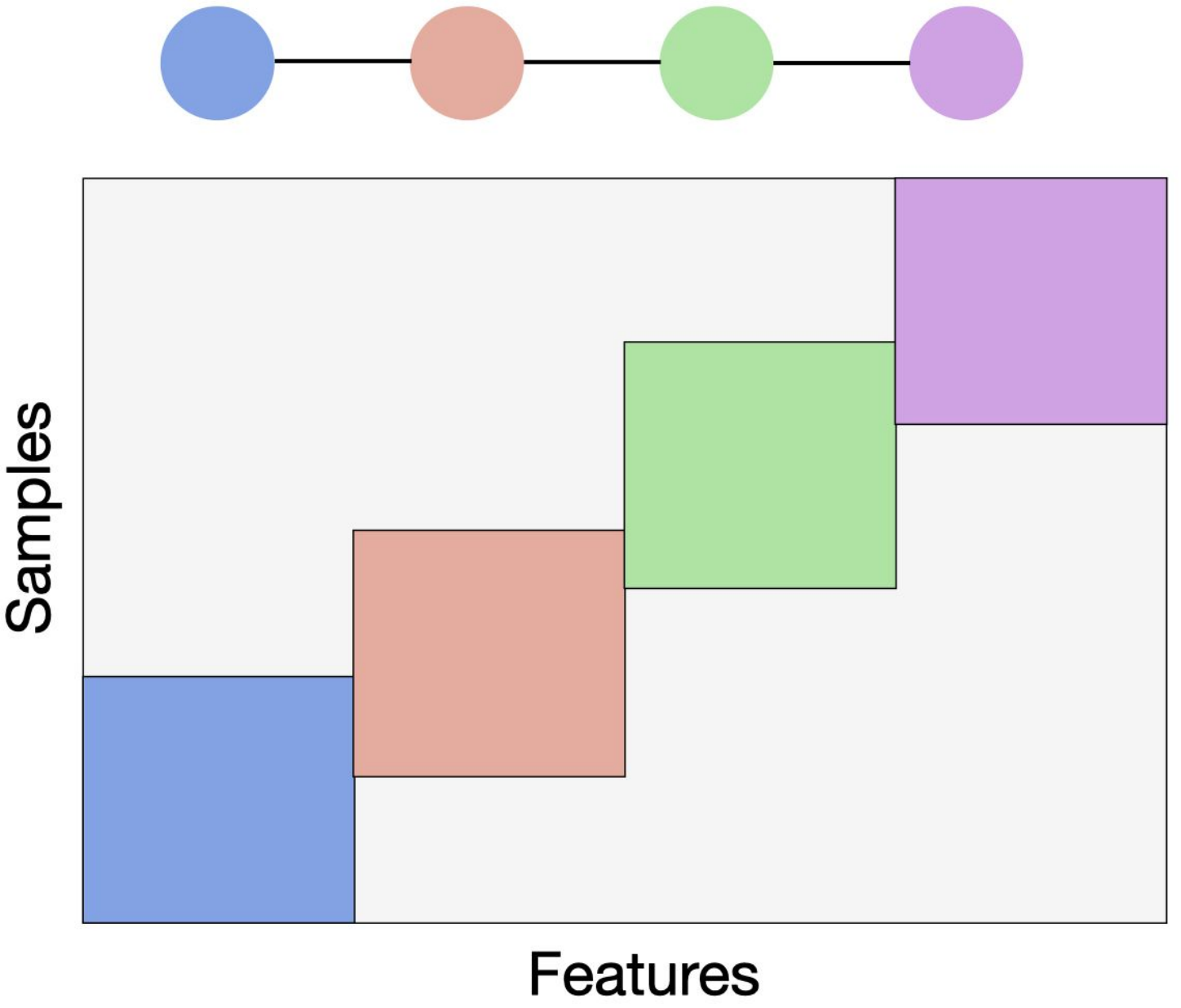}
			\caption{Sequential patches motivated by neuronal functional recording via calcium imaging, where feature subsets are distinct time periods and samples are neurons.}
			\label{fig:visualize_sequential}
		\end{subfigure}
		\begin{subfigure}[t]{0.45\linewidth}
			\centering
			\includegraphics[width=.95\linewidth]{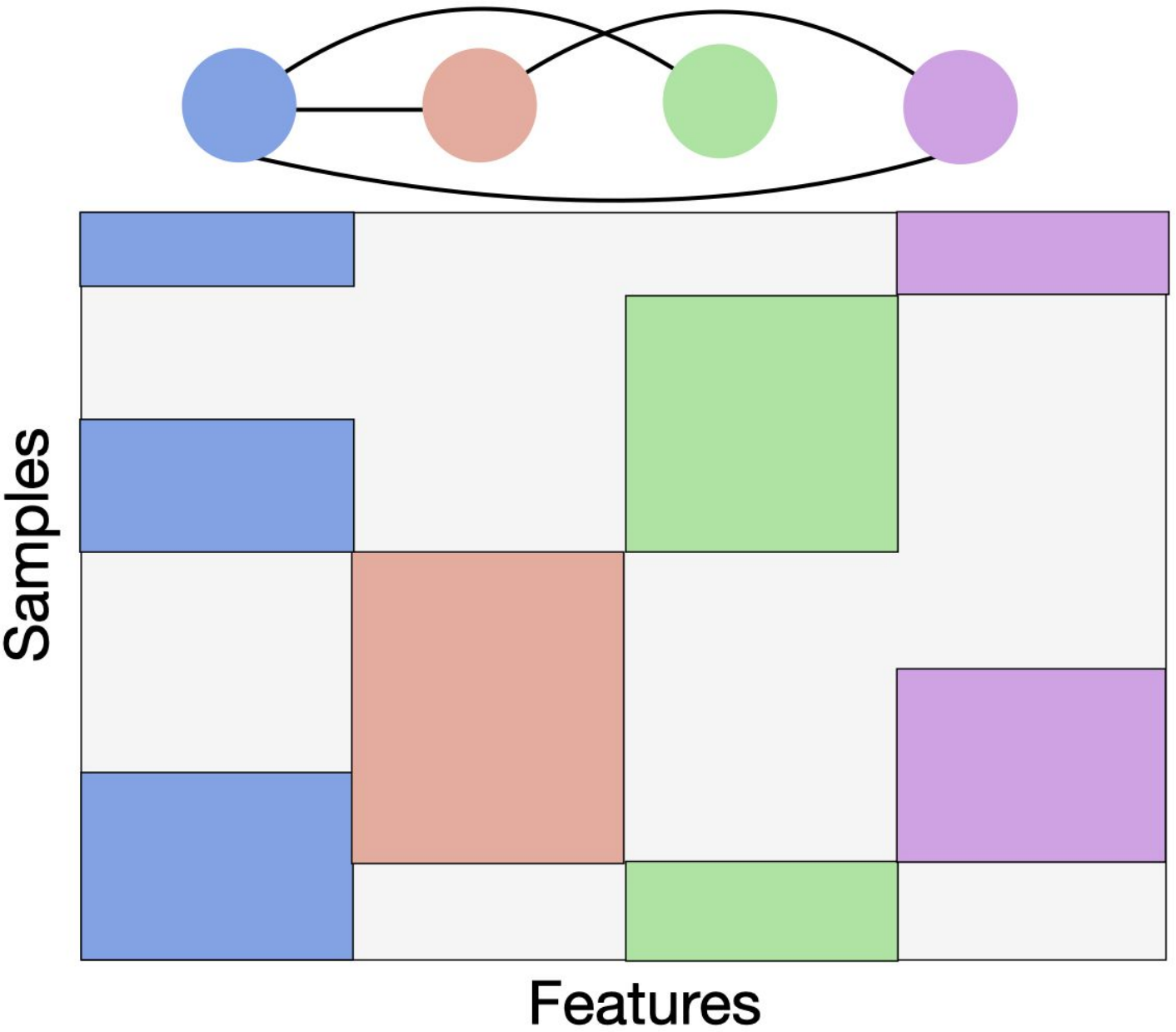}
			\caption{Mosaic patches motivated by healthcare and genomics, where disjoint feature subsets cover different modalities.}
			\label{fig:visualize_multiview}
		\end{subfigure}
		\caption{Two examples of patchwork observational patterns. The graphs above represent the overlapping relationships between disjoint feature blocks.}
		\label{fig:illustration_PL}
	\end{figure}

	\subsection{Related Literature}
	Many different approaches have previously been proposed for clustering with missing data. Classically, two-step approaches have been applied in which missing cases are either deleted or imputed \citep{inman2015case,aktas2019data}; the former can not be applied in the patchwork learning scenario because there do not exist any fully observed rows or columns, while the performance of latter does not come with any statistical guarantees about clustering performance. More recently, several papers have proposed general likelihood-based optimization approaches for estimating clusters with missing data. Examples of these include a constrained k-means objective estimated by majorization-minimization \citep{chi2016k}, an $\ell_0$ fusion-penalty based optimization method \cite{poddar2019clustering}, a likelihood-based approach to estimate mixture models parameters from missing data \cite{hunt2003mixture,lin2006fast}, a modified fuzzy c-means approach with an adjusted objective to account for data missingness \cite{hathaway2001fuzzy}, and a Bayesian model-based random labeling point process approach \cite{boluki2019optimal}. 
	
	However, one key difference between the setting used in many prior papers and the one we consider in this paper is in the structure of the missing entries. Cluster quilting is particularly a non-trivial task because of the imposed block observation scheme, as the missingness pattern is highly non-random and very dense. Additionally, many pairs of samples are only observed on disjoint sets of features, meaning that there is no direct way to compute similarities between many of the pairs of samples in the data. Although imputation comes up as a natural idea, how to accurately impute the data or similarity matrix from non-random observational patterns is still a challenging and active research area \citep{chatterjee2020deterministic,yan2024entrywise,choi2023matrix}, where most prior works consider missing patterns different from ours; it is also unclear whether the imputed data suffices for the downstream clustering task in the patchwork observation setting. Therefore, previous results on theoretical guarantees and empirical performance of the likelihood-based approaches outlined above may not be applicable here. In particular, many of the previous works lack provable theoretical computational convergence, while others only study the case of random missingness in the data matrix. Some prior existing methods formulate cluster quilting as various optimization problems \citep{wang2019k, wen2022survey}, but these methods lack statistical guarantees of convergence, relying on non-convex optimization with no assurance of global algorithmic convergence.

	\subsection{Contribution and Organization}
	In this paper, we propose a new spectral-based method for clustering incomplete data called the Cluster Quilting algorithm; this is introduced in Section \ref{sec:model}. Broadly, our Cluster Quilting procedure consists of the following steps: (i) order the patches based on their overlapping signals; (ii) learn the top singular vectors of each data patch; (iii) perform sequential matching across data patches in the sample space; (iv) obtain an estimate of the full top singular spaces and apply k-means on the singular value-weighted singular vectors. While much work has been done regarding the theoretical and empirical properties of spectral clustering under the stochastic block \citep{rohe2011spectral,jin2015fast,lei2015consistency,zhou2023heteroskedastic} and Gaussian mixture \citep{loffler2021optimality,abbe2022l} data generation models which we study in this work, we are the first to study spectral clustering for the general patchwork learning observation setting. Our method is most similar to those proposed in \cite{bishop2014deterministic}, \cite{yu2020optimal}, and \cite{chang2022lowrank}, which also utilize a low-rank assumption in order to perform matrix completion for downstream analysis and modeling in the patchwork learning context. However, our work fundamentally differs from these previous works in terms of the considered data-generative model and the clustering objective, the general patchwork observations that go beyond sequential observational pattern \citep{chang2022lowrank}, as well as our new method that starts with a theory-motivated, data-driven patch ordering step. In particular, we provide novel theory for our cluster quilting algorithm in Section \ref{sec:theory}, characterizing the conditions and misclustering rate in terms of separation distance or minimum singular values in the data patches; the result presents a much tighter bound compared to the previous literature on matrix completion \citep{bishop2014deterministic}. We also prove that the data-driven selection of patch ordering guarantees a similar performance to the oracle ordering, with only a constant factor inflation in the SNR condition and misclustering rate. Lastly, we validate our method through extensive empirical studies on simulated and real data sets in Sections \ref{sec:gsim}-\ref{sec:real}, in which we show that Cluster Quilting improves upon previous methods proposed for incomplete spectral clustering in the task of recovering true cluster labels in the patchwork observation setting.

	\section{Cluster Quilting Algorithm}
	\label{sec:model}
	Inspired by the success of spectral clustering on sub-Gaussian mixture models \citep{loffler2021optimality}, we propose a spectral algorithm called \emph{Cluster Quilting}, where the key steps are learning the top right singular vectors of $X^*$. Due to the blockwise missingness, the full right singular vectors of $X^*$ cannot be recovered from any data patch alone; but instead, we need to combine all pieces together and match them using linear transformations. Denote by $r>0$ the rank of the ground truth matrix $X^*$. We define the SVD of $X^*$ as $X^* = U^*\Lambda^*V^{*\top}$ for orthogonal matrices $U\in \bO^{p\times r}$, $V\in \bR^{n\times r}$, and diagonal matrix $\Lambda\in \bR^{r\times r}$. Since we do not assume the cluster centroid matrix $\Theta^*$ to be of full rank, the rank $r\leq K$ does not necessarily equal $K$. For now, we assume the data patches are appropriately ordered and each sample subset $I_m\cap (\cup_{m'<m}I_{m'})\neq \emptyset$; in fact, such an order always exists under our assumption that the observational graph $\cG$ is connected. We will introduce several theory-inspired strategies to choose the ordering in Section \ref{sec:practicality}. 
	
	Given a fixed ordering, our algorithm consists of three main steps. First, we perform singular value decomposition for each patch $X_{T_m,I_m}$ and take its top $r$ singular vectors. We then find the best invertible matrix to match the eigenvectors of the $m$th patch with the previous $m-1$ patches, for $2\leq m\leq M$. Lastly, we perform k-means clustering on the full estimate of top singular vectors. The full cluster-quilting algorithm is summarized in Algorithm \ref{alg:bsvdcq}. We note here that the blockwise SVD and consecutive block matching ideas also appeared in an earlier work \citep{bishop2014deterministic} for PSD matrix completion from block observations. However, we consider a different model, i.e. a mixture model, with a different objective, i.e. clustering, and hence our matching step looks for the best linear transformation between patches through least squares instead of finding the best rotation matrix as in \cite{bishop2014deterministic}. 

		{\LinesNumberedHidden
			\begin{algorithm}[t] 
				\caption{Cluster Quilting} \label{alg:bsvdcq}
				\SetAlgoNoLine
				\SetKwFunction{Union}{Union}\SetKwFunction{FindCompress}{FindCompress}
				\SetKwInOut{Input}{Input}\SetKwInOut{Output}{Output}
				\begin{small}
					\textbf{Input:} $\{X_{T_m,I_m}\in \mathbb{R}^{|T_m|\times |I_m|},m \in 1, \hdots M\}$, $r > 0$\\
					\textbf{Initialize:} $\widetilde{H} = \boldsymbol{0}_{p \times r}$, $\widetilde{V} = \boldsymbol{0}_{n \times r}$.\\
					\textbf{Find:} low-rank solution for first patch: \\
					\hspace{0.5cm} (1) Let $\widehat{U}_1\widehat{\Lambda}_1 \widehat{V}_1^{\top}$ be the top-$r$ SVD of $X_{T_1,I_1}$\\
					\hspace{0.5cm} (2) Set $\widetilde{V}_{I_1,\, :} = \widehat{V}_1$, $\widetilde{H}_{T_1,\, :} = \widehat{U}_1\widehat{\Lambda}_1$ \\
					\For{$m = 2, \hdots M$}{
						\vspace{7pt}
						\hspace{0.1cm}(1) Find low-rank solution for the $m$-th patch: \\ 
						\hspace{0.4cm}(a) Let $\widehat{U}_m\widehat{\Lambda}_m \widehat{V}_m^{\top}$ be the top-$r$ SVD of $X_{T_m,I_m}$\\
						\hspace{0.4cm}(c) Let $\widehat{H}_m = \widehat{U}_m\widehat{\Lambda}_m$.\\
						\hspace{0.1cm}(2) Merge with previous patches: \\ 
						\hspace{0.4cm}(a) Find overlaps $J_m^{(1)} = (\bigcup_{j = 1}^{m-1} I_j )\cap I_m, \, J_m^{(2)} = \{j: I_m[j] \in J^{(1)}\}$. \\
						\hspace{0.4cm}(b) Find the best linear transform $G_m=((\widehat{V}_m)_{J_m^{(2)},:}^\top (\widehat{V}_m)_{J_m^{(2)},:})^{\dagger}(\widehat{V}_m)_{J_m^{(2)},:}^\top \widetilde{V}_{J_m^{(1)},:}$\;
						\hspace{0.4cm}(c) Set $\widetilde{V}_{I_m\backslash J_m^{(1)},\, :} = (\widehat{V}_m)_{\backslash J_m^{(2)},\, :}G_m$, $\widetilde{H}_{T_m} = \widehat{H}_m(G_m^\top)^{-1}$.
					}
					\textbf{Post-processing of $\widetilde{H}\widetilde{V}^\top$}: Let the SVD of $\widetilde{H}\widetilde{V}^\top$ be $\widehat{U}\widehat{\Lambda}\widehat{V}^\top$.\\
					\textbf{Perform $k$-means on the columns of $\widehat{\Lambda}\widehat{V}^\top$}:
					$$
					(\widehat{z}, \{\widehat{c}_j\}_{j=1}^K)=\arg\min_{z\in [K]^n, \{\widehat{c}_j\}_{j=1}^K\in \mathbb{R}^r}\sum_{i=1}^n\|(\widehat{\Lambda}\widehat{V}^\top)_{:,\,i}-\widehat{c}_{z_i}\|_2^2
					$$
					\Return{Clustering labels $\widehat{z}\in [K]^n$, cluster means $\{\widehat{U}\widehat{c}_j\}_{j=1}^K$}
				\end{small}
				
				\vspace{2pt}
		\end{algorithm}}

	\subsection{Practical Considerations}\label{sec:practicality}
	
		{\LinesNumberedHidden
			\begin{algorithm}[t] 
				\caption{Forward Search for Block Ordering} \label{alg:orderFor}
				\SetAlgoNoLine
				\SetKwFunction{Union}{Union}\SetKwFunction{FindCompress}{FindCompress}
				\SetKwInOut{Input}{Input}\SetKwInOut{Output}{Output}
				\begin{small}
					\textbf{Input:} $\{X_{T_m,I_m}\in \mathbb{R}^{|T_m|\times |I_m|},m \in 1, \hdots M\}$, score function $s(\cdot,\cdot):[M]\times 2^{[n]}\rightarrow [0,\infty)$\\
					\textbf{Initialize:} 
					Let $\hat{\pi}(1)\neq \hat{\pi}(2)\in [M]$ be an ordered pair of patch indices such that 
					$(\hat{\pi}(1),\,\hat{\pi}(2)) = \argmax_{k_1\neq k_2\in [M]}s(k_2,I_{k_1})$. 
					\For{$3\leq k\leq M$}{
						\begin{enumerate}
							\item Let $L = \{\hat{\pi}(1),\,\dots,\,\hat{\pi}(k - 1)\}$.
							\item Let $\hat{\pi}(k) = \argmax_{j: j\notin L}s(j,\cup_{l\in L}I_l)$.
						\end{enumerate}
					}
					\Return{Ordering function $\hat{\pi}(\cdot):[M]\rightarrow[M]$.}
				\end{small}
				\vspace{2pt}
		\end{algorithm}}

	The application of the Cluster Quilting algorithm involves several practical choices. Firstly, the ordering of the patches is critical, as for the first step of Algorithm \ref{alg:bsvdcq} to be applicable, we need to ensure that $I_m\cap (\cup_{l<m}I_l)\neq \emptyset$; (ii) for accurate matching in each step (computation of $G_m$), the overlapping signal within $J^{(1)}_m = I_m\cap (\cup_{l<m}I_l)$ also needs to be sufficient. Motivated by this, we propose to first order the patches appropriately with the following ordering function $\hat{\pi}: [M]\rightarrow [M]$:
	\begin{equation}\label{eq:order_maximizer}
	\hat{\pi} = \argmax_{\pi:[M]\rightarrow[M]}\prod_{m=2}^Ms(\pi(m), \cup_{l<m} I_{\pi(l)}),
	\end{equation}
	where $s(\pi(m), \cup_{l<m} I_{\pi(l)})$ is a generic score function that measures the overlapping signal between the $\pi(m)$'s patch and prior patches $\cup_{l<m} I_{\pi(l)}$ in the sample space, which can also be interpreted as the connection strength in the patch graph $\mathcal{G}$. Our choice to maximize the multiplication of all overlapping scores is motivated by our theoretical results in Section \ref{sec:theory}, and this minimizes the error propagation of sequential patch matching in Alg. \ref{alg:bsvdcq}. The generic score functions can also take different forms: one simple idea is to take $s(k, I) = |I_k\cap I|$ as the overlapping size; we will also introduce a theory-motivated score function in Section \ref{sec:theory} and show theoretical guarantees with this data-driven patch ordering. Given a score function $s(k,I)$, one can solve \eqref{eq:order_maximizer} by searching over the whole space of feasible orderings such that $I_m\cap (\cup_{l<m}I_l)\neq \emptyset$. Since the number of patches $M$ is often not large (under $10$) in our motivating applications, e.g., multiomics data sets or calcium imaging data, this combinatorial optimization problem is still computationally feasible. Given $\hat{\pi}(\cdot)$, we can then apply Alg. \ref{alg:bsvdcq} on $\{X_{T_{\hat{\pi}(m)},I_{\hat{\pi}(m)}}\}_{m=1}^M$. 
	For computational speed up when $M$ is large, we also propose a greedy forward search algorithm for the patch-ordering, summarized in Alg. \ref{alg:orderFor}. In this schema, Alg. \ref{alg:orderFor} first chooses the $\hat{\pi}(1),\,\hat{\pi}(2)$ such that $s(\hat{\pi}(2),I_{\hat{\pi}(1)})$ is maximized over all ordered patch pairs; then it sequentially selects $\hat{\pi}(k)$ that maximizes $s(\hat{\pi}(k),\cup_{l<k}I_{\hat{\pi}(l)})$. 
	
	The Cluster Quilting algorithm also requires the selection of both the number of clusters as well as the rank of the spectral clustering step. To select these hyperparameters, we follow the previously proposed prediction validation approach \citep{tibshirani2005cluster} across a grid of potential hyperparameter values selected a priori. For this approach, the data are split into a training and a test set and clustered separately using each potential combination of rank and number of clusters. A random forest model is then trained on the former to predict the cluster assignments of the observations in the latter. The final rank and number of clusters are chosen by selecting the setting which provides the highest agreement between the labels found by the clustering algorithm and the predictions from the supervised model.

	\section{Theoretical Guarantees}\label{sec:theory}
	
	In this section, we provide theoretical characterization for the clustering performance of our Alg. \ref{alg:bsvdcq}. We first assume that Alg. \ref{alg:bsvdcq} is applied on the original data patch ordering $1,\,\dots,\,M$, without making probabilistic assumptions on the noise distribution; Section \ref{sec:theory_subGauss} gives the consequences under sub-Gaussian noise; Section \ref{sec:theory_datadriven} then provides theory for the data-driven selection of the patch ordering. 
	
	Here, we first introduce some key technical quantities we will use throughout Section \ref{sec:theory}. Define the misclustering rate as: $\ell(\widehat{z},z):=\min_{\phi\in \Phi}\frac{1}{n}\sum_{i\in [n]}\ind{\phi(\widehat{z}_i)\neq z_i},$ where $\Phi=\{\phi: \phi$ is a bijection from $[K]$ to $[K]\}$. We will show upper bounds for this misclustering rate $\ell(\widehat{z},z)$ that hold with high probability. For $1\leq j\leq K$, let $n_j = \sum_{i=1}^n\ind{z_i=j}$ be the sample size in each cluster $j$. Let $\Delta = \min_{j\neq j'}\|\theta_j-\theta_{j'}\|_2$ be the minimum separation distance between clusters, and $\Delta_m = \min_{j\neq j'}\|(\theta_j)_{T_m}-(\theta_{j'})_{T_m}\|_2$ be the separation distance within block $m$. Given a patch ordering $\pi:[M]\rightarrow [M]$, for $m\geq 2$, let $J_m^{(1)}(\pi) = (\cup_{j=1}^{m-1}I_{\pi(j)})\cap I_{\pi(m)}$ be the overlap of the $m$th block with prior blocks. We use $0<\gamma_m(\pi)<1$ to denote the signal proportion of the overlapping part $J_m^{(1)}(\pi)$ in the whole block $I_{\pi(m)}$: $\gamma_m(\pi) = \frac{\sigma_r(X^*_{T_{\pi(m)},\,J_m^{(1)}(\pi)})}{\|X_{T_{\pi(m)},\,I_{\pi(m)}}^*\|}$, and $\gamma(\pi) = \min_{2\leq m\leq K}\gamma_m(\pi)$. When the original patch ordering is in use, e.g., $\pi$ is the identity mapping, we abbreviate these notations to $J_m^{(1)}$, $\gamma_m$, and $\gamma$. Let $\kappa_m$ be the condition number of the $m$th block. Consider the population block SVD $X_{T_m,\,I_m}^* = U_m^*\Lambda_m^*V_m^{*\top}$, and define the matching matrix for block $m$ as $B_m^* = \Lambda^*U_{T_m,:}^*U_m^*\Lambda_m^{*-1}$, which transforms the whole matrix SVD into the block SVD results: $V_m^* = V_{I_m,\,:}^* B_m^*$. We also define $\beta_{\max}$ and $\beta_{\min}$ to quantify the heterogeneity across blocks: $\beta_{\max} = \max_{1\leq m,\,m'\leq M}\|B_m^{*-1}B_{m'}^*\|$, $\beta_{\min} = \min_{1\leq m,\,m'\leq M}\sigma_r(B_m^{*-1}B_{m'}^*)$. Another measure of the heterogeneity of signals across blocks is defined as $\alpha_{m,\,m'} = \frac{\|X^*_{T_m,I_m}\|}{\sigma_r(X_{T_{m'},I_{m'}}^*)}$, and $\alpha_{\max} = \max_{m,\,m'}\alpha_{m,\,m'}$. The following regularity and block-wise SNR conditions are needed for our general theory with the original patch ordering. 
	\begin{assump}[Regularity conditions on centroids]\label{assump:centroid_norm_bnd}
		The clustering centroids $\theta_1,\dots,\theta_K\in \mathbb{R}^{p}$ satisfies $\max_j\|\theta_j\|_2\leq C\Delta = C\min_{j\neq j'}\|\theta_j-\theta_{j'}\|_2$ for some universal constant $C>0$
	\end{assump}
	Assumption \ref{assump:centroid_norm_bnd} is a regularity condition on clustering centroids, which also appears in prior works on spectral clustering \cite{abbe2022l}. It is a mild condition as long as the data matrix is appropriately centered as a preprocessing step.
	
	\begin{assump}[SNR condition]\label{assump:block_eigmin}
		For some universal constant $C>0$, the SNR of each block $m$ satisfies
		$$
		\frac{\sigma_r(X^*_{T_m,I_m})}{\|E_{T_m,I_m}\|} \geq \frac{C\alpha_{\max}(\beta_{\max}+1)}{\beta_{\min}^2}\prod_{l=2}^M\left(\frac{1.1}{\gamma_l}+1\right)\sqrt{\frac{rK\max_j n_j}{\min_j n_j}}.
		$$
	\end{assump} 
	Assumption \ref{assump:block_eigmin} suggests that the clustering separation signal in each block well exceeds the noise level. When the number of blocks is larger (i.e. larger $M$), a stronger SNR condition is also required. This is because our Cluster Quilting algorithm (Alg. \ref{alg:bsvdcq}) requires a matching step that sequentially transforms the SVD results of each block to match its previous block. It takes $M-1$ sequential transformations to obtain the full singular space estimate, each of which leads to an additional error factor $\frac{1.1}{\gamma_l}+1$ depending on the signal ratio between each block and its overlapping part. The sequential error propagation leads to the factor $\prod_{l=2}^M\left(\frac{1.1}{\gamma_l}+1\right)$ in Assumption \ref{assump:block_eigmin}. 
	
	\begin{theorem}\label{thm:main}
		Consider the model set-up described in Section \ref{sec:model} and Alg. \ref{alg:bsvdcq}. Suppose that Assumptions \ref{assump:centroid_norm_bnd}, \ref{assump:block_eigmin} hold.
		Then the outputs of Alg. \ref{alg:bsvdcq} satisfy
		\begin{equation}\label{eq:cluster_err}
		\begin{split}
		\ell(\widehat{z},z)\leq &
		B^2(X^*, \mathcal{O})\frac{rK\max_j n_j}{n}\max_m\frac{\|E_{T_m,I_m}\|^2}{\sigma_r^2(X^*_{T_m,I_m})},\\
		\min_{\phi\in \Phi}\max_{1\leq j\leq K}\|\widehat{U}\widehat{c}_j - \theta_{\phi(j)}\|_2 \leq &B(X^*, \mathcal{O})\sqrt{\frac{rK\max_j n_j}{\min_j n_j}}\max_m\frac{\|E_{T_m,I_m}\|}{\sigma_r(X^*_{T_m,I_m})}\max_j\|\theta_j\|_2.
		\end{split}
		\end{equation}
		where $B(X^*, \mathcal{O}) = \frac{C\alpha_{\max}(\beta_{\max}+1)}{\beta_{\min}^2}\prod_{l=2}^M\left(\frac{1.1}{\gamma_l}+1\right)$ depends on the ground truth matrix $X^*$ and the measurement pattern $\mathcal{O}$. 
	\end{theorem}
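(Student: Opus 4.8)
My plan is to separate the argument into a deterministic perturbation analysis of the reconstructed coordinate matrix $\widehat\Lambda\widehat V^\top$ against the population coordinates $\Lambda^*V^{*\top}$, followed by a standard k-means perturbation bound. Throughout I will abbreviate the worst block-wise relative noise level as $\eta=\max_m\|E_{T_m,I_m}\|/\sigma_r(X^*_{T_m,I_m})$, and I will repeatedly use the fact that Assumption~\ref{assump:block_eigmin} is precisely the statement that $B(X^*,\mathcal O)\sqrt{rK\max_jn_j/\min_jn_j}\,\eta$ is at most a small absolute constant; this is what keeps every first-order (Weyl, Wedin, least-squares sensitivity) bound below in its regime of validity and lets the induction over patches close.

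The first step I would take is to verify exact consistency in the noiseless regime, which pins down the population targets. When $E=0$ the block SVD returns $\widehat U_m\widehat\Lambda_m\widehat V_m^\top=X^*_{T_m,I_m}$, and since the matching identity $V_m^*=V^*_{I_m,:}B_m^*$ holds, an induction on $m$ shows that each least-squares merge recovers $G_m=B_m^{*-1}T_{m-1}$ for a common accumulated transform $T_{m-1}$, whence $\widetilde V_{I_m,:}=V^*_{I_m,:}T_{m-1}$ and $\widetilde H\widetilde V^\top=X^*$ block by block. Hence the post-processing SVD targets $U^*,\Lambda^*,V^*$; moreover, because $U^*$ spans the column space of $\Theta^*$, the columns of $\Lambda^*V^{*\top}$ are exactly the reduced centers $U^{*\top}\theta_{z_i}$, which remain distinct with separation $\Delta$. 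This fixes both the ideal reconstruction and the clustering geometry that the noisy analysis perturbs around.

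The core of the proof is then to propagate the noise through this identity. First I would use Weyl's inequality and Wedin's $\sin\Theta$ theorem block-wise to control $\|\widehat V_m-V_m^*R_m\|$ and the singular-value gaps by $O(\|E_{T_m,I_m}\|/\sigma_r(X^*_{T_m,I_m}))$ for suitable orthogonal $R_m$. The delicate part is the stability of the merge: $G_m$ applies the pseudo-inverse of the Gram matrix of $(\widehat V_m)_{J_m^{(2)},:}$, whose smallest singular value I must lower bound through the overlap signal proportion $\gamma_m$, so that the pseudo-inverse inflates the incoming error by a factor of order $\tfrac{1.1}{\gamma_m}+1$ while $T_m$ and $T_m^{-1}$ stay controlled by $\beta_{\max},\beta_{\min}$. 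Carrying this as an induction over $m=2,\dots,M$ produces the product $\prod_{l=2}^M(\tfrac{1.1}{\gamma_l}+1)$, and folding in the cross-block scale ratios through $\alpha_{\max}$ gives a spectral bound of the form $\|\widetilde H\widetilde V^\top-X^*\|\lesssim B(X^*,\mathcal O)\,\eta\,\|X^*\|$. A final Wedin step applied to the post-processing SVD turns this into $\|\widehat\Lambda\widehat V^\top-O\,\Lambda^*V^{*\top}\|_F\lesssim B(X^*,\mathcal O)\sqrt r\,\eta\,\|X^*\|$ for some orthogonal $O$, and the bound $\|X^*\|\lesssim\sqrt{K\max_jn_j}\,\Delta$ obtained from $X^*X^{*\top}=\Theta^*\mathrm{diag}(n_j)\Theta^{*\top}$ together with Assumption~\ref{assump:centroid_norm_bnd} rewrites the right-hand side as $B(X^*,\mathcal O)\sqrt{rK\max_jn_j}\,\eta\,\Delta$.

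It then remains to convert this coordinate error into the two stated guarantees, which is the routine final step. Applying the standard k-means perturbation lemma \citep{lei2015consistency,loffler2021optimality}, each misclustered column contributes at least a constant multiple of $\Delta^2$ to $\|\widehat\Lambda\widehat V^\top-O\Lambda^*V^{*\top}\|_F^2$, so the number of errors is $O(\|\widehat\Lambda\widehat V^\top-O\Lambda^*V^{*\top}\|_F^2/\Delta^2)$; dividing by $n$ and substituting the bound above yields $\ell(\widehat z,z)\lesssim B^2(X^*,\mathcal O)\,\tfrac{rK\max_jn_j}{n}\,\eta^2$. For the centroid guarantee I would write $\theta_{\phi(j)}=U^*\mu^*_{\phi(j)}$ and combine the k-means center consistency (which carries an extra $1/\sqrt{\min_jn_j}$ from averaging over the smallest cluster) with the subspace error $\|\widehat U-U^*O^\top\|$ evaluated on a vector of size $\max_j\|\theta_j\|$, giving the stated $\sqrt{rK\max_jn_j/\min_jn_j}$ scaling. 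I expect the main obstacle to be the inductive control of the sequential merge in the third paragraph: keeping the least-squares inversion well-conditioned via $\gamma_m$ and preventing the accumulated transform from degenerating across all $M-1$ steps is exactly what generates the factor $\prod_{l=2}^M(\tfrac{1.1}{\gamma_l}+1)$ and is the essential departure from single-matrix spectral clustering analyses, as well as the source of the improvement over the matrix-completion bound of \citep{bishop2014deterministic}.
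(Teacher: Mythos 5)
Your proposal follows essentially the same route as the paper's proof: block-wise Davis--Kahan/Wedin bounds, stability of the least-squares merge controlled by lower-bounding the overlap Gram matrix through $\gamma_m$, an induction over patches that generates the factor $\prod_{l=2}^M(1.1\gamma_l^{-1}+1)$, the bound $\|X^*\|\lesssim \sqrt{K\max_j n_j}\,\Delta$ from Assumption~\ref{assump:centroid_norm_bnd}, and a final k-means perturbation counting argument. The only cosmetic differences are your noiseless-consistency warm-up and your use of Wedin on the post-processing SVD where the paper simply exploits that $\widetilde{H}\widetilde{V}^\top$ and $X^*$ are both rank $\le r$ to pass from spectral to Frobenius norm and then runs the k-means argument in the ambient space.
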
 Theorem \ref{thm:main} suggests that as long as the SNR within each block $\frac{\sigma_r(X^*_{T_m,I_m})}{\|E_{T_m,I_m}\|}$ is sufficiently large, accuracy of both clustering centroid estimation are well controlled by a function of the SNR, rank $r$, number of clusters $K$, and number of blocks $M$.
	\begin{remark}[Consistent clustering]
		Recall that parameters $\alpha_{\max},\,\beta_{\max}\geq 1$, $\beta_{\min}\leq 1$ are defined in the beginning of Section \ref{sec:theory}, and they characterize the heterogeneity of different blocks. When all blocks $X^*_{T_m,I_m}$ are well conditioned and the signal in different blocks are similar, we may safely assume $\frac{\alpha_{\max}(\beta_{\max}+1)}{\beta_{\min}^2}$ to be bounded. If $K\max_j n_j\leq Cn$ (different clusters are balanced) also holds, we know that the misclustering rate $\ell(\widehat{z},z)=o(1)$ as long as $\sigma_r(X^*_{T_m,I_m})\gg \sqrt{r}\prod_{l=2}^M\left(\frac{1.1}{\gamma_l}+1\right)\|E_{T_m,I_m}\|$. 
	\end{remark}
	The patchwise observational pattern plays a critical role in the difficulty of the problem. As illustrated in Assumption \ref{assump:block_eigmin}, Theorem \ref{thm:main}, when $\prod_{l=2}^M \left(\frac{1.1}{\gamma_l}+1\right)$ is larger, the problem gets harder as we need a stronger SNR condition and have a larger misclustering rate. Therefore, we ideally hope for larger $\gamma_l$'s, which are the fraction of the overlapping signal between the $l$th block with previous blocks, as a proportion of the whole signal in the $l$th block: $\gamma_l = \frac{\sigma_r(X^*_{T_l, I_l\cap(\cup_{k<l}I_k)})}{\|X^*_{T_l,I_l}\|}$. Different from prior works for PSD matrix or covariance completion \citep{bishop2014deterministic}, our theory hinges on a much smaller merging factor $(1.1\gamma_l^{-1}+1)$; we use spectral norm instead of Frobenious norm to characterize noise level; and we require no condition on the gap between non-zero singular values. This more refined result is based on a careful induction proof, and it motivates a data-driven, theory-motivated ordering method for the patches in Section \ref{sec:theory_datadriven}.

	In Theorem \ref{thm:main}, the signal strength is captured by the minimum singular value of the ground truth matrix within each block. In fact, we can also express the signal strength in terms of the separation distances between clusters, which might be more natural in the clustering context. In the following, we present a corollary that illustrates the effect of seperation distances on the clustering performance of Alg. \ref{alg:bsvdcq}. 
	\begin{assump}[Separation condition]\label{assump:block_separation}
		Within each block $m$, the minimum separation between clusters $\Delta_m = \min_{j\neq j'}\|\theta_{T_m,j}-\theta_{T_m,j'}\|_2$ satisfies
		$$
		\frac{\sqrt{\min_j n_{j,m}}\Delta_m}{\|E_{T_m,I_m}\|}\geq \frac{C\alpha_{\max}(\beta_{\max}+1)\kappa_r(\Theta^*_{T_m,:})}{\beta_{\min}^2}\prod_{l=2}^M\left(\frac{1.1}{\gamma_l}+1\right)\sqrt{\frac{rK\max_j n_j}{\min_j n_j}},
		$$
		for some universal constant $C>0$, where $\kappa_r(\Theta^*_{T_m,:}) = \frac{\sigma_1(\Theta^*_{T_m,:})}{\sigma_r(\Theta^*_{T_m,:})}$, the condition number of the $m$th block's centroid matrix; $n_{j,m} = \sum_{i\in I_m}\ind{z_i=j}$ is the number of samples within $m$th block and $j$th cluster.
	\end{assump}
	Here in Assumption \ref{assump:block_separation}, the separation SNR $\frac{\sqrt{\min_j n_{j,m}}\Delta_m}{\|E_{T_m,I_m}\|}$ can be approximately viewed as the ratio between the $\ell_2$ separation distance amongst clusters and the noise level for each sample within block $m$. In particular, in the balanced setting ($\max_{1\leq j\leq K}n_{j,m}\leq C\min_{1\leq j\leq K} n_{j,m}$), $$\frac{\|E_{T_m,I_m}\|}{\sqrt{\min_j n_{j,m}}}\leq C\frac{\|E_{T_m,I_m}\|}{\sqrt{n_m}}\leq C\sqrt{\frac{1}{n_m}\sum_{i\in I_m}\|E_{T_m,i}\|_2^2},$$ which is the average $\ell_2$ norm of the noise associated with each sample within block $m$.

	\begin{cor}[SNR in the form of separation distance]\label{cor:main_delta}
		Consider the model set-up described in Section \ref{sec:model} and Alg. \ref{alg:bsvdcq}. Suppose that Assumptions \ref{assump:centroid_norm_bnd} and \ref{assump:block_separation} hold. Then the outputs of Alg. \ref{alg:bsvdcq} satisfy
		\begin{equation}\label{eq:cluster_err_separation}
		\begin{split}
		\ell(\widehat{z},z)\leq &
		2B^2(X^*, \mathcal{O})\frac{rK\max_j n_j}{n}\max_m\frac{\kappa_r^2(\Theta^*_{T_m,:})\|E_{T_m,I_m}\|^2}{\min_{1\leq j\leq K} n_{j,m}\Delta_m^2},
		\end{split}
		\end{equation}
	\end{cor}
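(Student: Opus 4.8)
The plan is to derive Corollary~\ref{cor:main_delta} directly from the misclustering bound of Theorem~\ref{thm:main} by lower bounding the block signal strength $\sigma_r(X^*_{T_m,I_m})$ in terms of the within-block separation $\Delta_m$; once this bound is in hand the corollary is a substitution. Two elementary facts drive the argument. First I would exploit the factorization $X^*_{T_m,I_m} = \Theta^*_{T_m,:}(F^*_{I_m,:})^\top$ coming from \eqref{eq:model}. Since each row of $F^*_{I_m,:}$ is a single cluster indicator, its columns are orthogonal with squared norms $n_{1,m},\dots,n_{K,m}$, so $F^*_{I_m,:} = Q_m D_m$ with $Q_m$ having orthonormal columns and $D_m = \mathrm{diag}(\sqrt{n_{1,m}},\dots,\sqrt{n_{K,m}})$. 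Because $(MQ_m^\top)(MQ_m^\top)^\top = MM^\top$ for any $M$, right-multiplication by the orthonormal-row matrix $Q_m^\top$ preserves all singular values, giving $\sigma_r(X^*_{T_m,I_m}) = \sigma_r(\Theta^*_{T_m,:}D_m)$. A Weyl-type product inequality (from $D_mD_m^\top \succeq \sigma_{\min}(D_m)^2 I$ and eigenvalue monotonicity) then yields
$$\sigma_r(X^*_{T_m,I_m}) \geq \sigma_{\min}(D_m)\,\sigma_r(\Theta^*_{T_m,:}) = \sqrt{\min_j n_{j,m}}\;\sigma_r(\Theta^*_{T_m,:}).$$

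Second, I would relate $\sigma_r(\Theta^*_{T_m,:})$ to $\Delta_m$ through the condition number. Writing each centroid gap as $\theta_{T_m,j}-\theta_{T_m,j'} = \Theta^*_{T_m,:}(e_j - e_{j'})$ and using $\|e_j-e_{j'}\|_2 = \sqrt{2}$ gives, for every $j \neq j'$, $\|\theta_{T_m,j}-\theta_{T_m,j'}\|_2 \leq \sqrt{2}\,\|\Theta^*_{T_m,:}\| = \sqrt{2}\,\sigma_1(\Theta^*_{T_m,:})$, hence $\Delta_m \leq \sqrt{2}\,\sigma_1(\Theta^*_{T_m,:})$. Combining with $\sigma_1(\Theta^*_{T_m,:}) = \kappa_r(\Theta^*_{T_m,:})\,\sigma_r(\Theta^*_{T_m,:})$ produces $\sigma_r(\Theta^*_{T_m,:}) \geq \Delta_m/(\sqrt{2}\,\kappa_r(\Theta^*_{T_m,:}))$, and therefore
$$\sigma_r^2(X^*_{T_m,I_m}) \geq \frac{\min_j n_{j,m}\,\Delta_m^2}{2\,\kappa_r^2(\Theta^*_{T_m,:})}.$$
The factor $2$ appearing in \eqref{eq:cluster_err_separation} is exactly $\|e_j-e_{j'}\|_2^2 = 2$.

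With these in hand, I would first check that Assumption~\ref{assump:block_separation} implies Assumption~\ref{assump:block_eigmin}, so that Theorem~\ref{thm:main} applies: dividing the displayed lower bound on $\sigma_r(X^*_{T_m,I_m})$ by $\|E_{T_m,I_m}\|$ and inserting the separation-SNR hypothesis, the two factors of $\kappa_r(\Theta^*_{T_m,:})$ cancel and the block-SNR condition of Assumption~\ref{assump:block_eigmin} follows with the universal constant rescaled by $1/\sqrt{2}$. I would then substitute the reciprocal bound $\sigma_r^{-2}(X^*_{T_m,I_m}) \leq 2\kappa_r^2(\Theta^*_{T_m,:})/(\min_j n_{j,m}\Delta_m^2)$ termwise into \eqref{eq:cluster_err}, turning the factor $\max_m \|E_{T_m,I_m}\|^2/\sigma_r^2(X^*_{T_m,I_m})$ into $\max_m 2\kappa_r^2(\Theta^*_{T_m,:})\|E_{T_m,I_m}\|^2/(\min_j n_{j,m}\Delta_m^2)$ and yielding \eqref{eq:cluster_err_separation} verbatim.

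I do not anticipate a substantive obstacle, since the corollary is a reparametrization rather than a new estimate; the points requiring care are bookkeeping ones. The cleanest nontrivial step is the singular-value factorization bound, where one must use the orthogonality of the indicator columns of $F^*_{I_m,:}$ together with the product inequality $\sigma_r(AB) \geq \sigma_r(A)\,\sigma_{\min}(B)$ (valid because $D_m$ is square and invertible), rather than a false variant such as $\sigma_r(AB)\geq\sigma_r(A)\sigma_r(B)$. I would also note the implicit nondegeneracy requirement $\min_j n_{j,m} > 0$ and $\mathrm{rank}(\Theta^*_{T_m,:}) = r$, so that $D_m$ is invertible and $\kappa_r(\Theta^*_{T_m,:})$ is finite; these hold automatically whenever $\sigma_r(X^*_{T_m,I_m})>0$, which is already needed for the denominators in Theorem~\ref{thm:main} to be meaningful.
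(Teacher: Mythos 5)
Your proposal is correct and follows essentially the same route as the paper: lower bound $\sigma_r(X^*_{T_m,I_m})$ by $\sigma_{\min}(F^*_{I_m,:})\,\sigma_r(\Theta^*_{T_m,:}) \geq \sqrt{\min_j n_{j,m}}\,\Delta_m/(\sqrt{2}\,\kappa_r(\Theta^*_{T_m,:}))$ using the orthogonality of the indicator columns and the test vector $\frac{1}{\sqrt{2}}(e_{j_1}-e_{j_2})$, observe that Assumption~\ref{assump:block_separation} then implies Assumption~\ref{assump:block_eigmin}, and substitute into \eqref{eq:cluster_err}. Your extra care with the $Q_mD_m$ factorization and the product inequality $\sigma_r(AB)\geq\sigma_r(A)\sigma_{\min}(B)$ only makes explicit what the paper uses implicitly.
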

	Corollary \ref{cor:main_delta} suggests that in a balanced ($\frac{K\max_jn_j}{n}\leq C$), well-conditioned setting with bounded heterogeneity across blocks, as long as the separation distance within each block satisfies $\Delta_m \gg \sqrt{r}\prod_{l=2}^M\left(\frac{1.1}{\gamma_l}+1\right)$ $\frac{\|E_{T_m,I_m}\|}{\min_{1\leq j\leq K}\sqrt{n_{j,m}}}$, our cluster quilting algorithm achieves consistent clustering with misclustering rate $\ell(\widehat{z},z)=o(1)$. We will provide more comparisons with prior spectral clustering theory in Section \ref{sec:theory_subGauss}, under sub-Gaussian noise assumptions.
 
	\subsection{Special Case: Sub-Gaussian Noise and Dependent Features}\label{sec:theory_subGauss}
	Theorem \ref{thm:main} is a general deterministic result that holds for all noise types. To make it more concrete and comparable with prior works, here we consider an example of mean-zero sub-Gaussian noise. Specifically, suppose that there exist $n'\geq n,\,p'\geq p$, matrices $A_r\in \bR^{n\times n'}$ and $A_c\in \bR^{p\times p'}$, such that 
	\begin{equation}\label{eq:noise_distr}
	E=A_rZA_c^\top,\quad Z\in \bR^{n'\times p'},\quad Z_{j,k} \overset{\text{ind.}}{\sim}\text{sub-Gaussian}(\sigma)\text{ with zero mean}.
	\end{equation} 
	A special case of \eqref{eq:noise_distr} is when $n'=n$, $p'=p$, and $A_r$, $A_c$ are identity matrices; this is equivalent to write $E_{j,k}\overset{\text{ind.}}{\sim}\text{sub-Gaussian}(\sigma)$. By considering general matrices $A_r$ and $A_c$, we also allow linear dependency across features and samples.
	
	In the following, we present the consequence of Theorem \ref{thm:main} under the noise model \eqref{eq:noise_distr}; For presentation simplicity, we consider the following regularity assumption. 
	\begin{assump}[Bounded patch heterogeneity, condition number, balanced clusters]\label{assump:BndHeteroBalanced}
		\begin{equation}\label{eq:BndHeteroBalanced}
		\begin{split}
		&\alpha_{\max},\beta_{\max},\beta_{\min}^{-1}, \frac{\max_{1\leq j\leq K}n_j}{\min_{1\leq j\leq K}n_j}\leq C,\\
		&\frac{\max_{1\leq j\leq K}n_{j,m}}{\min_{1\leq j\leq K}n_{j,m}}, \kappa_{r}(\Theta^*_{T_m,:})\leq C,\quad 1\leq m\leq M.
		\end{split}
		\end{equation} 
	\end{assump}
	Here, $\alpha_{\max},\beta_{\max},\beta_{\min}^{-1}\geq 1$ quantify the heterogeneity of the signal levels in different blocks; $\frac{\max_jn_j}{\min_j n_j}$, $\frac{\max_j n_{j,m}}{\max_jn_{j,m}}$ quantify how unbalanced that different clusters are; $\kappa_{r}(\Theta^*_{T_m,:})$ gives the condition number of the centroid matrix within each block.
	\begin{theorem}\label{thm:main2}
		Consider the model set-up described in Section \ref{sec:model}. Suppose that the noise $E$ satisfies \eqref{eq:noise_distr} and Assumptions \ref{assump:centroid_norm_bnd}, \ref{assump:BndHeteroBalanced} hold.
		If the minimum separation distance within each block $1\leq m\leq M$ satisfies
		\begin{equation}\label{eq:SNR_subGaussian}
		\Delta_m\geq C\sigma\|A_r\|\|A_c\|\sqrt{r}K\prod_{l=2}^M\left(\frac{1.1}{\gamma_l}+1\right)\left(1+\sqrt{\frac{p_m}{n_m}}\right),
		\end{equation} 
		then with probability at least $1-C\sum_{m=1}^M\exp\{-c(n_m+p_m)\}$, the misclustering rate is bounded as follows:
		$\ell(\widehat{z},z)\leq C\sigma^2\|A_r\|^2\|A_c\|^2rK\prod_{l=2}^M\left(\frac{1.1}{\gamma_l}+1\right)^{2}\max_m\left(1+\frac{p_m}{n_m}\right)\Delta_m^{-2}.$
	\end{theorem}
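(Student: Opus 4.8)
The plan is to read Theorem \ref{thm:main2} as the stochastic specialization of the deterministic Corollary \ref{cor:main_delta}: the only random quantity on the right-hand side of \eqref{eq:cluster_err_separation} is the block noise operator norm $\|E_{T_m,I_m}\|$, so the whole argument reduces to (i) producing a high-probability upper bound on $\max_m\|E_{T_m,I_m}\|$ under the noise model \eqref{eq:noise_distr}, and (ii) substituting that bound into \eqref{eq:cluster_err_separation} while collapsing the heterogeneity, conditioning, and balance parameters to constants via Assumption \ref{assump:BndHeteroBalanced}. I would first establish the concentration bound, then verify that the separation hypothesis \eqref{eq:SNR_subGaussian} suffices to activate the deterministic separation condition of Assumption \ref{assump:block_separation} on the resulting good event, and finally assemble the misclustering rate. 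Only the misclustering bound in \eqref{eq:cluster_err_separation} is needed, not the centroid bound.

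The central probabilistic step is to show that, for each $m$,
$$\|E_{T_m,I_m}\| \leq C\sigma\|A_r\|\|A_c\|\left(\sqrt{n_m}+\sqrt{p_m}\right)$$
with probability at least $1-2\exp\{-c(n_m+p_m)\}$. Writing the noise submatrix as $E_{T_m,I_m}=\widetilde{A}_c Z\widetilde{A}_r^\top$, where $\widetilde{A}_c$ and $\widetilde{A}_r$ are the appropriate row restrictions of the feature- and sample-mixing matrices (so that $\|\widetilde{A}_c\|\leq\|A_c\|$ and $\|\widetilde{A}_r\|\leq\|A_r\|$; only these norms enter, so the exact orientation of the factorization is immaterial), I would expand the operator norm as a supremum of the bilinear form $u^\top\widetilde{A}_c Z\widetilde{A}_r^\top v=\langle Z,\,(\widetilde{A}_c^\top u)(\widetilde{A}_r^\top v)^\top\rangle$ over unit vectors $u\in\mathbb{R}^{p_m}$, $v\in\mathbb{R}^{n_m}$. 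For fixed $u,v$ this is a linear combination of the independent sub-Gaussian entries of $Z$ with variance proxy at most $\sigma^2\|\widetilde{A}_c^\top u\|^2\|\widetilde{A}_r^\top v\|^2\leq\sigma^2\|A_c\|^2\|A_r\|^2$, hence sub-Gaussian; a standard $\epsilon$-net argument over the unit spheres of $\mathbb{R}^{p_m}$ and $\mathbb{R}^{n_m}$ (covering numbers at most $9^{p_m}$ and $9^{n_m}$) together with the union bound yields the claimed scaling. The key point, and the main obstacle, is that the effective dimensions here are $p_m$ and $n_m$ rather than the ambient $p'$ and $n'$: this is exactly what restricting to the index sets $T_m$ and $I_m$ buys, and it is what lets the final rate depend on the local block sizes instead of the dimensions of $Z$. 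A union bound over $m=1,\dots,M$ then produces the overall failure probability $C\sum_{m=1}^M\exp\{-c(n_m+p_m)\}$.

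On the event that the above bound holds for all $m$, I would first check that \eqref{eq:SNR_subGaussian} implies Assumption \ref{assump:block_separation}. Under Assumption \ref{assump:BndHeteroBalanced} the factor $\alpha_{\max}(\beta_{\max}+1)\kappa_r(\Theta^*_{T_m,:})/\beta_{\min}^2$ and the cluster-balance ratio $\max_j n_j/\min_j n_j$ are all $O(1)$, while within-block balance gives $\min_j n_{j,m}\asymp n_m/K$; substituting the noise bound and rearranging turns the requirement of Assumption \ref{assump:block_separation} into precisely \eqref{eq:SNR_subGaussian} up to the universal constant, so Corollary \ref{cor:main_delta} applies. Finally I would substitute $\|E_{T_m,I_m}\|^2\leq C\sigma^2\|A_r\|^2\|A_c\|^2(n_m+p_m)$ and $\min_j n_{j,m}\asymp n_m/K$ into \eqref{eq:cluster_err_separation}, using $B^2(X^*,\mathcal{O})\leq C\prod_{l=2}^M(1.1/\gamma_l+1)^2$ and $rK\max_j n_j/n\asymp r$ in the balanced regime. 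This collapses the right-hand side to $C\sigma^2\|A_r\|^2\|A_c\|^2 rK\prod_{l=2}^M(1.1/\gamma_l+1)^2\max_m(1+p_m/n_m)\Delta_m^{-2}$, as claimed; the only bookkeeping is to track how the per-block factor $K(1+p_m/n_m)$ emerges from the ratio $(n_m+p_m)/\min_j n_{j,m}$.
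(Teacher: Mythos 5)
Your proposal follows essentially the same route as the paper: both reduce Theorem \ref{thm:main2} to the deterministic Corollary \ref{cor:main_delta} by first establishing the high-probability bound $\|E_{T_m,I_m}\|\leq C\sigma\|A_r\|\|A_c\|\left(\sqrt{p_m}+\sqrt{n_m}\right)$ with failure probability $2\exp\{-c(n_m+p_m)\}$ per block, then checking that \eqref{eq:SNR_subGaussian} together with Assumption \ref{assump:BndHeteroBalanced} activates Assumption \ref{assump:block_separation}, and finally substituting into \eqref{eq:cluster_err_separation}. The only difference is cosmetic: you prove the operator-norm concentration for $(A_r)_{T_m,:}Z(A_c)_{I_m,:}^\top$ directly via an $\varepsilon$-net argument (correctly identifying that the effective dimensions are $p_m$ and $n_m$), whereas the paper cites Lemma 8.2 of \cite{zhou2022optimal} for the same statement.
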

	Theorem \ref{thm:main2} suggests that we can obtain consistent clustering as long as $$\Delta_m\gg \sigma\|A_r\|\|A_c\|\sqrt{r}K^{\frac{3}{2}}\prod_{l=2}^M\left(\frac{1.1}{\gamma_l}+1\right)\left(1+\sqrt{\frac{p_m}{n_m}}\right),$$ where $\sigma\|A_r\|\|A_c\|$ represents the noise level; $\prod_{l=2}^M\left(\frac{1.1}{\gamma_l}+1\right)$ quantifies the effect of matching $M$ semi-overlapping blocks; $\left(1+\sqrt{\frac{p_m}{n_m}}\right)$ reflects the effect of dimension and sample size within each block. Similar to our discussion for Assumption \ref{assump:block_eigmin}, the exponential dependence on $M$ is due to the nature of the observational pattern: the first and last blocks are only connected through the $M-1$ sequential overlaps between consecutive blocks.
	\begin{remark}[Comparison with prior works]
		If the noise matrix $E$ has independent sub-Gaussian-$\sigma$ entries ($A_r=I_{n\times n},\,A_c = I_{p\times p}$) in our setting, then the minimum separation condition for consistent clustering ($\ell(\widehat{z}, z) = o(1)$) would be $\Delta_m\gg \sigma\sqrt{r}K\prod_{l=2}^M\left(\frac{1.1}{\gamma_l}+1\right)\left(1+\sqrt{\frac{p_m}{n_m}}\right)$. As a comparison, in prior literature \citep{loffler2021optimality} on spectral clustering for Gaussian mixtures from the full data matrix, they require $\Delta\gg \sigma\sqrt{r}(1+\sqrt{\frac{p}{n}})$. The additional factor we require $K\prod_{l=2}^M\left(\frac{1.1}{\gamma_l}+1\right)$ is due to the cost of matching $M$ semi-overlapping blocks. One might note that our explicit bound for the misclustering rate only scales as $\Delta^{-2}$ instead of the exponential rate $\exp\{-c\Delta\}$ as in \cite{loffler2021optimality}. This is due to that our blockwise SVD and sequential matching steps in Alg. \ref{alg:bsvdcq} creates complicated dependency structures between our algorithm output and all samples, and hence the proof techniques for deriving exponential error bounds are not applicable. 
	\end{remark}
	One key quantity in Theorem \ref{thm:main2} is $\frac{p_m}{n}$, the ratio between number of features in each block and the sample size. In the high-dimensional setting where $\frac{p_m}{n_m}$ tends to infinity, Theorem \ref{thm:main2} can still be useful as long as the coordinate-wise separation distances are lower bounded by a constant; in this case, the separation distance $\Delta_m = \min_{k_1\neq k_2}\|(\theta^*_{k_1})_{T_m} - (\theta^*_{k_2})_{T_m}\|_2$ will also grow with the rate $\sqrt{p_m}$. On the other hand, when the centroid difference is sparse, a different clustering method that exploits such sparsity might be more appropriate. 
	\begin{remark}[Effect of dependency]
		We note that the effect of dependency is reflected by $\|A_c\|$ and $\|A_r\|$ in the separation condition and misclustering rate. Motivated by the neuronal functional data where firing activities are collected over time, let's consider the autoregressive process as an example. Suppose that the noise $E_{:,i}$ comes from an AR(1) process with coefficient $\rho\in (-1,1)$: $E_{k,i} = \rho E_{k-1,i} + \varepsilon_{k,i}$, where $\varepsilon_{k,i}$ and $E_{1,i}$ are independent $\sigma_{\varepsilon}$-sub-Gaussian, then we can let $A_c$ be the identity matrix, and $A_r$ be a lower triangular matrix with $(A_r)_{i,j}=\begin{cases}
		1,&i=j;\\\rho,&i=j+1
		\end{cases}.$
		In this setting, $\|A_c\|=1$, $\|A_r\|\leq 1+|\rho|$; a larger autoregressive coefficient $|\rho|$ leads to a stronger SNR condition, and a higher misclustering rate. 
	\end{remark}
 
	\subsection{Data-driven Selection of Patch Ordering}\label{sec:theory_datadriven}
	Recall that in our theoretical results in Section \ref{sec:theory}, the SNR conditions and final misclustering error hinge on one key factor: $\prod_{m=2}^M(1.1\gamma_m^{-1}(\pi)+1)$, where $\gamma_m(\pi) = \frac{\sigma_r(X^*_{T_{\pi(m)}, J_{m}^{(1)}(\pi)})}{\|X^*_{T_{\pi(m)},I_{\pi(m)}}\|}$ is the proportion of the overlapping signals in the $m$th patch. Let $\pi^* = \argmin_{\pi:[M]\rightarrow[M]}\prod_{m=2}^M(1.1\gamma_m^{-1}(\pi)+1)$ be the minimizer of this factor, i.e., the oracle patch ordering. Since each $\gamma_m(\pi)$ depends on the unknown ground truth matrix $X^*$, we propose to estimate it by its empirical counterpart $\hat{\gamma}_m(\pi) = \frac{\sigma_r(X_{T_{\pi(m)},J_m^{(1)}(\pi)})}{\|X_{T_{\pi(m)},I_{\pi(m)}}\|}$; and then find the ordering $\pi$ that minimizes $\prod_{m=2}^M(1.1\hat{\gamma}_m^{-1}(\pi)+1)$. This is equivalent to setting the score function in \eqref{eq:order_maximizer} as follows:
	\begin{equation}\label{eq:score}
	s(k, I) = \left(\frac{1.1\|X_{T_{k},I_{k}}\|}{\sigma_r(X_{T_{k},I_k\cap I})}+1\right)^{-1}.
	\end{equation}
	The following theorem gives the clustering guarantee with this data-driven selection of the patch ordering.
	\begin{theorem}\label{thm:datadriven}
		Suppose that the factor $\prod_{m=2}^M(1.1\gamma_m^{-1}+1)$ is substituted by $e^2\prod_{m=2}^M(1.1\gamma_m^{-1}(\pi^*) + 1)$ in Assumptions \ref{assump:block_eigmin}, \ref{assump:block_separation}, Theorems \ref{thm:main}, \ref{thm:main2}, and Corollary \ref{cor:main_delta}. Then Theorems \ref{thm:main}, \ref{thm:main2}, and Corollary \ref{cor:main_delta} still hold for Alg. \ref{alg:bsvdcq} applied on $\{X_{T_{\hat{\pi}(m)}, I_{\hat{\pi}(m)}}\}_{m=1}^M$, where the patch ordering $\hat{\pi}$ satisfies \eqref{eq:order_maximizer} with the score function \eqref{eq:score}. 
	\end{theorem}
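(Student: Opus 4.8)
The plan is to reduce the whole statement to a single deterministic comparison between the data-driven ordering $\hat\pi$ and the oracle ordering $\pi^*$. Write $P(\pi)=\prod_{m=2}^M(1.1\gamma_m^{-1}(\pi)+1)$ for the population factor and $\hat P(\pi)=\prod_{m=2}^M(1.1\hat\gamma_m^{-1}(\pi)+1)$ for its empirical counterpart. I claim it suffices to prove the multiplicative bound
\[
P(\hat\pi)\ \le\ e^2\,P(\pi^*).
\]
Indeed, the block SNR $\sigma_r(X^*_{T_m,I_m})/\|E_{T_m,I_m}\|$, the heterogeneity constants $\alpha_{\max},\beta_{\max},\beta_{\min}$, the global quantities $r,K,n_j$, and the high-probability event $1-C\sum_m\exp\{-c(n_m+p_m)\}$ controlling the block noise norms in Theorem \ref{thm:main2} are all intrinsic to the blocks and independent of the ordering; only the factor $\prod_m(1.1\gamma_m^{-1}(\cdot)+1)$ changes when the patches are permuted. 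Hence I may apply Theorem \ref{thm:main}, Corollary \ref{cor:main_delta}, and Theorem \ref{thm:main2} verbatim to the reordered (and feasible, since all overlaps are nonempty) patches $\{X_{T_{\hat\pi(m)},I_{\hat\pi(m)}}\}_{m=1}^M$, whose associated factor is $P(\hat\pi)$. The substituted hypotheses lower-bound the SNR by a constant multiple of $e^2P(\pi^*)$, which by the displayed bound dominates $P(\hat\pi)$; thus the substituted conditions imply the original conditions at the ordering $\hat\pi$, and the resulting misclustering and centroid bounds — carrying the factor $B(X^*,\mathcal{O})$ with $P(\hat\pi)$ — are in turn dominated by the substituted bounds with $e^2P(\pi^*)$.

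To prove $P(\hat\pi)\le e^2P(\pi^*)$, I would combine the defining optimality of $\hat\pi$ with a per-ordering comparison of $P$ and $\hat P$. By construction \eqref{eq:order_maximizer} with the score \eqref{eq:score} is exactly $\hat\pi=\argmin_\pi\hat P(\pi)$, so $\hat P(\hat\pi)\le\hat P(\pi^*)$. It then remains to establish the two one-sided comparisons $P(\hat\pi)\le e\,\hat P(\hat\pi)$ and $\hat P(\pi^*)\le e\,P(\pi^*)$, which chain into $P(\hat\pi)\le e\,\hat P(\hat\pi)\le e\,\hat P(\pi^*)\le e^2P(\pi^*)$. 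Both comparisons are per-block: for $J=J_m^{(1)}(\pi)$, $T=T_{\pi(m)}$, $I=I_{\pi(m)}$, Weyl's inequality gives $|\sigma_r(X_{T,J})-\sigma_r(X^*_{T,J})|\le\|E_{T,J}\|\le\|E_{T,I}\|$ and $\bigl|\,\|X_{T,I}\|-\|X^*_{T,I}\|\,\bigr|\le\|E_{T,I}\|$, and writing $\epsilon_m=\|E_{T_m,I_m}\|/\sigma_r(X^*_{T_m,I_m})$ a short calculation yields $1.1\hat\gamma_m^{-1}+1\le(1+\delta_m)(1.1\gamma_m^{-1}+1)$, and symmetrically in the other direction, with $\delta_m\lesssim\gamma_m^{-1}(\pi)\,\epsilon_m$. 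For the oracle ordering this is immediate: the substituted SNR condition forces $\epsilon_m$ to be smaller than a constant over $e^2P(\pi^*)\sqrt{rK\max_j n_j/\min_j n_j}$, while $\gamma_m^{-1}(\pi^*)\le P(\pi^*)/(1.1\cdot 2.1^{M-2})$ since each of the remaining $M-2$ factors $1.1\gamma_l^{-1}(\pi^*)+1$ exceeds $2.1$ (recall $\gamma_l<1$). Consequently $\gamma_m^{-1}(\pi^*)\epsilon_m$ decays exponentially in $M$, so $\sum_{m=2}^M\delta_m\le1$ after enlarging the universal constant, giving $\prod_m(1+\delta_m)\le e$ and hence $\hat P(\pi^*)\le e\,P(\pi^*)$.

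The delicate step, and the one I expect to be the main obstacle, is the reverse comparison $P(\hat\pi)\le e\,\hat P(\hat\pi)$: here $\gamma_m^{-1}(\hat\pi)$ is a population quantity evaluated along the random, data-chosen ordering, and a priori $\hat\pi$ might select a patch whose population overlap is nearly degenerate, $\gamma_m(\hat\pi)\approx0$, so that $\gamma_m^{-1}(\hat\pi)\,\epsilon_m$ is not directly controlled. The resolution is to exploit optimality rather than bounding $\gamma_m^{-1}(\hat\pi)$ directly: the comparison just proved gives $\hat P(\hat\pi)\le\hat P(\pi^*)\le e\,P(\pi^*)$, so every empirical factor obeys $1.1\hat\gamma_m^{-1}(\hat\pi)+1\le e\,P(\pi^*)$, whence the empirical overlap signal satisfies $\sigma_r(X_{T,J})\ge\|X_{T,I}\|/(e\,P(\pi^*))$ along $\hat\pi$. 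Combined with the smallness of $\epsilon_m$ this forces $\|E_{T,J}\|/\sigma_r(X_{T,J})$ to be small, and Weyl then transfers the lower bound to the population overlap $\sigma_r(X^*_{T,J})$, yielding $\delta_m\lesssim\epsilon_m\,e\,P(\pi^*)$ with $\sum_m\delta_m\le1$ for a suitable universal constant. This establishes $P(\hat\pi)\le e^2P(\pi^*)$ and completes the proof. The only genuinely fiddly part is verifying that the per-block constants accumulate to exactly $e$ per pass (so that the two passes produce $e^2$), which is arranged by enlarging the universal constant $C$ in the substituted conditions so that $\sum_{m=2}^M\delta_m\le1$ uniformly over $M\ge2$.
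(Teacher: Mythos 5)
Your proposal follows essentially the same route as the paper's proof: reduce to the multiplicative comparison $\prod_{m}(1.1\gamma_m^{-1}(\hat\pi)+1)\le e^2\prod_m(1.1\gamma_m^{-1}(\pi^*)+1)$, chain it through the optimality of $\hat\pi$ for the empirical score, pay one factor of $e$ per empirical-versus-population comparison via Weyl's inequality and the substituted SNR condition, and handle the delicate direction by using the optimality-derived lower bound on the empirical overlap $\sigma_r(X_{T_{\hat\pi(m)},J_m^{(1)}(\hat\pi)})$ to transfer, again via Weyl, to the population overlap along the data-chosen ordering. The paper organizes the per-block inflation as $(1+\tfrac{1}{M-1})^{M-1}\le e$ where you use $\prod_m(1+\delta_m)\le e^{\sum_m\delta_m}\le e$, but this is only a difference in bookkeeping, and you correctly identify the same key obstacle (a potentially degenerate population overlap along $\hat\pi$) and the same resolution.
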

	Theorem \ref{thm:datadriven} suggests that the data-driven selection of the patch ordering leads to almost the same performance as the oracle ordering; the slight constant factor inflation ($e^2$) is due to the finite sample approximation of the score function. The proof of Theorem \ref{thm:datadriven} is included in Section \ref{sec:datadrivenproof} of the Supplementary material, and builds on careful analysis of the accumulated finite sample errors across $M$ patches.
	\section{Simulation Studies} \label{sec:gsim}
	
	In this section, we study the performance of the proposed Cluster Quilting method in the mosaic patch data observation contexts using data simulated from a Gaussian mixture model. We compare our Cluster Quilting approach to several of the incomplete spectral clustering methods: IMSC-AGL \citep{wen2018incomplete}, IMG \citep{zhao2016incomplete}, DAIMC \citep{hu2019doubly}, and OPIMC \citep{hu2019onepass}. We also compare our procedure to the application of standard spectral clustering after imputing the missing portion of the data matrix using nuclear-norm regularized least squares \citep{chang2022lowrank}. The accuracy of each of the methods is quantified using the adjusted Rand index \citep{santos2009use} with respect to true underlying cluster labels in the generative models. In all simulation studies, we show results for oracle hyperparameter tuning. We tune other hyperparameters specific to each comparison method using a grid search and selection by best adjusted Rand Index score.

	\subsection{Data Generation Process} \label{sub:datagen}
	
	For the simulation studies below, we generate the underlying full data from Gaussian mixture models with low-rank cluster centroid matrices. We first create the true cluster labels by randomly grouping $n$ samples into $K$ clusters, encoded by matrix $F^*\in \bR^{n\times K}$ where 
$$F^*_{ik} = 
		\begin{cases} 
		1, & \text{sample } i \text{ belongs to cluster } k; \\
		0, & \text{otherwise}.
		\end{cases}$$ 

	To construct a low-rank cluster centroid matrix, we generate two rank $r$ matrices with $r < K$: a random orthogonal matrix $\boldsymbol{Z} \in \mathbb{R}^{p \times r}$ and a matrix $\boldsymbol{W} \in \mathbb{R}^{K\times r}$ whose entries are randomly selected from $\{-d,0,d\}$ for a constant $d$. For the latter matrix, we check that the rows are not linearly dependent and re-simulate new entries if this requirement is not met. The cluster centroid matrix $\Theta^* = (\theta_1,\cdots,\theta_K)^{\top}\in \bR^{K\times p}$ is formulated as the product of these two rank $r$ matrices, i.e. $\Theta^* = \boldsymbol{W}\boldsymbol{Z}^\top$. This generation ensures that $\Theta^*$ has rank $r$, with expected average separation distance $\mathbb{E}\|(\Theta^*)_{i,:} - (\Theta^*)_{j,:}\|_2^2=\frac{4}{3}rd^2$. Data are then simulated from a multivariate Gaussian distribution with mean matrix $B = F^*\Theta^* \in \mathbb{R}^{n \times p}$ and covariance matrix $I_p$.

	\subsection{Simulation Study} \label{sub:pbmvgmm}

	We now analyze the effectiveness of the Cluster Quilting procedure in the mosaic patch observation setting, where multi-view data is simulated. For this simulation study, instead of creating a single cluster centroid matrix as we do for the simulations in Section \ref{sub:bmgmm}, we generate separate cluster centroid matrices for each view in the data, applying the procedure outlined in Section \ref{sub:datagen} for each view. To create the missing observation pattern, we first randomly segment the rows of the full data matrix into $M$ blocks of observations. Following this, for each observation block, we randomly select $h$ views that are fully included in the set of observed entries while masking all other columns for the corresponding rows in the observation block. Cluster Quilting and the other incomplete spectral clustering methods are then applied to the masked data and compared based on how well they recover the true cluster labels. For this simulation study, we use default parameters of 840 observations partitioned into 3 underlying clusters, and 12 views of 50 features each for a total of 600 columns in the full data matrix. The cluster centroid matrix for each view is generated as a rank 2 matrix with $d = 7.5$. After data generation, we randomly divide the rows into 4 blocks of equal size. To create the mosaic patch observation structure, for each observation block, we randomly select 6 views to be part of the set of observed entries. We ensure in this step that the set of views selected for each observation block are unique, that each feature is part of the set of observed entries at least once, and that the graph of connected observation blocks with respect to overlapping views can be fully traversed. We analyze the performance of Clustering Quilting and other incomplete spectral clustering methods when varying the number of unmasked views per observation block, the number of observation blocks, the number of total features in the data, the number of underlying clusters, the distance between clusters, and the correlation between features.
	\begin{figure}[t]
		\begin{center}
			\begin{subfigure}[t]{0.37\linewidth}
				\centering
				\includegraphics[width=\linewidth]{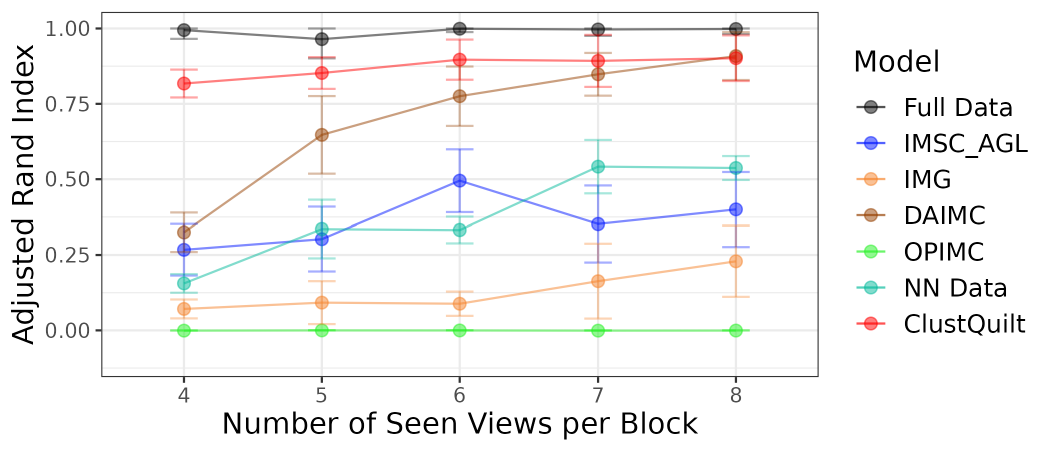}
				\caption{Views observed per block.}
				\label{fig:pmv_h}
			\end{subfigure} \hspace{0.5cm}%
			\begin{subfigure}[t]{0.37\linewidth}
				\centering
				\includegraphics[width=\linewidth]{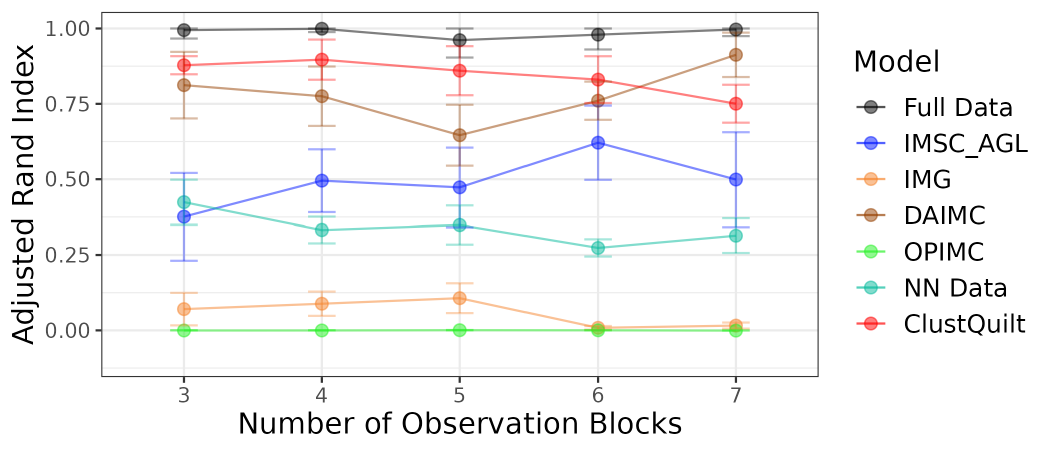}
				\caption{Number of blocks.}
				\label{fig:pmv_b}
			\end{subfigure}
			\begin{subfigure}[t]{0.37\linewidth}
				\centering
				\includegraphics[width=\linewidth]{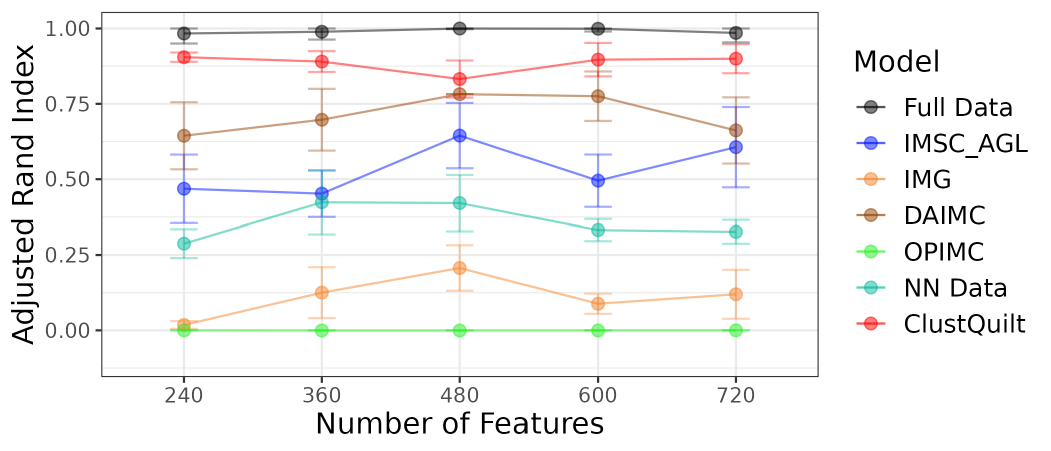}
				\caption{Number of total features.}
				\label{fig:pmv_p}
			\end{subfigure}\hspace{0.5cm}%
			\begin{subfigure}[t]{0.37\linewidth}
				\centering
				\includegraphics[width=\linewidth]{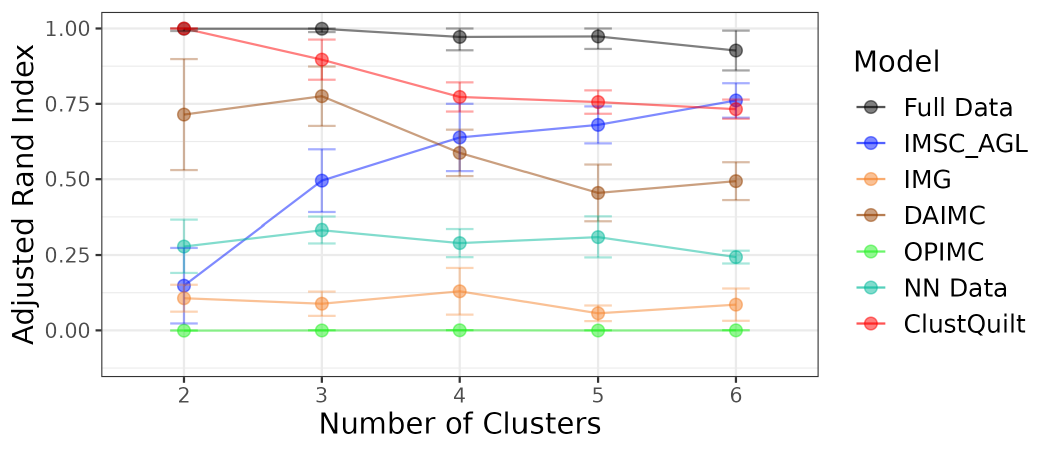}
				\caption{Number of true clusters.}
				\label{fig:pmv_k}
			\end{subfigure}
			\begin{subfigure}[t]{0.37\linewidth}
				\centering
				\includegraphics[width=\linewidth]{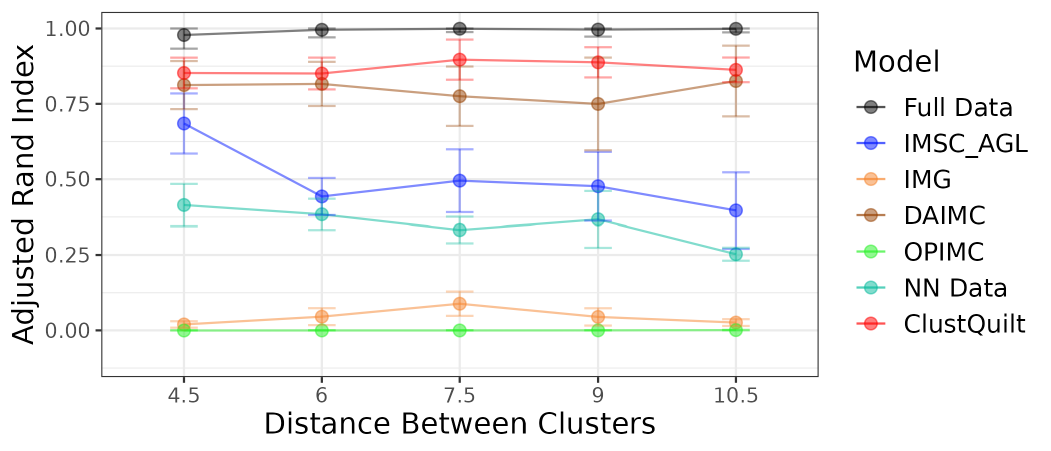}
				\caption{Cluster distance.}
				\label{fig:pmv_d}
			\end{subfigure}\hspace{0.5cm}%
			\begin{subfigure}[t]{0.37\linewidth}
				\centering
				\includegraphics[width=\linewidth]{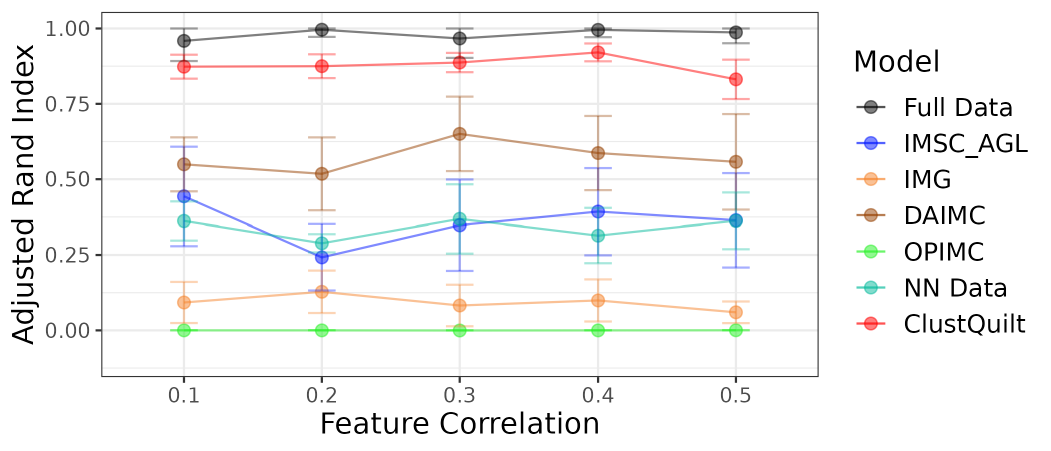}
				\caption{Feature autocorrelation.}
				\label{fig:pmv_rho}
			\end{subfigure}
		\end{center}
		\caption{Performance of Cluster Quilting and comparison cluster imputation methods with oracle tuning on Gaussian mixture model data in the mosaic patch setting for different simulation parameters, measured by adjusted Rand index.}
		\label{fig:pmv}
	\end{figure}
	We show the results of the mosaic patch observation simulation study in Figure \ref{fig:pmv}. The Cluster Quilting method outperforms the comparative incomplete spectral clustering methods in the vast majority of the simulation settings; we see that Cluster Quilting performs particularly better when the number of views seen per block (Figure \ref{fig:pmv_h}) or the number of features (Figure \ref{fig:pmv_p}) is relatively small. Additionally, the performance of the Cluster Quilting is better when the number of observation blocks is smaller but tends to degrade more than other methods with an increase in the number of blocks (Figure \ref{fig:pmv_b} ), which matches with the theoretical results of Section \ref{sec:theory}. We see that an increase in the number of true underlying clusters tends to decrease the performance of all methods consistently except for the IMSC-AGL algorithm (Figure \ref{fig:pmv_k}), while the performance of methods is relatively consistent with changing the distance between clusters (Figure \ref{fig:pmv_d}). Also, in Figure \ref{fig:pmv_rho}, we find that increasing the correlation between features increases the accuracy of cluster estimates.

	In the Supporting Information, we include several additional simulation studies. We first show results for a parallel empirical study as above in the sequential patch observation setting. For both mosaic and sequential patch observations, show results for data-driven hyperparameter tuning, for Gaussian mixture model data in the high-dimensional case, as well as when the data are generated from non-Gaussian copula mixture models. The results of these additional studies paint a similar picture, i.e. that the Cluster Quilting method performs similarly well, relative to the comparative incomplete spectral clustering methods, for recovering cluster labels.

	\section{Multiomics Case Study} \label{sec:real}

	Below, we investigate the performance of the Cluster Quilting method in the mosaic patch setting for identifying cancer types from multi-omics data. In this case studies, we take complete data sets and create synthetic patch missingness in the data matrix. As in Section \ref{sec:gsim}, we compare our Cluster Quilting approach to several other incomplete spectral clustering methods, namely IMSC-AGL \citep{wen2018incomplete}, IMG \citep{zhao2016incomplete}, DAIMC \citep{hu2019doubly}, and OPIMC \citep{hu2019onepass}. The data analyzed below come from The Cancer Genome Atlas (TCGA) \citep{weinstein2013cancer}, which contains genetic, transcriptomic, and proteomic data for a wide-ranging sample of tumors. For this case study, we amalgamate tumor data from 3 different primary sites, namely kidney, breast, and lung, and we consider features collected on each tumor from 5 different views: DNA methylation, miRNA expression, RNAseq gene expression, and reverse phase protein array. In accordance with the procedures in \cite{lock2013bayesian} and \cite{wang2021integrative}, we preprocess the data set before analysis by performing a square root transformation of the DNA methylation features and a log transformation of the miRNA features, as well as filtering out columns in the gene expression data with standard deviation less than 1.5. Additionally, we scale the data column-wise before creating the partially observed data set.
	
	\begin{figure}[t]
		\begin{center}
			\begin{subfigure}[t]{0.49\linewidth}
				\includegraphics[width=\linewidth]{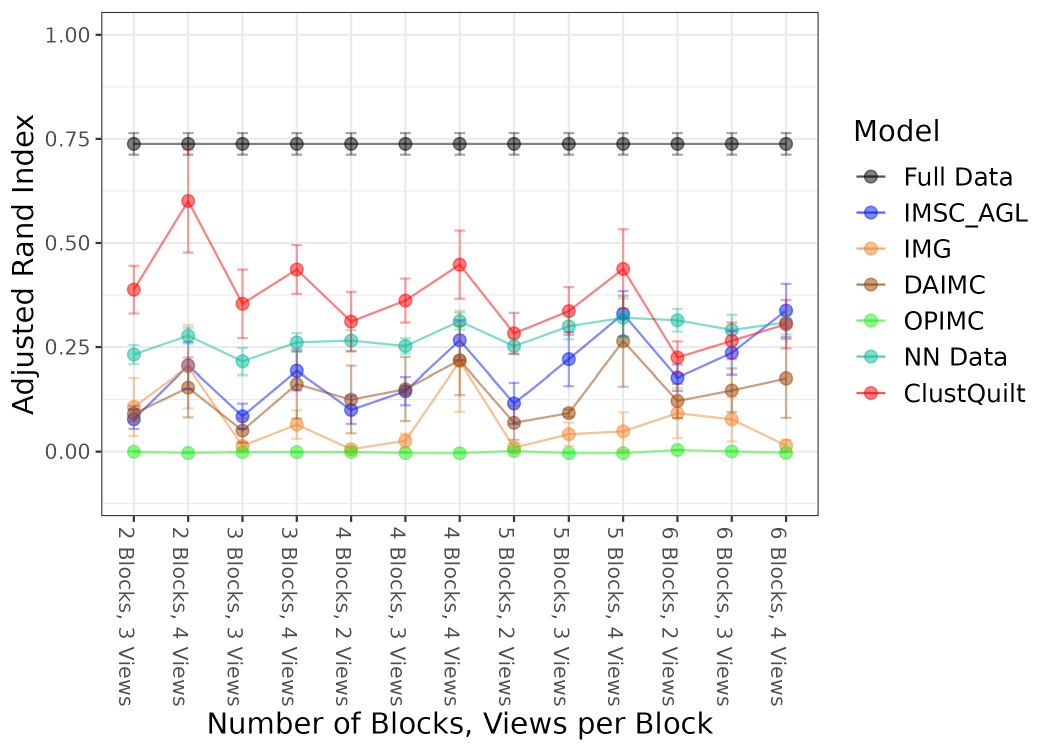}
				\caption{Oracle tuning.}
				\label{fig:tcga_o}
			\end{subfigure}%
			\begin{subfigure}[t]{0.49\linewidth}
				\includegraphics[width=\linewidth]{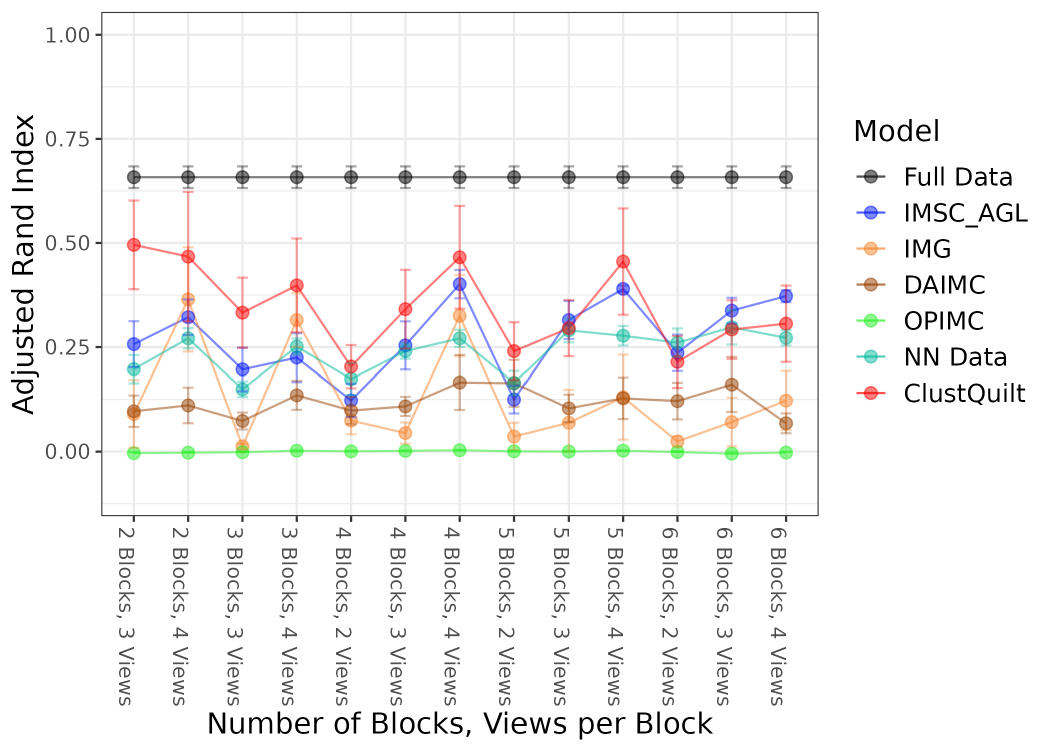}
				\caption{Data driven tuning.}
				\label{fig:tcga_dd}
			\end{subfigure}
		\end{center}
		\caption{Adjusted Rand index of cluster estimates from each incomplete spectral clustering methods with respect to true primary sites of tumors in the TCGA dataset. Results are shown across varying number of observation blocks and number of views unmasked per block.}
		\label{fig:tcga_full}
	\end{figure}

	From the complete data set, we create synthetic missing data with a mosaic patch observation pattern by randomly partitioning the rows of the data into distinct observation blocks, then selecting a subset of views for each block to include in the set of observed entries. We then apply Cluster Quilting and the comparison incomplete spectral clustering methods to the synthetic partially observed data set and measure the performance of each method using adjusted Rand index with respect to the true primary site labels. The efficacy of the Cluster Quilting and other incomplete spectral clustering methods are shown across varying numbers of observation blocks and views seen per block; for each combination of the two, we perform 50 replications and report the mean and standard deviation of the adjusted Rand index for each method. Additionally, we present results for both oracle and data-driven tuning of the number of clusters and rank of the data matrix; for the latter, we use the prediction validation method proposed in \cite{tibshirani2005cluster}. Here, we set the oracle rank to be the rank selected by data-driven tuning on spectral clustering when applied to the fully observed data.

	The results of our study are presented in Figure \ref{fig:tcga_full}, with oracle hyperparameter tuning shown in Figure \ref{fig:tcga_o} and data-driven tuning in Figure \ref{fig:tcga_dd}. Across the different numbers of views observed per patch, we observe that the Cluster Quilting algorithm does best when the number of observation blocks is relatively small, while several methods, namely the likelihood imputation and IMSC-AGL methods, outperform Cluster Quilting, when the number of observation patches in the data is larger. The degredation in accuracy of the Cluster Quilting method aligns with the theoretical results presented in Section \ref{sec:theory}. However, even when the number of blocks is larger, the Cluster Quilting method still achieves close to the highest accuracy out of all methods. 
	
	\begin{figure}[tbhp]
		\begin{center}
			\includegraphics[width=0.55\linewidth]{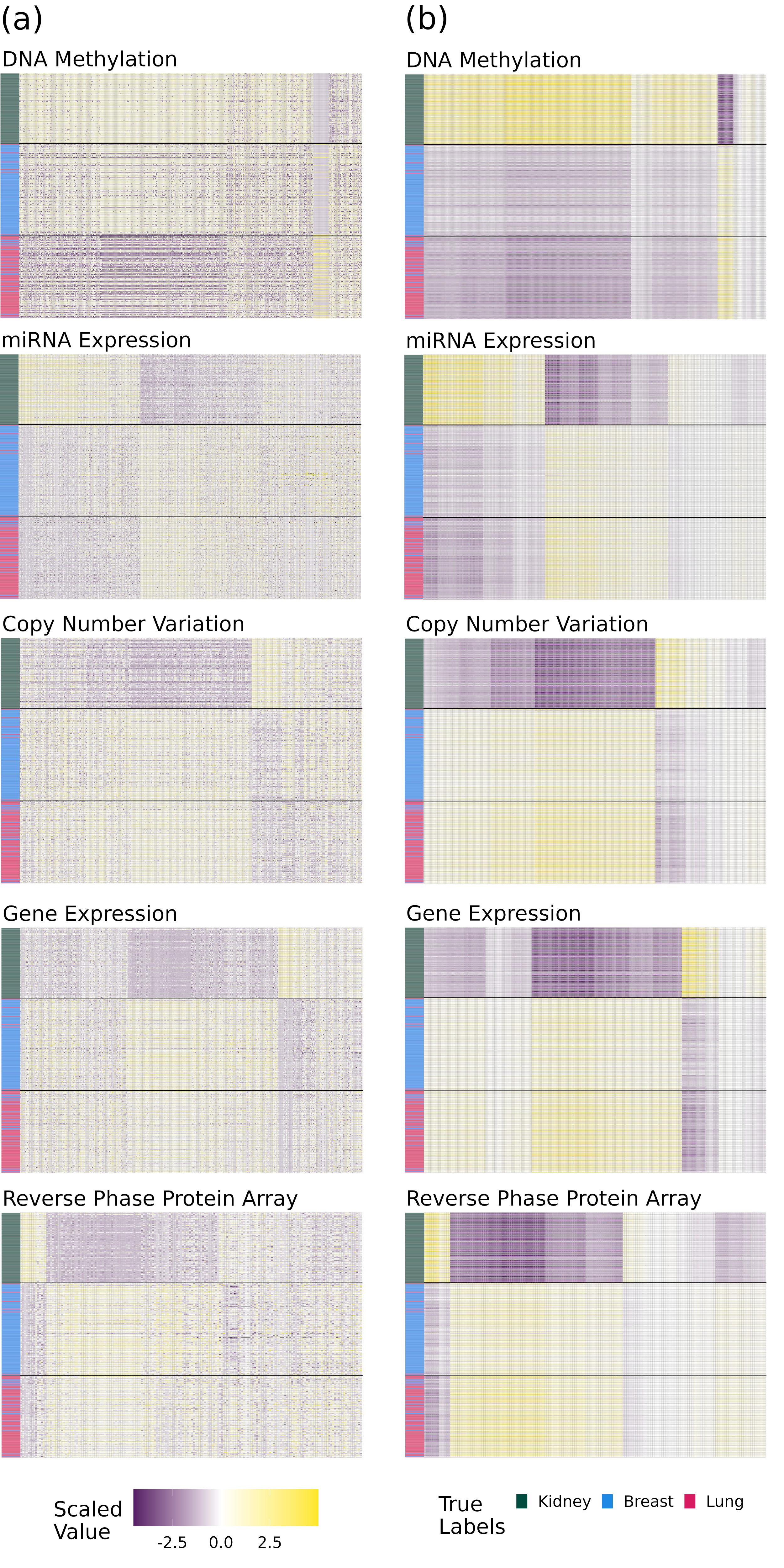}
		\end{center}
		\caption{Cluster heatmap for each individual view for the \textbf{(a)} true data matrix and \textbf{(b)} imputed low-rank data matrix estimated from Cluster Quilting. Cells are colored by scaled value of the matrix. Rows are ordered by true primary site, with different underlying groupings partitioned; color labels on the left bar show the estimated cluster labels returned by the Cluster Quilting procedure. Columns are ordered by similarities from hierarchical clustering.}
		\label{fig:tcga_heat}
	\end{figure}
	
	We further investigate the cluster label estimates from the Cluster Quilting procedure in Figure \ref{fig:tcga_heat}. Here, we show cluster heatmaps for each view included in our TCGA empirical study, with the heatmap of the fully observed data set in column (a) and the heatmap of the data matrix as imputed 
	via the Cluster Quilting algorithm ($\widehat{U}\widehat{\Lambda}\widehat{V}^\top$ in Algorithm \ref{alg:bsvdcq}) for the corresponding views in column (b). The rows in the heatmap are sorted by the estimated cluster labels returned by Cluster Quilting, while the true primary sites of each tumor are shown in the colored bar on the left side of the figure. The Cluster Quilting method is able to correctly cluster together all of the tumors with kidney as the primary site, but misclassifies several of the tumors from the breast and lung primary sites as the other. This is reflected in the cluster heatmaps for both the true data matrix and the imputed data matrix estimated via the Cluster Quilting algorithm, in which we see that the transcriptomic and genetic information in the kidney primary site tumors are more easily distinguishable compared those from the other two primary sites. The misclassification between the lung and breast cancer tumors can likely be explained by the relative similarity between the two, which has been shown in previous literature \citep{gemma2001genomic, schwartz1999familial}; in particular, metastasis between the lung and breast have been studied in great depth \citep{montel2005expression, jin2018breast, landemaine2008six}. In the Supplementary Material, we perform a real-world data study in the sequential patch observation setting, in which we analyze functional neuronal clustering from calcium imaging data.

	\section{Discussion} \label{s:discuss}
	In this work, we study the problem of clustering in the patchwork learning setting. We propose the Cluster Quilting procedure, the first spectral clustering method on patchwork data with statistical guarantees. We have shown finite sample misclustering rates based on the separation distance between clusters and signal-to-noise ratio in the data patches, and we have demonstrated through empirical studies that the Cluster Quilting method outperforms other incomplete spectral clustering methods for recovering true underlying cluster labels in both simulated and real-world data. While other works in the literature have studied the general problem of cluster estimation under the presence of missing data, including many techniques proposed for partial multi-view spectral clustering, we are the first to propose a method and study its performance from a statistical perspective for the patchwork learning setting. Additionally, our work significantly differs from while also complements the matrix completion literature with deterministic missing patterns: prior works in this area mostly consider a more specific observational pattern, e.g., causal panel data, sequential observational blocks, etc., and have no clear implications in the clustering context. 
	
	While the Cluster Quilting method has shown promising results for clustering in the patchwork learning setting, there are several open problems and potential extensions which require further investigation. Our current empirical studies have focused on the case where each cluster is well-represented in each observation patch and where the amount of overlap between patches is relatively evenly distributed and sufficiently large. Violations of these assumptions, however, may occur in real-world applications, and it is unclear how well Cluster Quilting will perform in these situations. We have also restricted our work to the case where patch merging is done sequentially, i.e. where we only consider iteratively merging a new patch to ones that have already been previously consolidated. This does not have to be the case algorithmically, and in fact it may be beneficial to merge non-sequentially in order to maximize the total overlapping signal; however, this scheme presents extra challenges for quantifying theoretical performance. From a broader perspective, ideas from the Cluster Quilting algorithm could potentially be extended for clustering patchwork-observed data in the context of multimodal data integration; however, further work will be required to study whether our method can be used to meet challenges specific to this setting, such as mixed data types or for highly erosely measured modalities. Beyond data clustering, our methods and ideas may also be leveraged for other problems in patchwork learning, such as PCA, ranking, and community detection in networks.

	\section*{Acknowledgements}
	The authors gratefully acknowledge support from NSF NeuroNex-1707400, NIH 1R01GM140468, and NSF DMS-2210837.
 \newpage
	\appendix
 \section{Proofs}
In this section, we present the complete proofs for all our theoretical results.
\begin{proof}[Proof of Theorem \ref{thm:main}]
We first present a lemma that shows that spectral error bound for $X^*$ implies clustering consistency:
\begin{lem}\label{lem:spectral_cluster_err}
    Suppose that $\|\widetilde{H}\widetilde{V}^\top - X^*\|\leq \varepsilon$. Let $\Phi$ denote the set of all permutations on $[n]$. Then as long as $\varepsilon<\sqrt{\frac{\min_{1\leq j\leq K} n_j}{64r}}\Delta$, we have
    $$
    \ell(\widehat{z},z)\leq \frac{32r\varepsilon^2}{n\Delta^2},\quad \min_{\phi\in \Phi}\max_{1\leq j\leq K}\|\widehat{U}\widehat{c}_j - \theta^*_{\phi(j)}\|_2^2 \leq \frac{16r\varepsilon^2}{\min_{1\leq j\leq K} n_j}.
    $$
\end{lem}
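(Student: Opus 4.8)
The plan is to translate the operator-norm guarantee $\|\widetilde H\widetilde V^\top-X^*\|\le\varepsilon$ into a statement about the $k$-means step and then run a separation-plus-counting argument. Write $\widehat X=\widetilde H\widetilde V^\top=\widehat U\widehat\Lambda\widehat V^\top$, whose columns $\widehat X_{:,i}$ estimate the population columns $X^*_{:,i}=\theta^*_{z_i}$. First I would observe that the $k$-means in Algorithm~\ref{alg:bsvdcq} is run on the columns of $\widehat\Lambda\widehat V^\top=\widehat U^\top\widehat X$, and that because $\widehat U$ has orthonormal columns and every column of $\widehat X$ lies in $\mathrm{range}(\widehat U)$, one has $\|\widehat U^\top\widehat X_{:,i}-c\|_2=\|\widehat X_{:,i}-\widehat U c\|_2$ for every $c\in\bbR^r$. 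Hence clustering the columns of $\widehat\Lambda\widehat V^\top$ in $\bbR^r$ is exactly equivalent to clustering the columns of $\widehat X$ in $\bbR^p$ with centers $\widehat\mu_j:=\widehat U\widehat c_j$, and I may work in the ambient space. Since $\widehat X$ and $X^*$ both have rank at most $r$, their difference has rank at most $2r$, so the operator bound upgrades to $\|\widehat X-X^*\|_F\le\sqrt{2r}\,\varepsilon$.

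Next I would invoke optimality of the $k$-means solution. Comparing the optimal objective against the feasible choice consisting of the \emph{true} labels $z$ together with the projected centers $\widehat U\widehat U^\top\theta^*_j$, and using that projection onto $\mathrm{range}(\widehat U)$ is non-expansive while fixing $\widehat X$, gives $\sum_i\|\widehat X_{:,i}-\widehat\mu_{\widehat z_i}\|_2^2\le\|\widehat X-X^*\|_F^2\le 2r\varepsilon^2$. Collecting the fitted centers into the matrix $\widehat M$ with columns $\widehat\mu_{\widehat z_i}$, the triangle inequality then yields $\|\widehat M-X^*\|_F^2\le 8r\varepsilon^2$. This single estimate drives both conclusions.

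The combinatorial core, and the step I expect to require the most care, is converting this aggregate $\ell_2$ control into a genuine label-agreement statement. Call a sample $i$ \emph{bad} if $\|\widehat\mu_{\widehat z_i}-\theta^*_{z_i}\|_2\ge\Delta/2$ and let $S$ be the set of bad samples; a Markov-type bound gives $|S|\,(\Delta/2)^2\le\|\widehat M-X^*\|_F^2\le 8r\varepsilon^2$, so $|S|\le 32r\varepsilon^2/\Delta^2$, and the hypothesis $\varepsilon<\sqrt{\min_j n_j/(64r)}\,\Delta$ forces $|S|<\tfrac12\min_j n_j$. For two good samples in a common estimated cluster, $\widehat\mu_{\widehat z_i}=\widehat\mu_{\widehat z_{i'}}$ together with the separation $\min_{j\ne j'}\|\theta^*_j-\theta^*_{j'}\|_2=\Delta$ forces $z_i=z_{i'}$, so each estimated cluster containing a good sample maps to a single true label; this defines a map $\psi$ from such clusters into $[K]$. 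Because every true cluster still retains more than half of its members as good samples, $\psi$ is surjective onto $[K]$, and surjectivity of a map defined on a subset of $[K]$ forces it to be a bijection of $[K]$. Taking $\phi=\psi$, every misclassified sample must be bad, so $\ell(\widehat z,z)\le|S|/n\le 32r\varepsilon^2/(n\Delta^2)$, the first claimed bound.

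Finally, for the centroid bound I would localize the Frobenius estimate cluster by cluster. Since $\psi$ is a bijection, all good samples with true label $j$ lie in the single estimated cluster $\psi^{-1}(j)$, and this set $G_j$ has $|G_j|>\tfrac12\min_j n_j$. For every $i\in G_j$ the summand $\|\widehat\mu_{\widehat z_i}-\theta^*_{z_i}\|_2$ equals the common value $\|\widehat\mu_{\psi^{-1}(j)}-\theta^*_j\|_2$, so $|G_j|\,\|\widehat\mu_{\psi^{-1}(j)}-\theta^*_j\|_2^2\le\|\widehat M-X^*\|_F^2\le 8r\varepsilon^2$, whence $\|\widehat U\widehat c_k-\theta^*_{\psi(k)}\|_2^2\le 16r\varepsilon^2/\min_j n_j$ for every estimated cluster $k$; taking the maximum over $k$ gives the second bound. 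The only genuinely delicate points are the equivalence of the two $k$-means formulations, handled by orthonormality of $\widehat U$, and the bijectivity of $\psi$, handled by the counting argument above; the remaining steps are routine norm manipulations.
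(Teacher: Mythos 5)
Your proposal is correct and follows essentially the same route as the paper's proof: upgrade the spectral bound to a Frobenius bound via the rank-$2r$ difference, use $k$-means optimality (through the isometry of $\widehat U$) and the triangle inequality to get $\|\widehat M-X^*\|_F^2\le 8r\varepsilon^2$, then count the samples whose fitted center is at least $\Delta/2$ from the true centroid. The only difference is that the paper delegates the final majority/bijection argument to Lemma 4.2 of \cite{loffler2021optimality}, whereas you spell it out explicitly; the content is the same.
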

The proof of Lemma \ref{lem:spectral_cluster_err} follows similar ideas in \cite{loffler2021optimality}. The main difficulty of the full proof of Theorem \ref{thm:main} lies in showing the spectral error bound for the output of Algorithm \ref{alg:bsvdcq} so that we can apply Lemma \ref{lem:spectral_cluster_err}. In the following, we present a theorem that bounds $\|\widetilde{H}\widetilde{V}^\top - X^*\|$ with high probability: 
\begin{thm}\label{thm:spectral_err}
    Consider the model set-up described in Section \ref{sec:model} and Algorithm \ref{alg:bsvdcq}. If Assumption \ref{assump:block_eigmin} holds, we have
    $$
    \|\widetilde{H}\widetilde{V} - X^*\|\leq 
    \frac{C\alpha_{\max}(\beta_{\max}+1)}{\beta_{\min}^{2}}\prod_{l=2}^M\left(\frac{1.1}{\gamma_l}+1\right)\max_m\frac{\|E_{T_m,I_m}\|}{\sigma_r(X^*_{T_m,I_m})}\|X^*\|.
    $$
\end{thm}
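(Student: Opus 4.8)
The plan is to reduce the bound to a noiseless exact-recovery statement plus a controlled propagation of per-block perturbations through the sequential merges. First I would record the population identities obtained by running Algorithm~\ref{alg:bsvdcq} with $E=0$. Writing $X^*_{T_m,I_m}=U^*_{T_m,:}\Lambda^*(V^*_{I_m,:})^\top=U_m^*\Lambda_m^*(V_m^*)^\top$, a direct computation gives $V_m^*=V^*_{I_m,:}B_m^*$ with $B_m^*=\Lambda^*(U^*_{T_m,:})^\top U_m^*(\Lambda_m^*)^{-1}$, and shows that the population least-squares transform is $G_m^*=(B_m^*)^{-1}B_1^*$, which makes each merge exact. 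Consequently the noiseless algorithm outputs $\widetilde{V}=V^*B_1^*$ and $\widetilde{H}=U^*\Lambda^*(B_1^*)^{-\top}$, so that $\widetilde{H}\widetilde{V}^\top=U^*\Lambda^*V^{*\top}=X^*$. Because $\widetilde{H}\widetilde{V}^\top$ is invariant under $(\widetilde{H},\widetilde{V})\mapsto(\widetilde{H}R,\widetilde{V}R^{-\top})$ for any invertible $R$, I may carry the running estimate relative to one fixed global transform $B_1^*$ throughout the induction.

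Next I would establish the single-block bounds. Under Assumption~\ref{assump:block_eigmin} the noise $\|E_{T_m,I_m}\|$ is a small fraction of $\sigma_r(X^*_{T_m,I_m})$, so Weyl's inequality keeps the top $r$ singular values bounded away from zero and a Wedin/Davis--Kahan argument produces an orthogonal $O_m$ with $\|\widehat{V}_m-V_m^*O_m\|$ and $\|\widehat{H}_m-U_m^*\Lambda_m^*O_m\|$ controlled by the relative error $\delta_m:=\|E_{T_m,I_m}\|/\sigma_r(X^*_{T_m,I_m})$. The geometric fact that links $\gamma_m$ to the merge is that $\sigma_r(X^*_{T_m,J_m^{(1)}})=\sigma_r(\Lambda_m^*(V_m^*)_{J_m^{(2)},:}^\top)\le\|X^*_{T_m,I_m}\|\,\sigma_r((V_m^*)_{J_m^{(2)},:})$, hence $\sigma_r((V_m^*)_{J_m^{(2)},:})\ge\gamma_m$; that is, $\gamma_m$ lower-bounds the smallest singular value of the overlap sub-block of the block-$m$ right singular vectors. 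Therefore the pseudoinverse $((\widehat{V}_m)_{J_m^{(2)},:}^\top(\widehat{V}_m)_{J_m^{(2)},:})^{\dagger}(\widehat{V}_m)_{J_m^{(2)},:}^\top$ has operator norm at most $1/\sigma_r((\widehat{V}_m)_{J_m^{(2)},:})\le 1.1/\gamma_m$, where the $0.1$ slack absorbs the perturbation of this singular value under the SNR condition; this is exactly where the factor $1.1\gamma_m^{-1}$ enters.

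The heart of the argument is an induction over $m$ on the spectral error $\eta_m$ of the running estimate after merging block $m$, measured against $V^*$ composed with the fixed transform $B_1^*$. Expanding the update $\widetilde{V}_{I_m\backslash J_m^{(1)},:}=(\widehat{V}_m)_{\backslash J_m^{(2)},:}G_m$, I would split its error into three pieces: the fresh single-block error of $\widehat{V}_m$ (order $\delta_m$); the matching error, in which the pseudoinverse of norm at most $1.1/\gamma_m$ acts on the mismatch $\|(\widehat{V}_m)_{J_m^{(2)},:}G_m-\widetilde{V}_{J_m^{(1)},:}\|$ inherited from the accumulated error $\eta_{m-1}$ on the overlap rows; and the change-of-basis mismatch between the block-$m$ alignment $B_m^*$ and the global one, which contributes the heterogeneity ratios through $\beta_{\max}=\max\|(B_m^*)^{-1}B_{m'}^*\|$, $\beta_{\min}=\min\sigma_r((B_m^*)^{-1}B_{m'}^*)$ and the singular-value ratios $\alpha_{m,m'}$ when relative errors are transported across blocks. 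Combining the persistence of the old error on previously covered rows with its amplification on the new rows yields a recursion of the form $\eta_m\le(1+1.1\gamma_m^{-1})\eta_{m-1}+c\,\delta_m$, whose solution is the advertised product $\prod_{l=2}^M(1.1\gamma_l^{-1}+1)$ times $\max_m\delta_m$. The companion estimate for $\widehat{H}$ runs in parallel from $\widetilde{H}_{T_m}=\widehat{H}_m(G_m^\top)^{-1}$, using $\sigma_r(G_m)\gtrsim\beta_{\min}$ to control the inverse.

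Finally I would collect the heterogeneity constants into $\alpha_{\max}(\beta_{\max}+1)/\beta_{\min}^2$, replace every $\delta_m$ by $\max_m\delta_m$, and multiply through by $\|X^*\|$ to convert the relative bound on the aligned singular vectors into the absolute spectral bound on $\widetilde{H}\widetilde{V}^\top-X^*$, giving the stated inequality. The main obstacle is the induction of the previous paragraph: it must be executed entirely in the spectral norm, so as to avoid the $\sqrt{r}$ losses of a Frobenius-norm analysis, and without any assumption on the gaps between nonzero singular values, which rules out a naive per-vector Davis--Kahan bound and forces one to work with the whole top-$r$ subspace and its least-squares alignment at once. The delicate bookkeeping is to keep the running estimate aligned to the single transform $B_1^*$ while each incoming block arrives in its own coordinates $B_m^*$, and to show that the per-step amplification is exactly $1+1.1\gamma_m^{-1}$ rather than a looser condition-number-dependent factor; this is what makes the bound tighter than the gap-dependent, Frobenius-norm result of \cite{bishop2014deterministic}.
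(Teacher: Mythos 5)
Your proposal follows essentially the same route as the paper's proof: blockwise Davis--Kahan bounds, the oracle alignment matrices $B_m^*$ and $G_m^*=(B_m^*)^{-1}B_1^*$ (up to the per-block rotations), the lower bound $\sigma_r((V_m^*)_{J_m^{(2)},:})\geq\gamma_m$ feeding a least-squares perturbation lemma that produces the $1.1\gamma_m^{-1}$ factor, an induction with per-step amplification $(1+1.1\gamma_m^{-1})$ whose sum of products collapses into a single product because each factor exceeds $2.1$, and a parallel analysis of $\widetilde{H}$ via $\sigma_r(G_m)\gtrsim\beta_{\min}$. The plan is correct and matches the paper's argument in all essential respects.
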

\noindent Combining Theorem \ref{thm:spectral_err} and Assumption \ref{assump:block_eigmin}, we have
$$
    \|\widetilde{H}\widetilde{V} - X^*\|\leq C\|X^*\|\sqrt{\frac{\min_j n_j}{rK \max_j n_j}}.
    $$
By Assumption \ref{assump:centroid_norm_bnd}, $\|X^*\|\leq \|X^*\|_F\leq \max_j\|\theta_j\|_2\sqrt{n}\leq C\Delta\sqrt{K\max_jn_j}$, and hence by choosing the constant $C$ appropriately in Assumptions \ref{assump:block_eigmin}, we have $\|\widetilde{H}\widetilde{V} - X^*\|<\sqrt{\frac{\min_j n_j}{64r}}\Delta$. Now we can apply Lemma \ref{lem:spectral_cluster_err}, plug in the error bound in Theorem \ref{thm:spectral_err} into Lemma \ref{lem:spectral_cluster_err} to complete the proof of Theorem \ref{thm:main}.
\end{proof}
The following lemma for bounding the sum of sequential products will be useful in many steps of our main proof.
\begin{lem}\label{lem:sum_prod}
    For any sequence $\{a_1,a_2,a_n\}$, if $a_l\geq a_{\min}$ for all $1\leq l\leq n$ and some $a_{\min} >1$, then we have
    \begin{equation*}
        \sum_{i=1}^n\prod_{l=1}^i a_l \leq \frac{\prod_{l=1}^na_la_{\min} - a_1}{a_{\min} - 1}.
    \end{equation*}
\end{lem}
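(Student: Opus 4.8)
The plan is to prove Lemma~\ref{lem:sum_prod} by induction on $n$, since this route verifies the precise closed form on the right-hand side directly rather than through an intermediate estimate that would then need tightening. Abbreviate the partial products by $P_i = \prod_{l=1}^i a_l$, so that the assertion reads $\sum_{i=1}^n P_i \le (P_n a_{\min} - a_1)/(a_{\min}-1)$. For the base case $n=1$ the left side is $a_1$ and the right side is $(a_1 a_{\min} - a_1)/(a_{\min}-1) = a_1$, so equality holds and in particular the bound is valid (note $a_{\min}>1$ makes the denominator positive).

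For the inductive step, assume the inequality for $n-1$, i.e. $\sum_{i=1}^{n-1} P_i \le (P_{n-1}a_{\min}-a_1)/(a_{\min}-1)$. Adding the last term $P_n$ and clearing the common denominator, it suffices to show $P_{n-1}a_{\min} - a_1 + P_n(a_{\min}-1) \le P_n a_{\min} - a_1$, which after cancellation reduces to the single inequality $a_{\min} P_{n-1} \le P_n$. Since $P_n = a_n P_{n-1}$ and $a_n \ge a_{\min}$ by hypothesis, this holds, completing the induction. The entire argument is elementary; the only point requiring care is the algebraic bookkeeping that isolates the clean reduction $a_{\min}P_{n-1}\le P_n$, which is exactly where the assumption $a_l \ge a_{\min}$ is consumed.

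As a cross-check (and an alternative proof one could record instead), the bound also follows by summing a geometric series: the hypothesis gives $P_{i-1}\le P_i/a_{\min}$, hence $P_{n-j}\le P_n a_{\min}^{-j}$, and summing over $0\le j\le n-1$ yields $\sum_{i=1}^n P_i \le P_n (a_{\min}-a_{\min}^{1-n})/(a_{\min}-1)$. Converting this to the stated form requires the estimate $P_n a_{\min}^{1-n}\ge a_1$, which holds because $P_n = a_1\prod_{l=2}^n a_l \ge a_1 a_{\min}^{n-1}$; substituting this lower bound on the subtracted term gives precisely $(P_n a_{\min}-a_1)/(a_{\min}-1)$. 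I expect no genuine obstacle here, as the statement is a purely arithmetic geometric-growth estimate; the inductive proof is the most transparent and is what I would write out in full.
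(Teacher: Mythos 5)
Your induction is correct and rests on exactly the same key inequality as the paper's argument, namely $a_{\min}\prod_{l=1}^{i}a_l\le\prod_{l=1}^{i+1}a_l$; the paper simply applies it to all terms at once (multiplying the whole sum $B$ by $a_{\min}$, shifting the index, and solving $Ba_{\min}\le B+\prod_{l=1}^n a_l\,a_{\min}-a_1$ for $B$), whereas you unroll the same manipulation one term at a time. Both are valid and essentially equivalent, so no changes are needed.
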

\begin{proof}[Proof of Lemma \ref{lem:sum_prod}]
    Let $B = \sum_{i=1}^n\prod_{l=1}^i a_l$. We can then write
    \begin{equation*}
        \begin{split}
            Ba_{\min} =& \sum_{i=1}^n\prod_{l=1}^i a_la_{\min}\\
            \leq&\sum_{i=1}^{n-1}\prod_{l=1}^{i+1} a_l + \prod_{l=1}^n a_l a_{\min}\\
            = &\sum_{i=1}^{n}\prod_{l=1}^{i} a_l + \prod_{l=1}^n a_l a_{\min} - a_1\\
            = &B + \prod_{l=1}^n a_l a_{\min} - a_1.
        \end{split}
    \end{equation*}
    Therefore, $B(a_{\min} - 1) \leq \prod_{l=1}^n a_l a_{\min} - a_1$. The proof is now complete.
\end{proof}

\subsection{Proof of Theorem \ref{thm:spectral_err}}
The proof consists of three main steps: (i) we first upper bound the spectral errors for the SVD of each block; (ii) then we show that the matching matrices for consecutive blocks are accurately estimated, given that each block has small perturbation bound shown in the first step; (iii) combining the first two results together, we obtain final error bounds for the full matrix factors $\til{V}\in \bbR^{n\times r}$ and $\til{H}\in \bbR^{p\times r}$. 

\paragraph{Bounding spectral errors for each block}
Consider the block-wise SVD for the population quantity $X^*_{T_m,I_m}=U_m^*\Lambda_m^*V_m^{*\top}$ with $U_m^*\in \bbO^{p_m\times r}$, $\Lambda_m\in \bbR^{r\times r}$, and $V_m\in \bbO^{n_m\times r}$. Then we would like to show spectral error bounds for $\widehat{H}_m$ and $\h{V}_m$ as estimates for $U_m^*\Lambda_m^*$ and $V_m^*$ up to rotation. By Weyl's inequality, $\sigma_{r+1}(X_{T_m,I_m}) \leq \|E_{T_m,I_m}\|$. As long the universal constant $C>0$ in Assumption \ref{assump:block_eigmin} is taken to be larger than $2$, we have $\sigma_{r+1}(X_{T_m,I_m}) \leq \|E_{T_m,I_m}\|\leq \frac{1}{2}\sigma_r(X^*_{T_m,I_m})$. Applying the Davis-Kahan's theorem \cite[see, e.g., Theorem 2.7 in][]{chen2021spectral}, we know that 
\begin{equation}\label{eq:block_V_err}
\begin{split}
    \min_{W\in \cO^{r\times r}}\|\h{V}_mW - V_m^*\|\leq &\frac{\sqrt{2}\|E_{T_m,I_m}\|}{\sigma_r(X^*_{T_m,I_m}) - \sigma_{r+1}(X_{T_m,I_m})}\\
    \leq &\frac{2\sqrt{2}\|E_{T_m,I_m}\|}{\sigma_r(X^*_{T_m,I_m})}.
\end{split}
\end{equation}
For the rest of the proof, we use $\widehat{W}_m$ to denote the best rotation matrix that achieves the minimum $\min_{W\in \mathcal{O}^{r\times r}} \|\widehat{V}_mW - V_m^*\|$. Now we turn to upper bounding $\widehat{H}_m\widehat{W}_m - U_m^*\Lambda_m^*$. Recognizing the fact that $\widehat{H}_m\widehat{W}_m = X_{T_m,I_m}\widehat{V}_m\widehat{W}_m$ and $U_m^*\Lambda_m^* = X^*_{T_m,I_m}V_m^*$, we can show the following:
\begin{equation}\label{eq:H_m_err}
\begin{split}
    \|\widehat{H}_m\widehat{W}_m - U_m^*\Lambda_m^*\|= &\|X_{T_m,I_m}\widehat{V}_m\widehat{W}_m - X^*_{T_m,I_m}V_m^*\|\\
    \leq&\|X^*_{T_m,I_m}\|\|\widehat{V}_m\widehat{W}_m-V_m^*\| + \|E_{T_m,I_m}\|\\
    \leq&4\kappa_m\|E_{T_m,I_m}\|
\end{split}
\end{equation}
where the second line utilizes \eqref{eq:block_V_err}. 
We have also used $\kappa_m = \frac{\sigma_1(X^*_{T_m,I_m})}{\sigma_r(X^*_{T_m,I_m})}$ to denote the condition number of the ground truth matrix in block $m$.

\paragraph{Oracle matching matrices for consecutive blocks} Now we would like to establish a correspondence between our block-wise SVD results and the sub-matrices of the full SVD. By establishing such a correspondence, we can also find the oracle matching matrices between each block and its prior blocks. Consider the SVD for the full ground truth matrix $X^*=U^*\Lambda^*V^{*\top}$. Then for each block $m$, note that $U_{T_m,:}^*\Lambda^*V_{I_m,:}^{*\top} = U_m^*\Lambda_m^*V_m^{*\top}$. Hence we define $B_m^* = \Lambda^* U_{T_m,:}^{*\top}U_m^*\Lambda_m^{*-1}\in \mathbb{R}^{r\times r}$, and one can show that $B_m^*$ can transform the $m$th block of the full SVD matrix $V^*_{I_m,:}$ to $V_m^*$:
    \begin{equation}\label{eq:V_B}
        V_m^* = V^*_{I_m,:}B_m^*.
    \end{equation}
    We can also see that 
    \begin{equation}\label{eq:U_B}
        U_m^*\Lambda_m^*B_m^{*\top} = U_m^*U_m^{*\top}U_{T_m,:}^*\Lambda^* = U_{T_m,:}^*\Lambda^*,
    \end{equation} since $U_{T_m,:}^*$ lies within the space spanned by $U_m^*$. 
    Let $\widetilde{V}^* = V^*B_1^{*}\widehat{W}_1^\top$, and $\widetilde{H}^* = U^*\Lambda^* (B_1^{*-1})^\top\widehat{W}_1^\top$, then we also have another decomposition for the ground truth matrix $X^* = \widetilde{H}^*\widetilde{V}^{*\top}$, and they satisfy:
    \begin{equation*}
        (\til{V}^*)_{I_1,:} = V_1^*\h{W}_1^\top, \quad (\til{H})^*_{T_1,:} = U_1^*\Lambda_1^*\h{W}_1^\top.
    \end{equation*}
    Therefore, our perturbation bounds for blockwise SVD in the first step already suggest $\til{V}_{I_1,:} = \h{V}_1$ and $\til{H}_{T_1,:} = \h{H}_1$ as good estimates for $(\til{V}^*)_{I_1,:}$ and $(\til{H})^*_{T_1,:}$. On the other hand, for block $m>1$, some transformation is needed to match the first block so that they can estimate $(\til{V}^*)_{I_m,:}$ and $(\til{H}^*)_{T_m,:}$ well. Define 
    \begin{equation}\label{eq:G_m_def}
        G_m^* := \h{W}_mB_m^{*-1}B_1^*\h{W}_1^\top,
    \end{equation} 
    then some computations show that
    \begin{equation*}
        (\til{V}^*)_{I_m,:} = V_m^*\h{W}_m^\top G_m^* ,\quad (\til{H})^*_{T_m,:} = U_m^*\Lambda_m^*\h{W}_m^\top(G_m^*)^{\top -1}.
    \end{equation*}
    We will need to show that the oracle $G_m^*$ can be estimated well by Algorithm \ref{alg:bsvdcq} later in the proof, which then suggests all blocks of $\widetilde{V}$ and $\til{H}$ are accurate estimates of $\til{V}^*$ and $\til{H}^*$.
    
  Furthermore, the spectral error for the full low-rank reconstruction satisfies the following:
    \begin{equation}
        \begin{split}
            \|\widetilde{H}\widetilde{V}^\top-X^*\| = &\|\widetilde{H}\widetilde{V}^\top-\widetilde{H}^*\widetilde{V}^{*\top}\|\\
            \leq &\|\widetilde{H}\|\|\widetilde{V}-\widetilde{V}^{*}\|+\|\widetilde{H}-\widetilde{H}^*\|\|\widetilde{V}^{*}\|\\
            \leq &(\|\widetilde{H}^*\|+\|\widetilde{H}-\widetilde{H}^*\|)\|\widetilde{V}-\widetilde{V}^{*})\|+\|\widetilde{H}-\widetilde{H}^*\|\|\widetilde{V}^{*}\|.
        \end{split}
    \end{equation} 
    For each block $m$, let $J_m^{(1)} = (\bigcup_{k = 1}^{m-1} I_k )\cap I_m, \, J_m^{(2)} = \{j: I_m[j] \in J_m^{(1)}\}$ be the overlapping set of block $m$ with prior blocks. When $m=1$, $J_m^{(1)}$ and $J_m^{(2)}$ are trivially defined as empty sets. By the definition of the spectral norm, we know that the spectral error for both $\til{V}$ and $\til{H}$ can be decomposed as follows:
    \begin{equation}\label{eq:V_H_fullerr}
    \begin{split}
        \|\widetilde{V}-\widetilde{V}^*\| \leq &\sqrt{\sum_{m=1}^M\|\widetilde{V}_{I_m\backslash J_m^{(1)},:}-\widetilde{V}^*_{I_m\backslash J_m^{(1)},:}\|^2},\\
        \|\widetilde{H}-\widetilde{H}^*\| \leq &\sqrt{\sum_{m=1}^M\|\widetilde{H}_{T_m,:}-\widetilde{H}^*_{T_m,:}\|^2}.
    \end{split}
    \end{equation}
    We will now bound the spectral norm error $\|\widetilde{V}_{I_m\backslash J_m^{(1)},:}-\widetilde{V}^*_{I_m\backslash J_m^{(1)},:}\|$ and $\|\widetilde{H}_{T_m,:}-\widetilde{H}^*_{T_m,:}\|$ for each block $m$ in the following step.

\paragraph{Bounding the matching-induced errors} For the first block, by construction, $\|\widetilde{V}_{I_1,:}-\widetilde{V}^*_{I_1,:}\|=\|\widehat{V}_1\widehat{W}_1-V_1^*\|$, $\|\widetilde{H}_{T_1,:}-\widetilde{H}^*_{T_1,:}\|=\|\widehat{H}_1\widehat{W}_1 -U_1^*\Lambda_1^*\|$.
For the $m$th block, let $\varepsilon_m^{(2)} := \big\|\widetilde{V}_{I_m\backslash J_m^{(1)},:}-\widetilde{V}^*_{I_m\backslash J_m^{(1)},:}\big\|$. Then one can show that
    \begin{equation}\label{eq:induction1}
    \begin{split}
        \varepsilon_m^{(2)} \leq &\|\widehat{V}_mG_m-\widetilde{V}^*_{I_m,:}\|\\
        =&\|\widehat{V}_mG_m -V_m^*B_m^*B_1^{*-1}\widehat{W}_1^\top\|\\
        =&\|\widehat{V}_mG_m -V_m^*\widehat{W}_m^\top G_m^*\|\\
        \leq&\|G_m-G_m^*\|+\|\widehat{V}_m\widehat{W}_m -V_m^*\| \|G_m^*\|\\
        =:&\varepsilon_m^{(3)} +\varepsilon_m^{(1)}\|G_m^*\|,
        \end{split}
    \end{equation} 
    where $G_m^*$ was defined earlier in \eqref{eq:G_m_def} as the oracle matching matrix that aligns block $m$ with the first block, and $\varepsilon_m^{(3)} := \|G_m-G_m^*\|$, $\varepsilon_m^{(1)} := \|\widehat{V}_m\widehat{W}_m -V_m^*\|$. 

While for $\|\widetilde{H}_{T_m,:}-\widetilde{H}^*_{T_m,:}\|$, we have 
    \begin{equation}\label{eq:H_block_err}
    \begin{split}
        \|\widetilde{H}_{T_m,:}-\widetilde{H}^*_{T_m,:}\|&=\|\widehat{H}_mG_m^{\top -1}-U_m^*\Lambda_m^*B_m^{*\top}(B_1^{*-1})^{\top}\widehat{W}_1^\top\|_2\\
        &=\|\widehat{H}_mG_m^{\top -1}-U_m^*\Lambda_m^*\widehat{W}_m^\top (G_m^{*\top})^{-1}\|\\
        &\leq\|X_{T_m,I_m}^*\|\|G_m^{-1}-G_m^{*-1}\|\\ 
        &\quad+ \|\widehat{H}_m\widehat{W}_m-U_m^*\Lambda_m^*\|(\|G_m^{*-1}\|+\|G_m^{-1}-G_m^{*-1}\|).
    \end{split}  
    \end{equation} 
    This would require an error bound for $\|G_m^{-1}-G_m^{*-1}\|$.

In the following, we first show bounds for $G_m-G_m^*$, and then use them to derive error bounds for $\widetilde{V}_{I_m\backslash J_m^{(1)},:}-\widetilde{V}^*_{I_m\backslash J_m^{(1)},:}$ and $\widetilde{H}_{T_m,:}-\widetilde{H}^*_{T_m,:}$. First we note that $G_m^*$ satisfies
    \begin{equation*}
    \begin{split}
        (V_m^*)_{J_m^{(2)},:}\widehat{W}_m^\top G_m^* &= (V_m^*)_{J_m^{(2)},:}B_m^*B_1^{*-1}\widehat{W}_1^\top\\
        &=V^*_{J_m^{(1)},:}B_1^{*-1}\widehat{W}_1^\top\\
        &=\widetilde{V}^*_{J_m^{(1)},:},
    \end{split}
    \end{equation*}
    while $\h{G}_m$ is the OLS solution for transforming $(\h{V}_m)_{J_m^{(2)},:}$ to $\til{V}_{J_m^{(1)},:}$.
    The following lemma shows that the perturbation bound of the OLS solution:
    \begin{lem}\label{lem:OLS_err}
        Suppose $Y^* = Z^*G^*$ for $Z^*\in \mathbb{R}^{p\times r}$, $G^*\in \mathbb{R}^{r\times r}$, and $G = (Z^\top Z)^{\dagger} Z^\top Y$ with $Z$ and $Y$ of the same dimension as $Z^*$ and $Y^*$. If $\|Z- Z^*\|\leq \varepsilon_1$, $\|Y-Y^*\|\leq \varepsilon_2$ and $\sigma_r(Z^*)>\varepsilon_1$, then we have $$\|G-G^*\|\leq \frac{\|G^*\|\varepsilon_1+\varepsilon_2}{\sigma_r(Z^*) - \varepsilon_1}.$$
    \end{lem}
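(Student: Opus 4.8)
The plan is to reduce the whole statement to a single algebraic identity for $G - G^*$ and then control it using submultiplicativity of the spectral norm together with Weyl's inequality. First I would record what the full-column-rank hypothesis buys us. Since $\sigma_r(Z^*) > \varepsilon_1 \geq 0$ forces $\sigma_r(Z^*) > 0$, the matrix $Z^*$ has rank $r$, so $(Z^{*\top}Z^*)^\dagger = (Z^{*\top}Z^*)^{-1}$ and $Z^{*\dagger}Z^* = I_r$; combined with $Y^* = Z^*G^*$ this even shows $G^* = Z^{*\dagger}Y^*$, i.e. $G^*$ is itself the population OLS solution. By Weyl's inequality applied to $Z = Z^* + (Z - Z^*)$ we get $\sigma_r(Z) \geq \sigma_r(Z^*) - \|Z - Z^*\| \geq \sigma_r(Z^*) - \varepsilon_1 > 0$, so the perturbed $Z$ also has full column rank. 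Hence $Z^\dagger := (Z^\top Z)^{\dagger}Z^\top = (Z^\top Z)^{-1}Z^\top$ satisfies the two facts I will need: $Z^\dagger Z = I_r$, and $\|Z^\dagger\| = 1/\sigma_r(Z) \leq 1/(\sigma_r(Z^*) - \varepsilon_1)$.

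The key step is the factorization of the error. Writing $G = Z^\dagger Y$ and using $G^* = Z^\dagger Z\, G^*$ (valid because $Z^\dagger Z = I_r$), I would peel off $Z^\dagger$:
\begin{equation*}
G - G^* = Z^\dagger Y - Z^\dagger Z\, G^* = Z^\dagger\left(Y - Z G^*\right).
\end{equation*}
Next I split the residual by inserting the population relation $Y^* = Z^* G^*$:
\begin{equation*}
Y - Z G^* = (Y - Y^*) + (Z^* - Z)G^*.
\end{equation*}
Taking spectral norms, using submultiplicativity and the triangle inequality, and then substituting the two bounds from the first paragraph gives
\begin{equation*}
\|G - G^*\| \leq \|Z^\dagger\|\left(\|Y - Y^*\| + \|Z - Z^*\|\,\|G^*\|\right) \leq \frac{\varepsilon_2 + \varepsilon_1\|G^*\|}{\sigma_r(Z^*) - \varepsilon_1},
\end{equation*}
which is exactly the claimed bound after reordering the numerator.

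There is no deep obstacle here; the lemma is essentially a clean OLS perturbation estimate. The only point that genuinely requires care is ensuring both pseudoinverses collapse to honest inverses so that the identity $Z^\dagger Z = I_r$ is available and $\|Z^\dagger\|$ stays finite and controlled. This is precisely where the hypothesis $\sigma_r(Z^*) > \varepsilon_1$ enters, via Weyl's inequality, to certify that $Z$ inherits full column rank from $Z^*$; everything downstream is then just the triangle inequality and submultiplicativity.
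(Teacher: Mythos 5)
Your proof is correct and follows essentially the same route as the paper's: Weyl's inequality to certify $\sigma_r(Z)\geq \sigma_r(Z^*)-\varepsilon_1>0$, the identity $G-G^* = (Z^\top Z)^{-1}Z^\top\left[(Y-Y^*)-(Z-Z^*)G^*\right]$ (your $Z^\dagger(Y-ZG^*)$ is the same decomposition), and then $\|(Z^\top Z)^{-1}Z^\top\|=\sigma_r^{-1}(Z)$ together with the triangle inequality. No gaps.
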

    Now we apply Lemma \ref{lem:OLS_err} with $Z^* = (V_m^*)_{J_m^{(2)},:}\widehat{W}_m^\top$, $G^* = G_m^*$, $Y^* = \widetilde{V}^*_{J_m^{(1)},:}$; $Z = (\widehat{V}_m)_{J_m^{(2),:}}$, $Y = \widetilde{V}_{J_m^{(1)},:}$. To verify the condition in Lemma \ref{lem:OLS_err}, we note that 
    \begin{equation*}
    \begin{split}
        \sigma_r((V_m^*)_{J_m^{(2)},:}\widehat{W}_m^\top) &= \sigma_r((V_m^*)_{J_m^{(1)}})\geq \frac{\sigma_r(X^*_{T_m,J_m^{(1)}})}{\|U_m^*\Lambda_m^*\|}\geq \frac{\sigma_r(X^*_{T_m,J_m^{(1)}})}{\|X^*_{T_m, I_m}\|}= \gamma_m,
    \end{split}
    \end{equation*} 
    where the last equation is due to the definition of $\gamma_m$ in Section \ref{sec:theory}.
    Meanwhile, due to our previous error bound \eqref{eq:block_V_err} for each block, the first error term in Lemma \ref{lem:OLS_err} satisfies 
    \begin{equation*}
        \begin{split}
            \varepsilon_1 = &\|(\widehat{V}_m)_{J_m^{(2),:}}-(V_m^*)_{J_m^{(2)},:}\widehat{W}_m^\top\|\\
            \leq &\|\widehat{V}_m\widehat{W}_m - V_m^*\|\\
            \leq &\frac{2\sqrt{2}\|E_{T_m,I_m}\|}{\sigma_r(X^*_{T_m,I_m})}.
        \end{split}
    \end{equation*}
    Since the number of blocks $M\geq 2$, as long as the universal constant $C$ in Assumption \ref{assump:block_eigmin} is chosen to satisfy $C\geq 20\sqrt{2}$, we have $\varepsilon_1\leq \frac{2\sqrt{2}\|E_{T_m,I_m}\|}{\sigma_r(X^*_{T_m,I_m})}\leq \frac{\gamma}{11}$,
which implies
$$
\gamma_m \geq \gamma\geq 11\varepsilon_1.
$$
Then, invoking Lemma \ref{lem:OLS_err}, we obtain the following:
\begin{equation}\label{eq:induction2}
    \begin{split}
        \varepsilon_m^{(3)}&=\|G_m-G_m^*\|\\
        &\leq \frac{1.1}{\gamma_m}(\|G_m^*\|\|\h{V}_m\h{W}_m - V_m^*\|+\|\widetilde{V}_{J_m^{(1)},:}-\widetilde{V}^*_{J_m^{(1)},:}\|)\\
        &\leq \frac{1.1}{\gamma_m}(\beta_{\max}\varepsilon_m^{(1)}+\sum_{l=1}^{m-1}\varepsilon_{l}^{(2)}),
    \end{split}
\end{equation}
where we have applied the fact that $\|G_m^*\|=\|B_m^{*-1}B_1^*\|\leq \beta_{\max}$, and $J_m^{(1)}\subset \cup_{l=1}^{m-1}(I_l\backslash J_{l}^{(1)})$. We have also utilized the definitions $\varepsilon_m^{(1)} = \|\h{V}_m\h{W}_m - V_m^*\|$, the error term for each block SVD; and $\varepsilon_{l}^{(2)} = \|\til{V}_{I_{l}\backslash J_{l}^{(1)},:} - \til{V}_{I_{l}\backslash J_{l}^{(1)},:}^*\|$, the final error of the $l$'s block after the matching step.
Putting \eqref{eq:induction1} and \eqref{eq:induction2} together, and using an induction proof, we can show the following claim.
\begin{claim}\label{claim:induction_result}
Let $\varepsilon$ denote the maximum block-SVD error $\max_{1\leq m\leq M}\varepsilon_m^{(1)}$, and $\til{\gamma}_m = \frac{1.1}{\gamma_m}$. Then $\varepsilon_1^{(2)}\leq \varepsilon$, and for $m=2,\dots, M$,
\begin{equation}\label{eq:induction_result}
    \begin{split}
    \varepsilon_m^{(2)}\leq &(2\beta_{\max}+1)\til{\gamma}_m\prod_{l=2}^{m-1}(\til{\gamma}_l+1)\varepsilon,\\
        \sum_{l=1}^m\varepsilon_m^{(2)} \leq &(2\beta_{\max}+1)\prod_{l=2}^{m}(\til{\gamma}_l+1)\varepsilon - \beta_{\max}\varepsilon.
    \end{split}
\end{equation}
\end{claim}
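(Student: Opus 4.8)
The plan is to prove the per-block bound on $\varepsilon_m^{(2)}$ by (strong) induction on $m$ and then read off the partial-sum bound by summation. Chaining \eqref{eq:induction1} with \eqref{eq:induction2} and using $\varepsilon_m^{(1)}\le\varepsilon$ together with $\|G_m^*\|=\|B_m^{*-1}B_1^*\|\le\beta_{\max}$, the two appearances of $\|G_m^*\|\varepsilon_m^{(1)}$ (one from \eqref{eq:induction1}, one from the $\beta_{\max}\varepsilon_m^{(1)}$ term inside \eqref{eq:induction2}) collect into a single one-step recursion
$$\varepsilon_m^{(2)}\le(\til{\gamma}_m+1)\beta_{\max}\varepsilon+\til{\gamma}_m\sum_{l=1}^{m-1}\varepsilon_l^{(2)},\qquad \til{\gamma}_m=\tfrac{1.1}{\gamma_m}.$$
Writing $P_m:=\prod_{l=2}^m(\til{\gamma}_l+1)$ with $P_1=1$, the base case is immediate since $\varepsilon_1^{(2)}=\varepsilon_1^{(1)}\le\varepsilon$, and the goal is the inductive bound $\varepsilon_m^{(2)}\le(2\beta_{\max}+1)\til{\gamma}_m P_{m-1}\varepsilon$.

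The engine of the argument is the telescoping identity $\til{\gamma}_l P_{l-1}=P_l-P_{l-1}$, which holds because $P_l-P_{l-1}=P_{l-1}[(\til{\gamma}_l+1)-1]$. Assuming the per-block bound for every $l\le m-1$, I would sum them and telescope to obtain the \emph{sharp} partial-sum estimate
$$\sum_{l=1}^{m-1}\varepsilon_l^{(2)}\le\varepsilon+(2\beta_{\max}+1)(P_{m-1}-1)\varepsilon=(2\beta_{\max}+1)P_{m-1}\varepsilon-2\beta_{\max}\varepsilon.$$
Plugging this into the recursion, the leftover first-order terms combine into $\beta_{\max}\varepsilon\,(1-\til{\gamma}_m)$, which is nonpositive precisely because $\gamma_m<1$ forces $\til{\gamma}_m>1$; dropping it leaves exactly $\varepsilon_m^{(2)}\le(2\beta_{\max}+1)\til{\gamma}_m P_{m-1}\varepsilon$, closing the induction. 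Summing the per-block bounds one final time and telescoping gives $\sum_{l=1}^m\varepsilon_l^{(2)}\le(2\beta_{\max}+1)P_m\varepsilon-2\beta_{\max}\varepsilon$, which implies the claimed $\le(2\beta_{\max}+1)P_m\varepsilon-\beta_{\max}\varepsilon$ since $\beta_{\max}\varepsilon\ge0$.

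The main obstacle is a bookkeeping one: the induction only closes if the partial sum carries the sharp subtracted constant $-2\beta_{\max}\varepsilon$ rather than the weaker $-\beta_{\max}\varepsilon$ written in the claim. If instead one runs a recursion directly on $S_m=\sum_{l\le m}\varepsilon_l^{(2)}$, namely $S_m\le(\til{\gamma}_m+1)(S_{m-1}+\beta_{\max}\varepsilon)$, one only recovers $S_m\le(2\beta_{\max}+1)P_m\varepsilon$ and the subtracted term — and with it the room needed to absorb the extra $\beta_{\max}\varepsilon$ in the per-block step — is lost. Routing the argument through the per-block bounds, so that the telescoping $\til{\gamma}_l P_{l-1}=P_l-P_{l-1}$ produces the constant directly, is what makes everything line up; the provided Lemma \ref{lem:sum_prod} gives an alternative handle on such sums but is not needed once this identity is in hand.
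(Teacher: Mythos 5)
Your proof is correct, and while it starts from the same combined recursion as the paper (your one-step inequality is exactly the paper's \eqref{eq:induction3}), the inductive bookkeeping is genuinely different. The paper inducts on the partial sums $\delta_m=\sum_{l\leq m}\varepsilon_l^{(2)}$, carrying the exact double-sum expression $\bigl[\sum_{k=2}^m\prod_{l=k}^m(\til{\gamma}_l+1)\beta_{\max}+\prod_{l=2}^m(\til{\gamma}_l+1)\bigr]\varepsilon$ through the induction and only at the end invoking Lemma~\ref{lem:sum_prod} (a geometric-type bound on sums of suffix products, using $\min_l\til{\gamma}_l\geq 1.1$) to collapse it to the stated form; the per-block bound is then extracted by feeding the $\delta_{m-1}$ bound back into the recursion. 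You instead induct directly on the per-block bound and exploit the telescoping identity $\til{\gamma}_lP_{l-1}=P_l-P_{l-1}$ with $P_m=\prod_{l=2}^m(\til{\gamma}_l+1)$, which produces the sharp partial-sum constant $-2\beta_{\max}\varepsilon$ for free and renders Lemma~\ref{lem:sum_prod} unnecessary; the leftover term $\beta_{\max}\varepsilon(1-\til{\gamma}_m)$ is nonpositive because $\gamma_m\leq 1$ forces $\til{\gamma}_m\geq 1.1$, which is the same place the paper uses that hypothesis. Your route is shorter, avoids the auxiliary lemma, and yields a marginally sharper partial-sum bound. One small correction to your closing commentary: the direct recursion $S_m\leq(\til{\gamma}_m+1)(S_{m-1}+\beta_{\max}\varepsilon)$ is not actually a dead end --- with the ansatz $S_m\leq(2\beta_{\max}+1)P_m\varepsilon-2\beta_{\max}\varepsilon$ it closes as well, since the required inequality $(\til{\gamma}_m+1)\beta_{\max}\leq 2\beta_{\max}\til{\gamma}_m$ again reduces to $\til{\gamma}_m\geq 1$; what is true is that the weaker ansatz with $-\beta_{\max}\varepsilon$, as literally stated in the claim, does not close, so some sharpening is needed whichever variable one inducts on.
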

We will assume this claim to be true for the rest of the arguments while leaving its proof in the end. Combining Claim \ref{claim:induction_result} and \eqref{eq:induction2}, for $m\geq 2$, we have
\begin{equation}\label{eq:G_m_err}
    \begin{split}
       \|G_m-G_m^*\| = \varepsilon_m^{(3)} \leq &\til{\gamma}_m\beta_{\max}\varepsilon + \tilde{\gamma}_m\sum_{l=1}^{m-1}\varepsilon_{l}^{(2)}\\
       \leq &(2\beta_{\max}+1)\til{\gamma}_m\prod_{l=2}^{m-1}(\til{\gamma}_l+1)\varepsilon.
    \end{split}
\end{equation}
Recall \eqref{eq:V_H_fullerr}, the estimation error for the whole $\til{V}$ can be bounded as follows:
\begin{equation}\label{eq:til_V_err}
    \begin{split}
        \|\til{V}-\til{V}^*\|&\leq \sum_{m=1}^M\varepsilon_m^{(2)}\\
        &\leq (2\beta_{\max} + 1)\prod_{l=2}^{M}(\til{\gamma}_l+1)\varepsilon.
    \end{split}
\end{equation}
Now we turn to the left singular space: $\|\til{H}-\til{H}^*\|$. We still consider the contributed error $\|\til{H}_{T_m,:}-\til{H}^*_{T_m,:}\|$ of each block first, which then requires an error bound for $\|G_m^{-1}-G_m^{*-1}\|$. Recall our bound for $\|G_m-G_m^*\|$ in \eqref{eq:G_m_err}. 
We can further show that
\begin{equation*}
\begin{split}
    \|G_m-G_m^*\|&\leq 2\sqrt{2}(2\beta_{\max}+1)\prod_{l=2}^m\left(\frac{1.1}{\gamma_l}+1\right)\max_m\frac{\|E_{T_m,I_m}\|}{\sigma_r(X^*_{T_m,I_m})}\\
    &\leq  4\sqrt{2}(\beta_{\max}+1)\prod_{l=2}^M\left(\frac{1.1}{\gamma_l}+1\right)\max_m\frac{\|E_{T_m,I_m}\|}{\sigma_r(X^*_{T_m,I_m})}\\
    &\leq \frac{1}{2}\beta_{\min},
\end{split}
\end{equation*}
where the first inequality is due to \eqref{eq:G_m_err}, the definition $\varepsilon = \max_m\epsilon_m^{(1)}$ in Claim \ref{claim:induction_result}, \eqref{eq:block_V_err}, and $\til{\gamma}_m=\frac{1.1}{\gamma_m}$; the third inequality makes use of Assumption \ref{assump:block_eigmin} with a sufficiently large constant $C>0$. Here, $\beta_{\min}=\min_{1\leq m\leq M}\sigma_r(B_m^{*-1}B_1^*)
\leq \sigma_r(G_m^*)$. Then through some calculations, one can show that
\begin{equation*}
\begin{split}
    \|G_m^{-1}-G_m^{*-1}\|&=\|G_m^{-1}(G_m^*-G_m)G_m^{*-1}\|\\
    &\leq \frac{\|G_m^*-G_m\|}{\sigma_r(G_m)\sigma_r(G_m^*)}\\
    &\leq 2\beta_{\min}^{-2}\|G_m-G_m^*\|\\
    &\leq \beta_{\min}^{-1},
\end{split}
\end{equation*}
where the third inequality utilizes the fact that $\sigma_r(G_m)\geq \sigma_r(G_m^*) - \|G_m-G_m^*\|\geq \frac{1}{2}\sigma_r(G_m^*)$, and the last inequality is due to $\|G_m-G_m^*\|\leq \frac{1}{2}\beta_{\min}$. Therefore, we can plug in the error bound above into \eqref{eq:H_block_err}:
\begin{equation*}
    \begin{split}
        \|\til{H}_{T_m,:}-\til{H}_{T_m,:}^*\|
        &\leq \|U_m^*\Lambda_m^*\|\|G_m^{-1}-G_m^{*-1}\| + 2\beta_{\min}^{-1}\|\h{H}_m\h{W}_m - U_m^*\Lambda_m^*\|\\
        &\leq \frac{2\|\Lambda_m^*\|}{\beta_{\min}^2}\left(\|G_m-G_m^*\| + \frac{4\beta_{\min}\|E_{T_m,I_m}\|}{\sigma_r(X^*_{T_m,I_m})}\right)\\
        &\leq \frac{C\|\Lambda_m^*\|}{\beta_{\min}^2}(\beta_{\max}+1)\til{\gamma}_m\prod_{l=2}^{m-2}(\til{\gamma}_l+1)\max_m\frac{\|E_{T_m,I_m}\|}{\sigma_r(X^*_{T_m,I_m})},
    \end{split}
\end{equation*}
where the second inequality is due to \eqref{eq:H_m_err}; the third inequality is due to \eqref{eq:G_m_err} and the fact that $\varepsilon \leq 2\sqrt{2}\max_m\frac{\|E_{T_m,I_m}\|}{\sigma_r(X^*_{T_m,I_m})}$. Combining all the bounds for $\|\til{H}_{T_m,:}-\til{H}_{T_m,:}^*\|$ over $m=1,\dots,M$, we have
\begin{equation}\label{eq:til_H_err}
    \begin{split}
        \|\til{H}-\til{H}^*\|&\leq \frac{C\max_m\|\Lambda_m^*\|}{\beta_{\min}^2}(\beta_{\max}+1)\sqrt{\sum_{m=1}^M\prod_{l=2}^{m}(\til{\gamma}_l+1)^{2}}\max_m\frac{\|E_{T_m,I_m}\|}{\sigma_r(X^*_{T_m,I_m})}\\
        &\leq C\beta_{\min}^{-2}(\beta_{\max}+1)\prod_{l=2}^M\left(\frac{1.1}{\gamma_l}+1\right)\max_m\|\Lambda_m^*\|\max_m\frac{\|E_{T_m,I_m}\|}{\sigma_r(X^*_{T_m,I_m})}.
    \end{split}
\end{equation}
In the last line above, we have utilized the fact that 
\begin{equation*}
    \begin{split}
        &\sqrt{\sum_{m=1}^M\prod_{l=2}^{m}(\til{\gamma}_l+1)^{2}}\\
        \leq &\frac{\prod_{l=2}^M(\til{\gamma}_l + 1)^2(\min_l\til{\gamma}_l + 1)^2 - 1}{(\min_l\til{\gamma}_l+1)^2 - 1}\\
        \leq &C\prod_{l=2}^M(\til{\gamma}_l + 1)^2,
    \end{split}
\end{equation*}
where the second line is due to Lemma \ref{lem:sum_prod}, and the last line is due to $\min_l\til{\gamma}_l\geq 1.1$.  

Now we would like to combine \eqref{eq:til_V_err} and \eqref{eq:til_H_err}. Note that Assumption \ref{assump:block_eigmin} suggests the following:
\begin{equation*}
    \begin{split}
        \|\til{V}-\til{V}^*\|&\leq (2\beta_{\max}+1)\prod_{l=2}^M(\til{\gamma}_l+1)\varepsilon\\
        &\leq 2\sqrt{2}(2\beta_{\max}+1)\prod_{l=2}^M\left(\frac{1.1}{\gamma_l}+1\right)\max_m\frac{\|E_{T_m,I_m}\|}{\sigma_r(X^*_{T_m,I_m})}\\
        &\leq C,
    \end{split}
\end{equation*}
where the second inequality is due to the fact that $\varepsilon \leq 2\sqrt{2}\max_m\frac{\|E_{T_m,I_m}\|}{\sigma_r(X^*_{T_m,I_m})}$, and the last line is due to Assumption \ref{assump:block_eigmin}.
Finally, we also note that the spectral norm of $\til{V}^*$ and $\til{H}^*$ can be bounded as follows:
\begin{equation*}
\begin{split}
    \|\til{V}^*\| &= \|V^*B_1^*\|\leq \|B_1^*\|=\|\Lambda^*U_{T_1,:}^{*\top}U_1^*\Lambda_1^{*-1}\|\leq \frac{\|X^*\|}{\sigma_r(\Lambda^*_1)},\\
    \|\til{H}^*\| &=\|U^*\Lambda^*(B_1^{*-1})^\top\|\leq \|\Lambda^*\|\|B_1^{*-1}\|=\|\Lambda^*\|\|V_1^{*\top}V_{I_1,:}^*\|\leq \|X^*\|,
\end{split}
\end{equation*}
where we have applied \eqref{eq:V_B} and \eqref{eq:U_B}, which suggest $B_1^* = \Lambda^*U_{T_1,:}^{*\top}U_1^*\Lambda_1^{*-1}$, $B_1^{*-1} = V_1^{*\top}V_{I_1,:}^*$.
Therefore, we can now bound $\|\til{H}\til{V}-X^*\|$ as follows:
\begin{equation*}
    \begin{split}
        \|\til{H}\til{V}-X^*\|&\leq \|\til{H}-\til{H}^*\|(\|\til{V}^*\|+\|\til{V}-\til{V}^*\|)+\|\til{V}-\til{V}^*\|\|\til{H}^*\|\\
        &\leq C\beta_{\min}^{-2}(\beta_{\max}+1)\max_{m}\frac{\|\Lambda_m^*\|}{\sigma_r(\Lambda_1^*)}\prod_{l=2}^M\left(\frac{1.1}{\gamma_l}+1\right)\max_m\frac{\|E_{T_m,I_m}\|}{\sigma_r(X^*_{T_m,I_m})}\|X^*\|\\
        &\leq C\frac{\alpha_{\max}(\beta_{\max}+1)}{\beta_{\min}^{2}}\prod_{l=2}^M\left(\frac{1.1}{\gamma_l}+1\right)\max_m\frac{\|E_{T_m,I_m}\|}{\sigma_r(X^*_{T_m,I_m})}\|X^*\|.
    \end{split}
\end{equation*}

\begin{proof}[Proof of Claim \ref{claim:induction_result}]
We first combine \eqref{eq:induction1} and \eqref{eq:induction2} and obtain the following: for $2\leq m\leq M$,
\begin{equation}\label{eq:induction3}
    \begin{split}
        \varepsilon_m^{(2)}\leq \beta_{\max}(\til{\gamma}_m + 1)\varepsilon_m^{(1)} + \til{\gamma}_m\sum_{l=1}^{m-1}\varepsilon_{l}^{(2)}.
    \end{split}
\end{equation}
Let $\delta_m = \sum_{l=1}^m \varepsilon_l^{(2)}$. Then \eqref{eq:induction3} can be rewritten as 
\begin{equation}\label{eq:induction4}
    \delta_m \leq \beta_{\max}(\til{\gamma}_m + 1)\varepsilon + (\til{\gamma}_m+1)\delta_{m-1}.
\end{equation}
We first note that when $m=1$, $J_m^{(1)}=\emptyset$ and hence $$\varepsilon_1^{(2)} = \|\til{V}_{I_1,:}-\til{V}^*_{I_1,:}\|=\|\h{V}_1\h{W}_1 - V_1^*\|=\varepsilon^{(1)}_1.$$ Immediately, we have $\delta_1 = \varepsilon_1^{(2)}\leq \varepsilon$.

In the following, we will prove by induction that for $m\geq 2$.
\begin{equation}\label{eq:delta_m_bnd}
    \delta_m \leq \left[\sum_{k=2}^m \prod_{l=k}^m(\til{\gamma}_l + 1)\beta_{\max} + \prod_{l=2}^m(\til{\gamma}_l+1)\right]\varepsilon.
\end{equation}
When $m=2$, \eqref{eq:induction4} implies that $\delta_2\leq \beta_{\max}(\til{\gamma}_2 + 1)\varepsilon + (\til{\gamma}_2+1)\varepsilon$, which satisfies \eqref{eq:delta_m_bnd}. Suppose that \eqref{eq:delta_m_bnd} holds for $2\leq m\leq i$, then when $m=i+1$, \eqref{eq:induction4} suggests that
\begin{equation*}
\begin{split}
    \delta_{i+1}\leq &\beta_{\max}(\til{\gamma}_{i+1} + 1)\varepsilon + (\til{\gamma}_{i+1}+1)\delta_{i}\\
    \leq &\beta_{\max}(\til{\gamma}_{i+1} + 1)\varepsilon + (\til{\gamma}_{i+1}+1)\left[\sum_{k=2}^i \prod_{l=k}^i(\til{\gamma}_l + 1)\beta_{\max} + \prod_{l=2}^i(\til{\gamma}_l+1)\right]\varepsilon\\
    \leq &\left[\beta_{\max}(\til{\gamma}_{i+1} + 1) + \sum_{k=2}^i \prod_{l=k}^{i+1}(\til{\gamma}_l + 1)\beta_{\max} + \prod_{l=2}^{i+1}(\til{\gamma}_l+1)\right]\varepsilon\\
    \leq & \left[\sum_{k=2}^{i+1} \prod_{l=k}^{i+1}(\til{\gamma}_l + 1)\beta_{\max} + \prod_{l=2}^{i+1}(\til{\gamma}_l+1)\right]\varepsilon,
\end{split}
\end{equation*}
and hence \eqref{eq:delta_m_bnd} holds for all $2\leq m\leq M$.

Now we combine \eqref{eq:delta_m_bnd} and \eqref{eq:induction3} to show that
\begin{equation}\label{eq:err_m2_bnd}
        \varepsilon_m^{(2)}\leq \left[(\til{\gamma}_m + 1)\beta_{\max} + \til{\gamma}_m \sum_{k=2}^{m-1} \prod_{l=k}^{m-1}(\til{\gamma}_l + 1)\beta_{\max} + \til{\gamma}_m\prod_{l=2}^{m-1}(\til{\gamma}_l+1)\right]\varepsilon.
\end{equation}

We now apply Lemma \ref{lem:sum_prod} to show an upper bound for $\sum_{k=2}^{m} \prod_{l=k}^{m}(\til{\gamma}_l + 1)$, which can help simplify both \eqref{eq:delta_m_bnd} and \eqref{eq:err_m2_bnd}. Let $\til{\gamma}_{\min} = \min_l\til{\gamma}_l$. Lemma \ref{lem:sum_prod} then implies 
$$
\sum_{k=2}^{m} \prod_{l=k}^{m}(\til{\gamma}_l + 1)\leq \til{\gamma}_{\min}^{-1}\left[\prod_{l=2}^{m}(\til{\gamma}_l + 1)(\til{\gamma}_{\min}+1) - (\til{\gamma}_m + 1)\right].
$$
Since $0<\gamma_l\leq 1$ for all $2\leq l\leq M$, we have $\til{\gamma}_{\min} = \min_l\frac{1.1}{\gamma_l}\geq 1.1$. Therefore, \eqref{eq:delta_m_bnd} and \eqref{eq:err_m2_bnd} can be simplified as follows:
\begin{equation*}
    \begin{split}
        \delta_m \leq &(2\beta_{\max}+1)\prod_{l=2}^{m}(\til{\gamma}_l+1)\varepsilon - \beta_{\max}\varepsilon,\\
        \varepsilon_m^{(2)}\leq &(2\beta_{\max}+1)\til{\gamma}_m\prod_{l=2}^{m-1}(\til{\gamma}_l+1)\varepsilon + (\til{\gamma}_m+1)\beta_{\max}\varepsilon - \frac{(\til{\gamma}_{m-1}+1)\til{\gamma}_m}{\til{\gamma}_{\min}}\beta_{\max}\varepsilon\\
        \leq &(2\beta_{\max}+1)\til{\gamma}_m\prod_{l=2}^{m-1}(\til{\gamma}_l+1)\varepsilon,
    \end{split}
\end{equation*}
where the last line is due to the fact that $\frac{(\til{\gamma}_{m-1}+1)\til{\gamma}_m)}{\til{\gamma}_{\min}}=\frac{\til{\gamma}_{m-1}\til{\gamma}_m}{\til{\gamma}_{\min}} + \frac{\til{\gamma}_m}{\til{\gamma}_{\min}}\geq \til{\gamma}_m + 1$. Therefore, the proof of Claim \ref{claim:induction_result} is complete.
\end{proof}
\subsection{Proofs of Auxiliary Lemmas}

\begin{proof}[Proof of Lemma \ref{lem:spectral_cluster_err}]
    Our proof follows very similar arguments to \cite{loffler2021optimality}, but we still state most steps for clarity and completeness. Since $\widetilde{H}\widetilde{V}^\top$ and $X^*$ are both of rank $r$, the spectral norm error bound $\|\widetilde{H}\widetilde{V}^\top-X^*\|_2\leq \varepsilon$ implies the following Frobenious error bound:
    \begin{equation*}
        \|\widetilde{H}\widetilde{V}^\top-X^*\|_F\leq \sqrt{2r}\varepsilon.
    \end{equation*}
    Let $\widehat{X}\in \mathbb{R}^{p\times n}$ be the estimated center matrix with each column representing the estimated center mean of the corresponding sample $i$: $\widehat{X}_{:,i} = \widetilde{U}\widehat{c}_{z_i}$, where $\widehat{c}_j\in \mathbb{R}^r$ and $\widehat{z}_i\in [K]$ are the global minimizer of the k-means step in Algorithm \ref{alg:bsvdcq}. By the definition of the k-means optimization problem, one has
    \begin{equation*}
    \begin{split}
        \|\widetilde{H}\widetilde{V}^\top - \widehat{X}\|_F = \|\widetilde{U}^\top \widetilde{H}\widetilde{V}^\top - \widetilde{U}^\top \widehat{X}\|_F\leq &\|\widetilde{U}^\top \widetilde{H}\widetilde{V}^\top - \widetilde{U}^\top X^*\|_F\\ 
        = &\|\widetilde{H}\widetilde{V}^\top - X^*\|_F,
    \end{split}
    \end{equation*}
    which then implies $\|\widehat{X} - X^*\|_F\leq 2\sqrt{2r}\varepsilon$. Define the sample set $S = \{i\in[n]: \|\widehat{X}_{:,i}-X^*_{:,i}\|_2>\frac{\Delta}{2}\}$. The Frobenious norm error bound for $\widehat{X} - X^*$ then implies an upper bound for the cardinality of set $S$:
    \begin{equation*}
        |S|\leq \frac{\|\widehat{X} - X^*\|_F^2}{\Delta^2/4} \leq \frac{32r\varepsilon^2}{\Delta^2}.
    \end{equation*}
    In addition, since we have assumed $\varepsilon<\sqrt{\frac{\beta n}{rK}}\frac{\Delta}{8}$, the inequality above also implies $|S|\leq \frac{\beta n}{2K}$. Following the same arguments as in the proof of Lemma 4.2 in \cite{loffler2021optimality}, we then have $\ell(\widehat{z},z^*)\leq \frac{1}{n}|S|\leq \frac{32r\varepsilon^2}{n\Delta^2}$, and $\min_{\phi\in \Phi}\max_{1\leq j\leq K}\|\widehat{U}\widehat{c}_j - \theta^*_{\phi(j)}\|_2^2 \leq \frac{16Kr\varepsilon^2}{\beta n}$.
\end{proof}
\begin{proof}[Proof of Lemma \ref{lem:OLS_err}]
    Since we have assumed $\sigma_r(Z^*)>\varepsilon_1$, by Weyl's inequality, one has $\sigma_r(Z) \geq \sigma_r(Z^*) - \varepsilon_1 >0$, which suggests that $Z^\top Z\in \mathbb{R}^{r\times r}$ is invertible. Now we use some rearrangement of the least squares formulation of $G$:
    \begin{equation*}
    \begin{split}
        G &= (Z^\top Z)^{-1}Z^\top Y\\
        &=(Z^\top Z)^{-1}Z^\top (Z^*G^* + Y-Y^*)\\
        &= G^* +(Z^\top Z)^{-1}Z^\top [Y-Y^*-(Z-Z^*)G^*],
    \end{split}
    \end{equation*}
    Thus we only need to bound $\|(Z^\top Z)^{-1}Z^\top [(Z-Z^*)G^* + Y-Y^*]\|$. Recognizing the fact that $\|(Z^\top Z)^{-1}Z^\top\|= \sigma_r^{-1}(Z)\leq (\sigma_r(Z^*)-\varepsilon_1)^{-1}$, one can show that
    \begin{equation*}
        \begin{split}
            &\|(Z^\top Z)^{-1}Z^\top [(Z-Z^*)G^* + Y-Y^*]\|\\
            \leq&\frac{\|(Z-Z^*)G^* + Y-Y^*\|}{\sigma_r(Z^*) - \varepsilon_1}\\
            \leq&\frac{\|G^*\|\varepsilon_1+\varepsilon_2}{\sigma_r(Z^*) - \varepsilon_1}.
        \end{split}
    \end{equation*}
    The proof is now complete.
\end{proof}
\subsection{Proofs of Corollary \ref{cor:main_delta} and Theorem \ref{thm:main2}}
\begin{proof}[Proof of Corollary \ref{cor:main_delta}]
    We first establish a connection between the separation distance and the blockwise SNR in terms of minimum singular value. In particular, for each block $1\leq m\leq M$, $X_{T_m,I_m}^*=\Theta^*_{T_m,:}F^{*\top}_{I_m,:}$, where $\Theta^*_{T_m,:}\in \bR^{p_m\times K}$, $F^{*}_{I_m,:}\in \bR^{n_m\times K}$. For any $u\in \bR^{K}$ with $\|u\|_2=1$, we have $\|F^{*}_{I_m,:}u\|_2 = \sqrt{\sum_{j=1}^Ku_j^2n_{j,m}}\geq \sqrt{\min_j n_{j,m}}$, and hence $\sigma_{\min}(F^*_{I_m,:}) \geq \sqrt{\min_j n_{j,m}}$. In addition, recall that $\Delta_m = \min_{j\neq j'}\|\theta_{T_m,j}-\theta_{T_m,j'}\|_2$. Let $(j_1,j_2) = \argmin_{j_1\neq j_2}\|\theta_{T_m,j_1}-\theta_{T_m,j_2}\|_2$, $u = \frac{1}{\sqrt{2}}(e_{j_1} - e_{j_2})$, where $e_{j_1},\,e_{j_2}$ are canonical vectors. Then we have $\|u\|_2=1$, and $\|\Theta^*_{T_m,:}u\|_2 = \frac{1}{\sqrt{2}}\Delta_m$, which implies $\|\Theta^*_{T_m,:}\|\geq \frac{1}{\sqrt{2}}\Delta_m$. Therefore, we can lower bound the $r$th singular value of $X_{T_m,I_m}^*$ as follows:
    \begin{equation*}
        \begin{split}
            \sigma_r(X^*_{T_m,I_m})\geq &\sigma_{\min}(F_{I_m,:}^*) \sigma_r(\Theta^*_{T_m,:})\\
            \geq &\sqrt{\min_{j}n_{j,m}}\|\Theta^*_{T_m,:}\|\kappa_r(\Theta^*_{T_m,:})\\
            \geq&\sqrt{\frac{\min_{j}n_{j,m}}{2}}\frac{\Delta_m}{\kappa(\Theta^*_{T_m,:})}.
        \end{split}
    \end{equation*}
    Thus Assumption \ref{assump:block_separation} implies Assumption \ref{assump:block_eigmin}. Combining the lower bound above with Theorem \ref{thm:main} leads to Theorem \ref{thm:main2}.
\end{proof}
\begin{proof}[Proof of Theorem \ref{thm:main2}]
    Recall that the noise matrix $E = A_rZA_c^\top$, where $A_r\in \bR^{n\times n'}$, $A_c\in \bR^{p\times p'}$, $Z\in \bR^{n'\times p'}$ has independent zero-mean sub-Gaussian($\sigma$) entries. We can first bound the spectral norm of each submatrix $E_{T_m,I_m} = (A_r)_{T_m,:}Z(A_c)_{I_m,:}^\top$ by a function of $\sigma$, spectral norms $\|A_r\|$, $\|A_c\|$, and its dimensions. In particular, we apply a technical lemma from \cite{zhou2022optimal} (Lemma 8.2), with $A=\frac{1}{\|(A_r)_{T_m,:}\|}(A_r)_{T_m,:}$, $B=\frac{1}{\|(A_c)_{I_m,:}\|}(A_c)_{I_m,:}^\top$, $t=C(n_m + p_m)$ for a sufficiently large universal constant. Then one can write
    \begin{equation*}
    \begin{split}
        \bbP(\|E_{T_m,I_m}\|\geq &2\sigma\|(A_r)_{T_m,:}\|\|(A_c)_{I_m,:}\|\sqrt{2p_m + n_m})\\
        \leq &2\cdot 5^{n_m}\exp\{-c_1\min\{\frac{(n_m+p_m)^2}{p_m},n_m+p_m\}\}\\
        \leq &2\exp\{-c_2(n_m+p_m)\}.
    \end{split}
    \end{equation*}
    Therefore, with probability at least $1-2\exp\{-c(n_m+p_m)\}$, $\|E_{T_m,I_m}\|\leq C\sigma\|A_r\|\|A_c\|(\sqrt{p_m}+\sqrt{n_m})$ for some universal constants $c,\,C>0$. We then note that \eqref{eq:SNR_subGaussian} together with Assumption \ref{assump:BndHeteroBalanced} implies Assumption \ref{assump:block_separation}; the misclustering rate in \eqref{eq:cluster_err_separation} and the bound above for $\|E_{T_m,I_m}\|$ immediately implies Theorem \ref{thm:main2}.
\end{proof}

\begin{figure}[b]
\centering
    \includegraphics[width=0.8\linewidth]{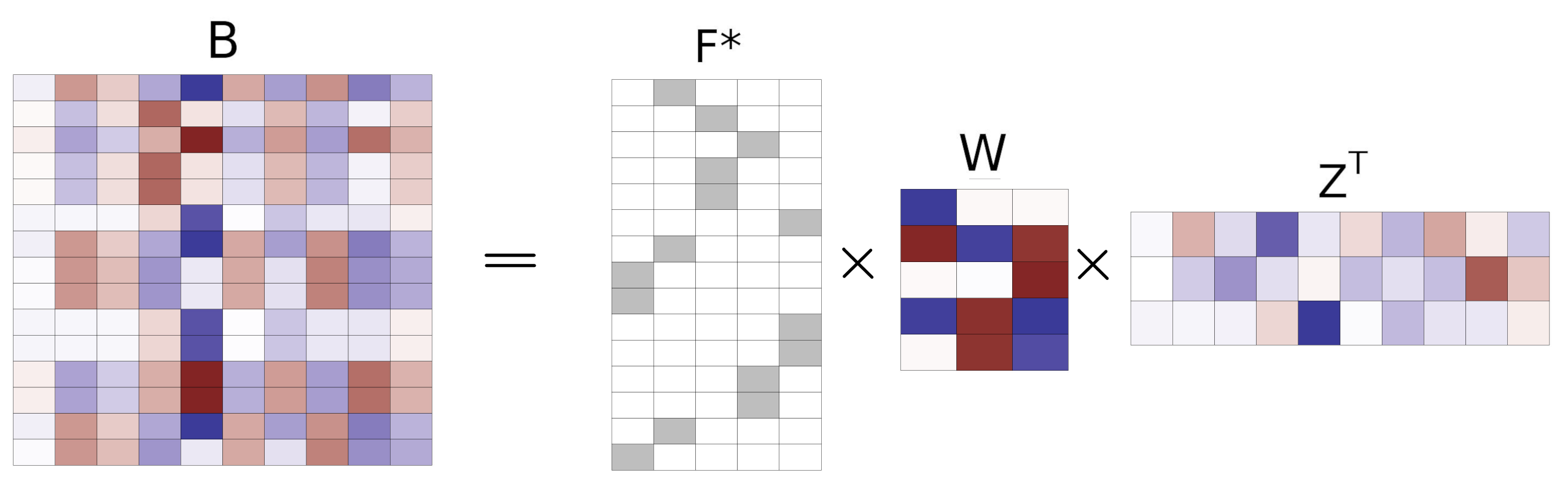}
        \caption{Illustration of data generation process for Gaussian mixture model simulations.}
    \label{fig:dg}
\end{figure}

\subsection{Proof of Theorem \ref{thm:datadriven}}\label{sec:datadrivenproof}
We first prove that when Algorithm \ref{alg:bsvdcq} is applied on data patches $\{X_{T_{\hat{\pi}_m}, I_{\hat{\pi}_m}}\}_{m=1}^M$ with data-driven patch ordering $\hat{\pi}$, and when the factor $\prod_{m=2}^M\left(1.1\gamma_m^{-1}+1\right)$ in Assumption \ref{assump:block_eigmin} and Theorem \ref{thm:main} is substituted by $\prod_{m=2}^M(1.1\gamma_m^{-1}(\pi^*) + 1)$, Theorem \ref{thm:main} still holds. In fact, it suffices to show that the matching error factor for the data-driven selected ordering $\hat{\pi}$ satisfies
\begin{equation}\label{eq:datadriven_factorbnd}
    \prod_{m=2}^M\left(1.1\gamma_m^{-1}(\hat{\pi}) + 1\right)\leq e^2\prod_{m=2}^M \left(1.1\gamma_m^{-1}(\pi^*) + 1\right);
\end{equation}
then we can simply invoke the original Theorem \ref{thm:main} on data patches $\{X_{T_{\hat{\pi}(m)}, I_{\hat{\pi}(m)}}\}_{m=1}^M$ to finish the proof.

Let $\hat{S}(\pi) = \prod_{m=2}^M s(\pi(m), \cup_{l<m}\pi(l))$ be the finite-sample score for a given ordering $\pi$; and let $S^*(\pi) =  \prod_{m=2}^M (1.1\gamma_m^{-1}(\pi) + 1)^{-1}$ be its population counterpart. To show \eqref{eq:datadriven_factorbnd}, it might be tempting to establish an error bound between the finite sample score function $\hat{S}(\pi)$ and its population counterpart $S^*(\pi)$ for all $\pi$. However, there possibly exist some patch orderings $\pi$ such that the overlapping set $J_m^{(1)}(\pi)$ is tiny; and for these terrible orderings, $\hat{S}(\pi)$ is not close to $S^*(\pi)$ at all. Therefore, one key step in our proof is to show that the selected $\hat{\pi}$ cannot be too bad under the given assumptions. For notational convenience, we let $\overline{\mathrm{SNR}}(m,\pi) = \frac{\big\|X^*_{T_{\pi(m)}, I_{\pi(m)}}\big\|}{\big\|E_{T_{\pi(m)}, I_{\pi(m)}}\big\|}$, $\underline{\mathrm{SNR}}(m,\pi) = \frac{\sigma_r\big(X^*_{T_{\pi(m)},J_m^{(1)}(\pi)}\big)}{\big\|E_{T_{\pi(m)},J_m^{(1)}(\pi)}\big\|}$. We first note that Assumption \ref{assump:block_eigmin} on the oracle patch ordering implies that
\begin{equation}\label{eq:assump2'}
    \frac{\sigma_r(X^*_{T_m,I_m})}{\|E_{T_m,I_m}\|}\geq \frac{C\alpha_{\max}(\beta_{\max}+1)}{\beta_{\min}^2}\prod_{l=2}^M\left(1.1\gamma_l^{-1}(\pi^*)+1\right)\sqrt{\frac{rK\max_j n_j}{\min_j n_j}},
\end{equation}
which then implies that for any $1\leq m\leq M$, 
\begin{equation}\label{eq:uppSNRbnd}
\overline{\mathrm{SNR}}(m,\pi^*) \geq 2C\times 2.1^{M-1}\sqrt{rK}\geq 4.4CM - 2.9C,    
\end{equation}
where we have utilized the fact that $\alpha_{\max},\,\beta_{\max}\geq 1$, $0<\beta_{\min},\gamma_m(\pi^*)\leq 1$, $K\geq 2$, and the second inequality holds due to the property of the exponential function of $M\geq 2$. Similarly, we have
\begin{equation}\label{eq:lowerSNRbnd}
\underline{\mathrm{SNR}}(m,\pi^*) \geq \gamma_m(\pi^*)\overline{\mathrm{SNR}}(m,\pi^*)\geq 2.2C\times 2.1^{M-2}\sqrt{rK}\geq 2.3CM-1.6C.
\end{equation}
By Weyl's inequality, we have 
\begin{equation}\label{eq:block_SNRbnd}
    \begin{split}
    	\frac{\|X_{T_{\pi^*(m)}, I_{\pi^*(m)}}\|}{\sigma_r(X_{T_{\pi^*(m)},J_m^{(1)}(\pi^*)})}\leq &\frac{\|X^*_{T_{\pi^*(m)}, I_{\pi^*(m)}}\|+\|E_{T_{\pi^*(m)}, I_{\pi^*(m)}}\|}{\sigma_r(X^*_{T_{\pi^*(m)},J_m^{(1)}(\pi^*)})-\|E_{T_{\pi^*(m)},J_m^{(1)}(\pi^*)}\|}\\
     \leq &\frac{\|X^*_{T_{\pi^*(m)}, I_{\pi^*(m)}}\|}{\sigma_r(X^*_{T_{\pi^*(m)},J_m^{(1)}(\pi^*)})}\left(1+\frac{1}{\overline{\mathrm{SNR}}(m,\pi^*)}\right)\left(1-\frac{1}{\underline{\mathrm{SNR}}(m,\pi^*)}\right)^{-1},
    \end{split}
\end{equation}
and the inflated error factor can be further written as 
\begin{equation}\label{eq:inflationFactorOracle}
    \begin{split}
        \left(1+\frac{1}{\overline{\mathrm{SNR}}(m,\pi^*)}\right)\left(1-\frac{1}{\underline{\mathrm{SNR}}(m,\pi^*)}\right)^{-1}=1 + \left(\frac{1-\overline{\mathrm{SNR}}^{-1}(m,\hat{\pi})}{\overline{\mathrm{SNR}}^{-1}(m,\pi^*)+\underline{\mathrm{SNR}}^{-1}(m,\pi^*)}\right)^{-1}.
    \end{split}
\end{equation}
By setting the constant $C>0$ sufficiently large in Assumption \ref{assump:block_eigmin}, we can guarantee that $\overline{\mathrm{SNR}}(m,\pi^*),\,\underline{\mathrm{SNR}}(m,\pi^*)\geq 10$, and hence \eqref{eq:uppSNRbnd} and \eqref{eq:lowerSNRbnd} implies
\begin{equation*}
    \begin{split}
        \left(1+\frac{1}{\overline{\mathrm{SNR}}(m,\pi^*)}\right)\left(1-\frac{1}{\underline{\mathrm{SNR}}(m,\pi^*)}\right)^{-1}
        \leq &1 + \frac{1}{M-1}.
    \end{split}
\end{equation*}
Therefore, the finite score of the oracle ordering satisfies
\begin{equation*}
    \begin{split}
       \hat{S}(\pi^*) \geq \prod_{m=2}^M\left(\frac{1.1\|X^*_{T_{\pi^*(m)}, I_{\pi^*(m)}}\|}{\sigma_r(X^*_{T_{\pi^*(m)},J_m^{(1)}(\pi^*)})} + 1\right)^{-1}\big(1+\frac{1}{M-1}\big)^{M-1} \leq e^{-1}\prod_{m=2}^M(1.1\gamma_m^{-1}(\pi^*)+ 1)^{-1}.
    \end{split}
\end{equation*}
Since the selected patch ordering maximizes the overlapping score function, we have 
\begin{equation}\label{eq:pihat_helper1}
	\begin{split}
		\hat{S}(\hat{\pi})\geq \hat{S}(\pi^*)\geq e^{-1}\prod_{m=2}^M(1.1\gamma_m^{-1}(\pi^*)+ 1)^{-1}.
	\end{split}
\end{equation}
As mentioned earlier, in order to show \eqref{eq:datadriven_factorbnd}, we would like to establish a connection between $\hat{S}(\hat{\pi})$ and $S^*(\hat{\pi})$; to achieve this, we also need to ensure sufficiently large $\underline{\mathrm{SNR}}(m,\hat{\pi})$. We plug in the definition of $\hat{S}(\hat{\pi})$ into \eqref{eq:pihat_helper1} and apply Assumption \ref{assump:block_eigmin}:
\begin{equation}\label{eq:pihat_helper2}
	\begin{split}
	\hat{S}(\hat{\pi}) = \prod_{l=2}^M\left(\frac{1.1\|X_{T_{\hat{\pi}(l)},I_{\hat{\pi}(l)}}\|}{\sigma_r(X_{T_{\hat{\pi}(l)},J_l^{(1)}(\hat{\pi})})}+1\right)^{-1}\geq \frac{2C\sqrt{rK}\|E_{T_m,I_m}\|}{\sigma_r(X^*_{T_m,I_m})},
	\end{split}
\end{equation}
which holds for all $1\leq m\leq M$. Since $\frac{\|X_{T_{\hat{\pi}(l)},I_{\hat{\pi}(l)}}\|}{\sigma_r(X_{T_{\hat{\pi}(l)},J_l^{(1)}(\hat{\pi})})}\geq 1$ for all $2\leq l\leq M$, \eqref{eq:pihat_helper2} implies that
\begin{equation*}
	\begin{split}
	&\frac{1.1\|X_{T_{\hat{\pi}(m)},I_{\hat{\pi}(m)}}\|}{\sigma_r(X_{T_{\hat{\pi}(m)},J_m^{(1)}(\hat{\pi})})}+1\leq \frac{\sigma_r(X^*_{T_{\hat{\pi}(m)},I_{\hat{\pi}(m)}})}{2C\times 2.1^{M-2}\sqrt{rK}\|E_{T_{\hat{\pi}(m)},I_{\hat{\pi}(m)}}\|},\\
	\Rightarrow &\frac{1.1\left(\|X^*_{T_{\hat{\pi}(m)},I_{\hat{\pi}(m)}}\| - \|E_{T_{\hat{\pi}(m)},I_{\hat{\pi}(m)}}\|\right)}{\sigma_r(X_{T_{\hat{\pi}(m)},J_m^{(1)}(\hat{\pi})})}\leq \frac{\sigma_r(X^*_{T_{\hat{\pi}(m)},I_{\hat{\pi}(m)}}) - \|E_{T_{\hat{\pi}(m)},I_{\hat{\pi}(m)}}\|}{2C\times 2.1^{M-2}\sqrt{rK}\|E_{T_{\hat{\pi}(m)},I_{\hat{\pi}(m)}}\|},\\
   \Rightarrow &\sigma_r(X_{T_{\hat{\pi}(m)},J_m^{(1)}(\hat{\pi})})\geq 2.2C\times 2.1^{M-2}\sqrt{rK}\|E_{T_{\hat{\pi}(m)},I_{\hat{\pi}(m)}}\|,\\
   \Rightarrow &\sigma_r(X^*_{T_{\hat{\pi}(m)},J_m^{(1)}(\hat{\pi})})\geq \left(2.2C\times 2.1^{M-2}\sqrt{rK} - 1\right)\|E_{T_{\hat{\pi}(m)},J_m^{(1)}(\hat{\pi})}\|,
	\end{split}
\end{equation*}
where we have utilized $M\geq 2$ in the second line, and the last line is due to the Weyl's inequality. Therefore, we have $\underline{\mathrm{SNR}}(m,\hat{\pi})\geq 2.2\sqrt{2}C\times 2.1^{M-2} - 1\geq 2.3CM - 1.5C - 1$. Since $\overline{\mathrm{SNR}}(m,\hat{\pi})$ also satisfies the same upper bound as $\overline{\mathrm{SNR}}(m,\pi^*)$ in \eqref{eq:uppSNRbnd}, we can follow similar arguments to bounding \eqref{eq:inflationFactorOracle} and obtain:
\begin{equation}\label{eq:inflationFactorTuned}
    \left(1+\frac{1}{\overline{\mathrm{SNR}}(m,\hat{\pi})}\right)\left(1-\frac{1}{\underline{\mathrm{SNR}}(m,\hat{\pi})}\right)^{-1}
        \leq 1 + \frac{1}{M-1}.
\end{equation}
Therefore, we also have
\begin{equation*}
\begin{split}
    S^*(\hat{\pi}) =&\prod_{m=2}^M\left(\frac{1.1\|X^*_{T_{\pi^*(m)}, I_{\pi^*(m)}}\|}{\sigma_r(X^*_{T_{\pi^*(m)},J_m^{(1)}(\pi^*)})} + 1\right)^{-1}\\
    \geq &\prod_{m=2}^M\left(\frac{1.1\|X_{T_{\pi^*(m)}, I_{\pi^*(m)}}\|}{\sigma_r(X_{T_{\pi^*(m)},J_m^{(1)}(\pi^*)})} + 1\right)^{-1}\left(1+\frac{1}{\overline{\mathrm{SNR}}(m,\hat{\pi})}\right)^{-1}\left(1-\frac{1}{\underline{\mathrm{SNR}}(m,\hat{\pi})}\right)\\
    \geq &\hat{S}(\hat{\pi})\left(1 + \frac{1}{M-1}\right)^{-(M-1)}\\
    \geq &e^{-1}\hat{S}(\hat{\pi})\\
    \geq &e^{-2}\prod_{m=2}^M(1.1\gamma_m^{-1}(\pi^*)+ 1)^{-1},
\end{split}
\end{equation*}
where the second line is due to the Weyl's inequality, the third line is due to \eqref{eq:inflationFactorTuned}, and the last two lines are due to \eqref{eq:pihat_helper1}. The definition of $S^*(\pi)$ then immediately implies \eqref{eq:datadriven_factorbnd}, and hence our proof of Theorem \ref{thm:main} is complete. 

For the proof of Corollary \ref{cor:main_delta} and Theorem \ref{thm:main2}, we notice that their proofs do not depend on the patch ordering, and can directly follow given the proof above of \ref{thm:main}.
\section{Additional Simulation Studies} \label{sec:ass}

In this section, we show additional simulation studies from the Gaussian mixture model in Section 3.1, comparing Cluster Quilting with the same comparison incomplete spectral clustering methods that we use in our empirical studies in the main text: MSC-AGL \citep{wen2018incomplete}, IMG \citep{zhao2016incomplete}, DAIMC \citep{hu2019doubly}, and OPIMC \citep{hu2019onepass}. Performances are compared via the Adjusted Rand index with respect to the true underlying cluster labels of the underlying mixture model.

\subsection{Sequential Patch Observed Data, Oracle Tuning} \label{sub:bmgmmo}

We first investigate the efficacy of Cluster Quilting in the sequential patch observed data setting. Here, we generate data using the procedure described in Section \ref{sub:datagen}, with one single cluster centroid matrix used for the entire data set. We then produce the sequential patchwork observation structure by delineating $M$ sequential blocks in the data matrix each containing $o$ rows, with overlap between the rows included in consecutive observation blocks, and $p/M$ non-overlapping columns. These blocks are treated as the observed portion of the data, while the entries of the data matrix which fall outside of the designated blocks are masked. In this step, we ensure that all mixture clusters are represented at least once in each patch. We apply Cluster Quilting and other incomplete spectral clustering methods to the masked data set, with the goal of recovering the true underlying cluster labels of the mixture model. In the simulation study results shown below, we set the baseline parameters of the simulation to be such that the full data sets contain 710 observations of 100 features, with each observation belonging to 1 of 3 true underlying clusters. The cluster centroid matrix is rank 2, with $d = 4.5$ used to generate the elements of the $W$ matrix. From this, 4 blocks are created from the data, each containing 210 observations, which represent the simulated observed portion of the data. We then alter different elements of the simulation parameters in order to study how the performance of each method changes under different conditions; in particular, we present results for changing the number of observed blocks, the number of observations in each block, the number of total features in the data, the number of underlying clusters, the distance between clusters, and the correlation between features. For each simulation setting, we perform 50 repetitions and show the aveestimated cluster labels returned by the Cluster Quilting procedure. Columns are orderedrage and standard deviation of the adjusted Rand index.

\begin{figure}[t]
\begin{center}
    \begin{subfigure}[t]{0.4\linewidth}
    \centering
        \includegraphics[width=\linewidth]{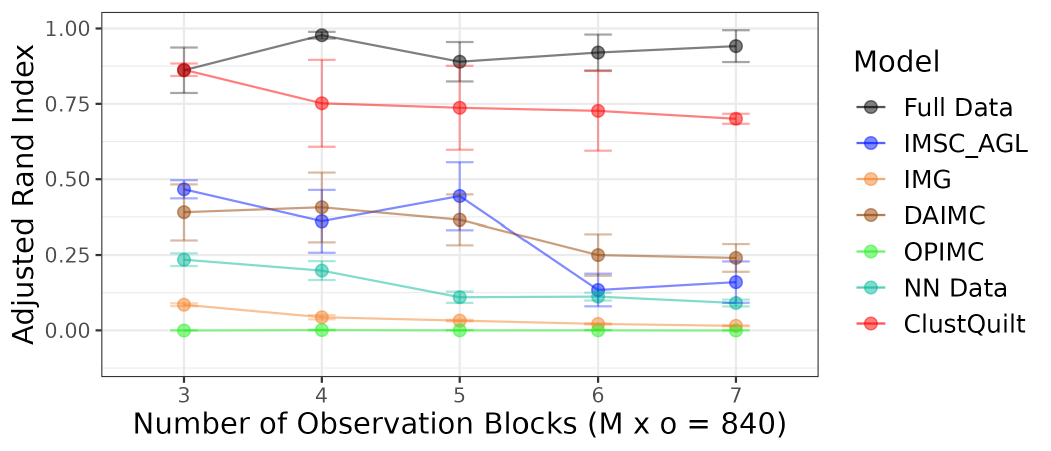}
        \caption{Number of blocks.}
    \label{fig:supp_gmm_m}
    \end{subfigure}%
    \begin{subfigure}[t]{0.4\linewidth}
    \centering
        \includegraphics[width=\linewidth]{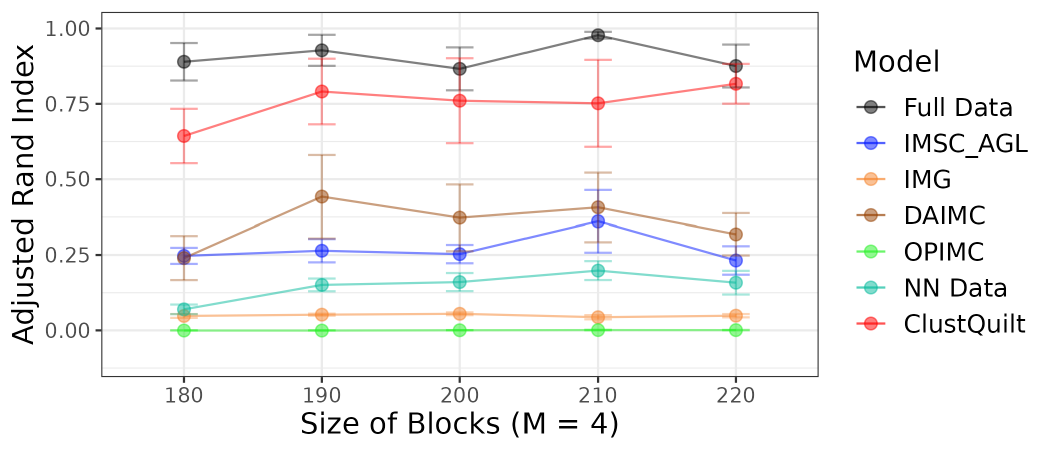}
        \caption{Size of blocks.}
    \label{fig:supp_gmm_o}
    \end{subfigure}
    \begin{subfigure}[t]{0.4\linewidth}
    \centering
        \includegraphics[width=\linewidth]{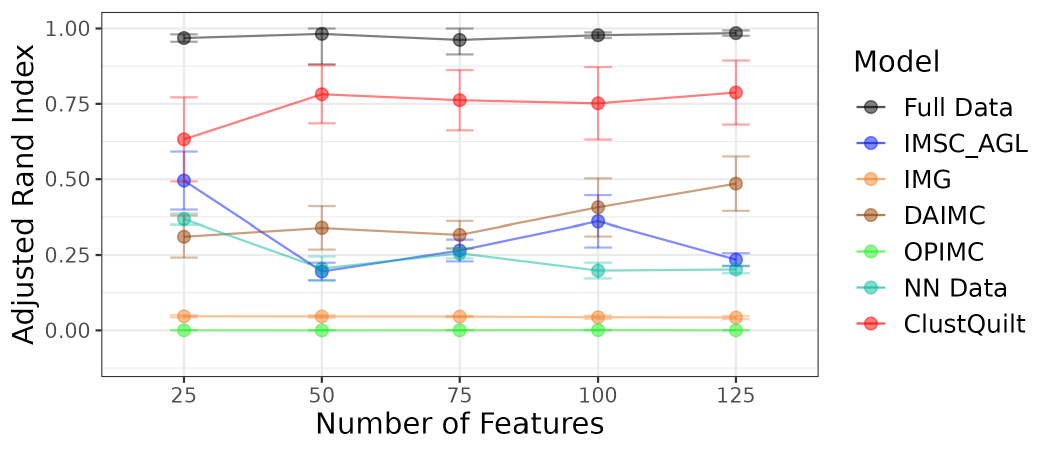}
        \caption{Number of total features.}
    \label{fig:supp_gmm_p}
    \end{subfigure}%
    \begin{subfigure}[t]{0.4\linewidth}
    \centering
        \includegraphics[width=\linewidth]{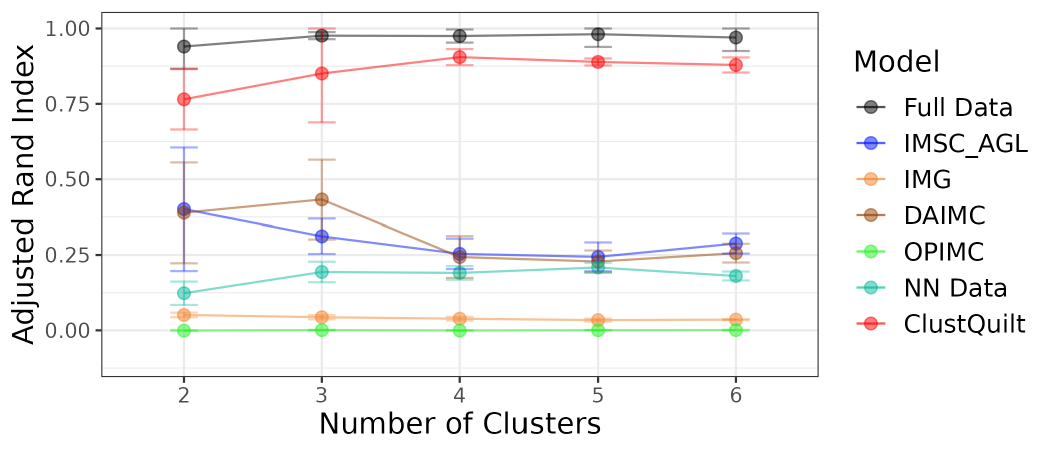}
        \caption{Number of clusters.}
    \label{fig:supp_gmm_k}
    \end{subfigure}
    \begin{subfigure}[t]{0.4\linewidth}
    \centering
        \includegraphics[width=\linewidth]{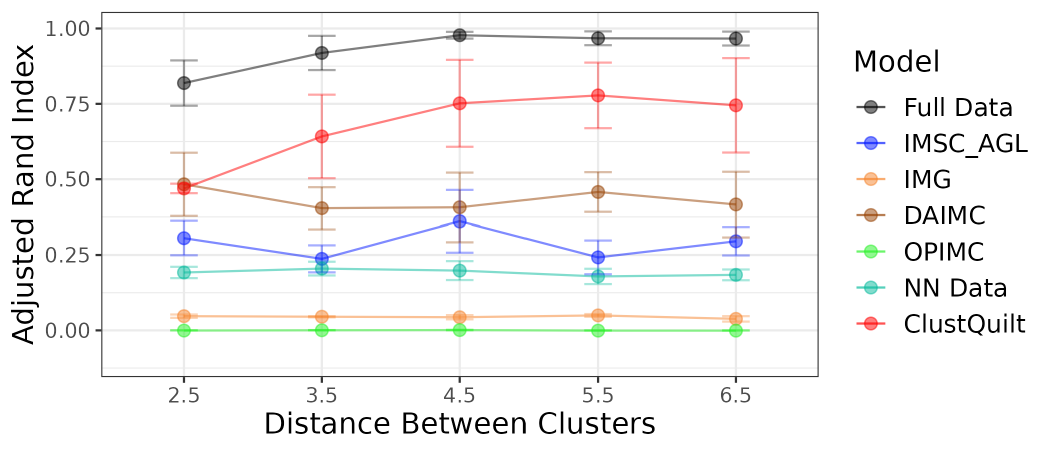}
        \caption{Cluster distance.}
    \label{fig:supp_gmm_mu}
    \end{subfigure}%
    \begin{subfigure}[t]{0.4\linewidth}
    \centering
        \includegraphics[width=\linewidth]{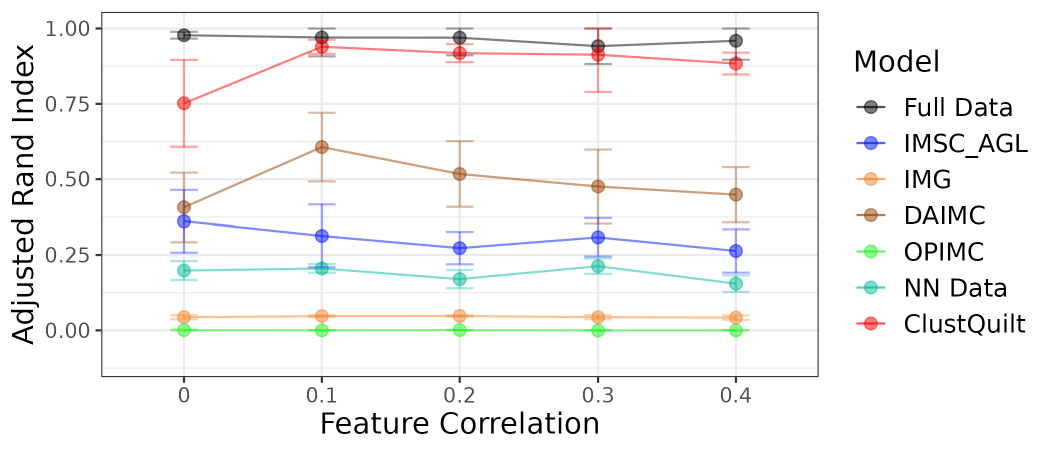}
        \caption{Feature autocorrelation.}
    \label{fig:supp_gmm_rho}
    \end{subfigure}
\end{center}
    \caption{Performance of Cluster Quilting and comparison cluster imputation methods with oracle tuning on Gaussian mixture model data in the sequential patch setting for different simulation parameters, measured by adjusted Rand index.}
    \label{fig:supp_gmm}
\end{figure}

The results of applying Cluster Quilting and the comparative cluster imputation methods are shown in Figure \ref{fig:supp_gmm}. In general, we see that the Cluster Quilting algorithm outperforms the other competing clustering imputation methods across all of the different variations of the simulation parameters, both when using oracle as well as data-driven hyperparameter tuning. Clustering on the similarity matrix with missing values imputed by the block singular value decomposition method yields fairly accurate results as well, close to the performance level of Cluster Quilting, while the other comparison methods do not recover the underlying clusters well in any scenario. Across the different sets of simulation parameters, we observe several expected trends in terms of the impact on accuracy. When the size of each observation block is decreased, either through directly decreasing the number of rows per block (Figure \ref{fig:supp_gmm_m}) or increasing the number of blocks while holding the number of blocks multiplied by the number of rows per block constant (Figure \ref{fig:supp_gmm_o}). On the other hand, as the number of features is increased, all methods perform better (Figure \ref{fig:supp_gmm_p}); this can likely be attributed to having more information from having more data to use for estimation, including an increase in the size of the overlap between each pair of observation blocks. When we increase the number of clusters (Figure \ref{fig:supp_gmm_k}) or decrease the distance between clusters (Figure \ref{fig:supp_gmm_mu}) as measured by the maximum difference between feature means across different clusters, we see that all methods become less accurate, which matches what we would expect as this should make the clustering problem more difficult in general. Also, we find that accuracy of all methods is relatively consistent across all levels of autocorrelation (Figure \ref{fig:supp_gmm_rho}). In Section 2 in the Supporting Information, we include additional simulation results for patchwork observed data from a Gaussian mixture model for the high-dimensional case, as well as a non-Gaussian simulation study.

\subsection{Data Driven Tuning}

We next present results for the same simulated data as was used for our empirical studies in Sections 3.2 in the main text for the mosaic patch observation setting and in Section \label{sub:bmgmm} for the sequential patch observation setting, but with data-driven tuning of the rank of the data matrix and the number of clusters. Here, hyperparameter selection is done using the prediction validation method of \cite{tibshirani2005cluster}, as outlined in Section 2.1 of the main text. We see that relative performance of all methods is consistent with the oracle tuning results, i.e. the Cluster Quilting method performs better than the comparison incomplete spectral clustering methods in most simulation settings. Additionally, in general, adjusted Rand Index scores across all methods decreases when compared to performance, showing that rank and cluster misspecification has a modest effect on cluster estimation accuracy.

\begin{figure}[t]
\begin{center}
    \begin{subfigure}[t]{0.4\linewidth}
    \centering
        \includegraphics[width=\linewidth]{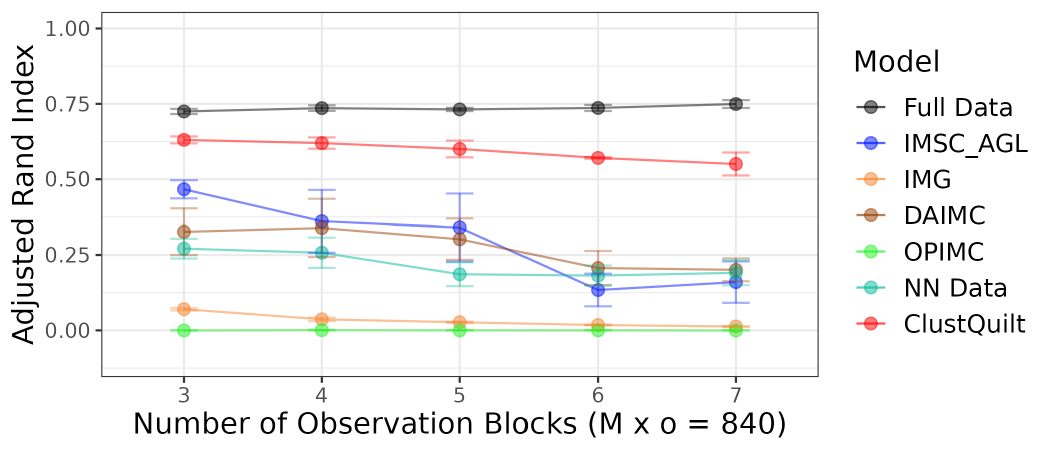}
        \caption{Changing number of blocks.}
    \label{fig:sup_gmm_m}
    \end{subfigure}%
    \begin{subfigure}[t]{0.4\linewidth}
    \centering
        \includegraphics[width=\linewidth]{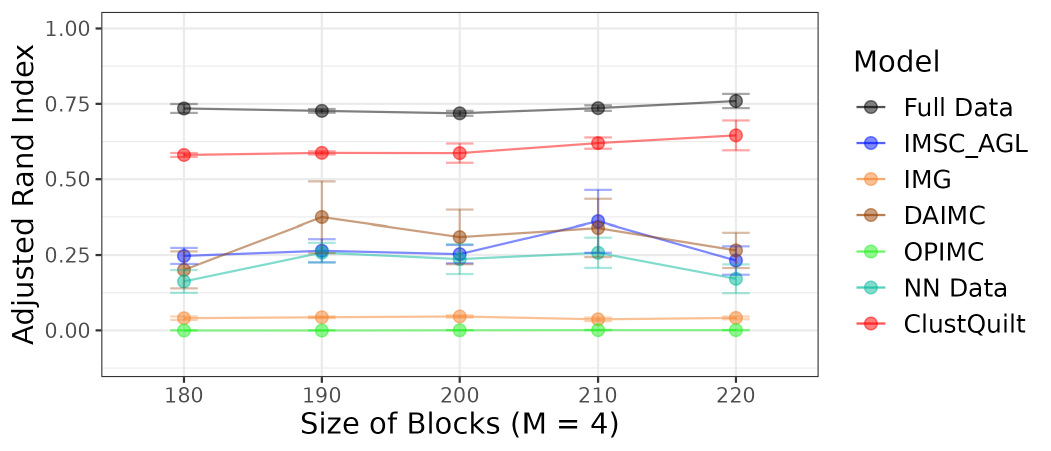}
        \caption{Changing size of blocks.}
    \label{fig:sup_gmm_o}
    \end{subfigure}
    \begin{subfigure}[t]{0.4\linewidth}
    \centering
        \includegraphics[width=\linewidth]{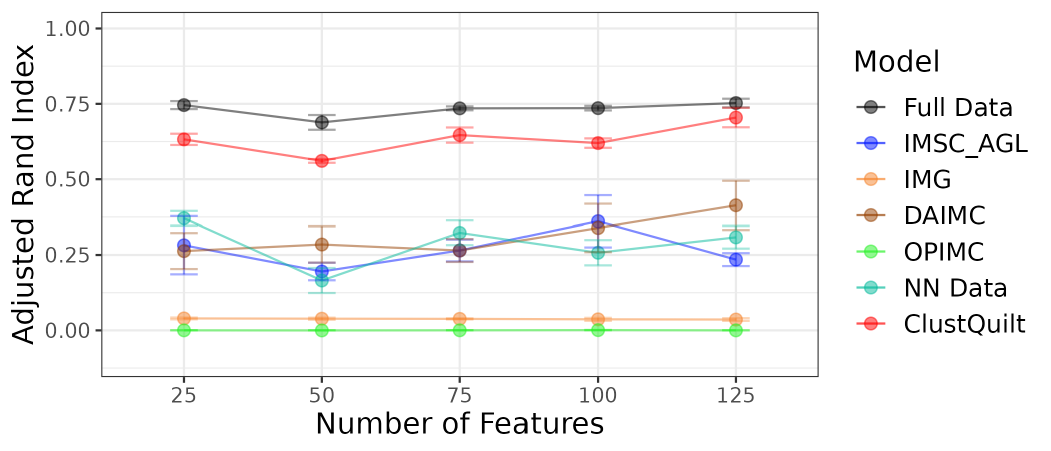}
        \caption{Changing number of total features.}
    \label{fig:sup_gmm_p}
    \end{subfigure}%
    \begin{subfigure}[t]{0.4\linewidth}
    \centering
        \includegraphics[width=\linewidth]{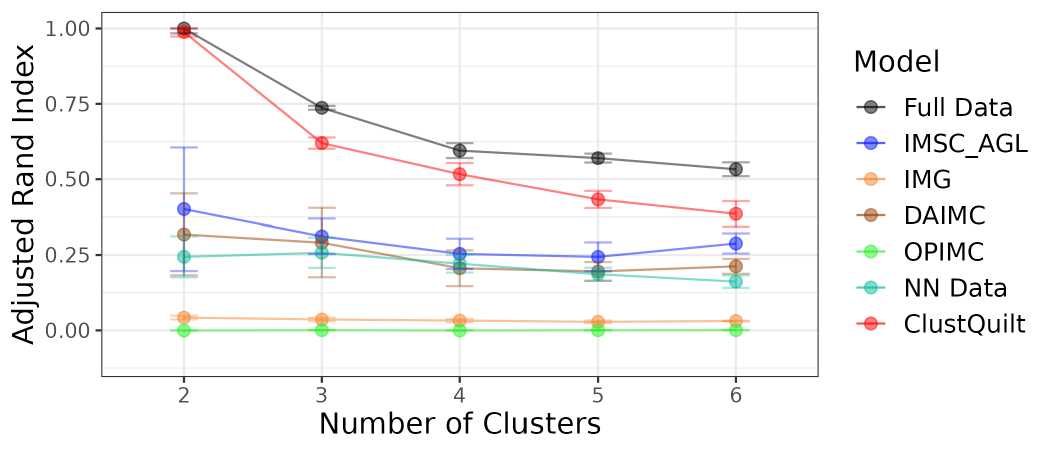}
        \caption{Changing number of  clusters.}
    \label{fig:sup_gmm_k}
    \end{subfigure}
    \begin{subfigure}[t]{0.4\linewidth}
    \centering
        \includegraphics[width=\linewidth]{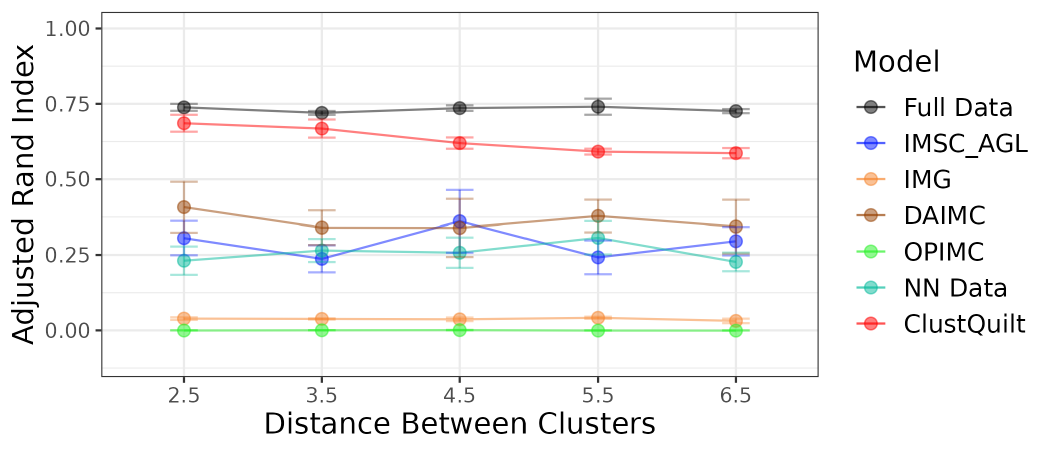}
        \caption{Changing cluster distance.}
    \label{fig:sup_gmm_mu}
    \end{subfigure}%
    \begin{subfigure}[t]{0.4\linewidth}
    \centering
        \includegraphics[width=\linewidth]{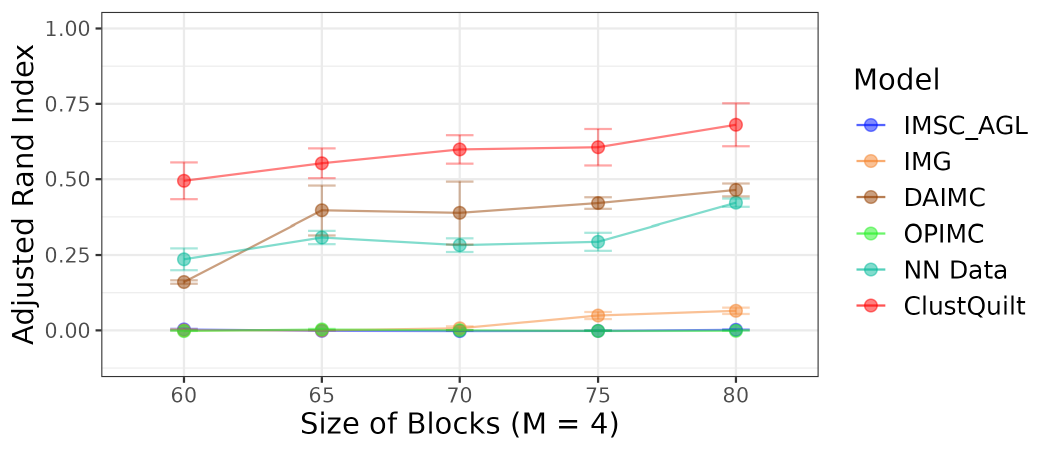}
        \caption{Autocorrelation between features.}
    \label{fig:sup_gmm_rho}
    \end{subfigure}
\end{center}
    \caption{Performance of Cluster Quilting and comparison cluster imputation methods with data-driven tuning on Gaussian mixture model data in the sequential patch setting, measured by adjusted Rand index.}
    \label{fig:sup_gmm}
\end{figure}

\begin{figure}[t]
\begin{center}
    \begin{subfigure}[t]{0.4\linewidth}
    \centering
        \includegraphics[width=\linewidth]{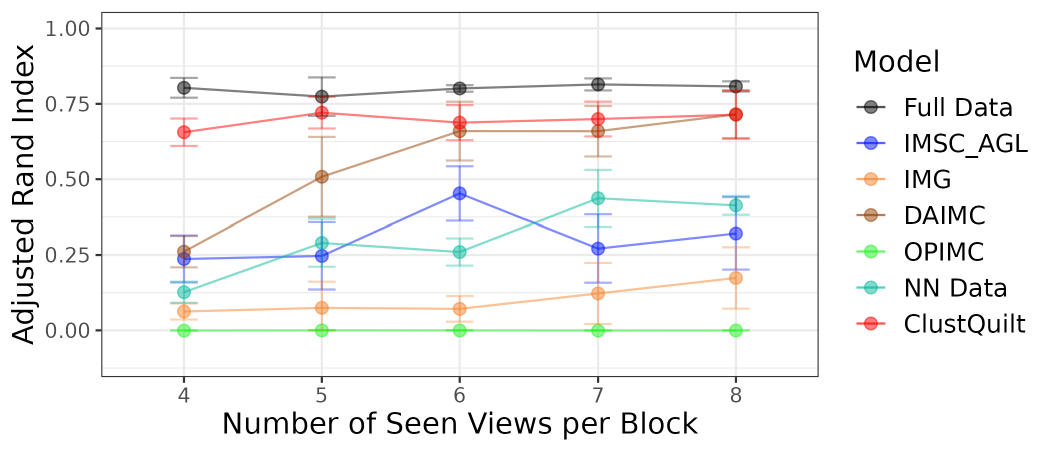}
        \caption{Changing views observed per block.}
    \label{fig:sup_pmv_h}
    \end{subfigure} \hspace{0.5cm}%
    \begin{subfigure}[t]{0.4\linewidth}
    \centering
        \includegraphics[width=\linewidth]{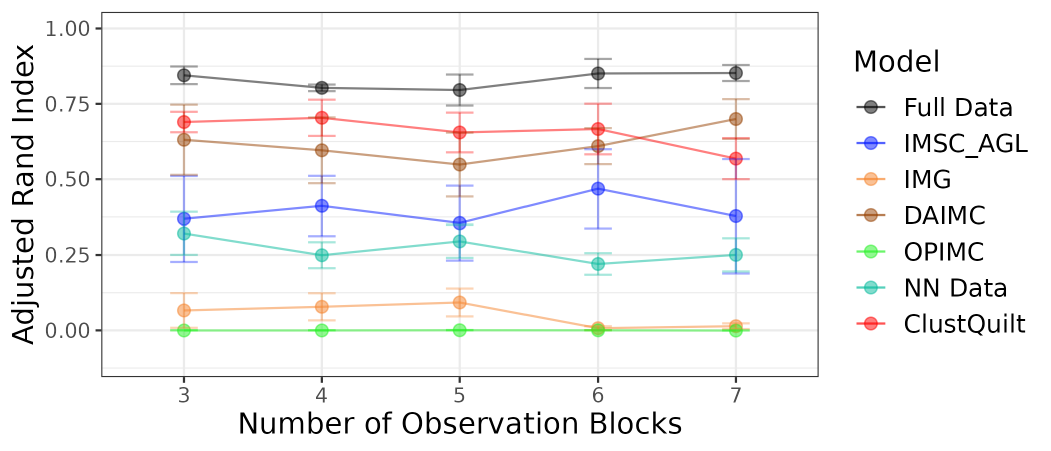}
        \caption{Changing number of blocks.}
    \label{fig:sup_pmv_b}
    \end{subfigure}\hspace{0.5cm} %
    \begin{subfigure}[t]{0.4\linewidth}
    \centering
        \includegraphics[width=\linewidth]{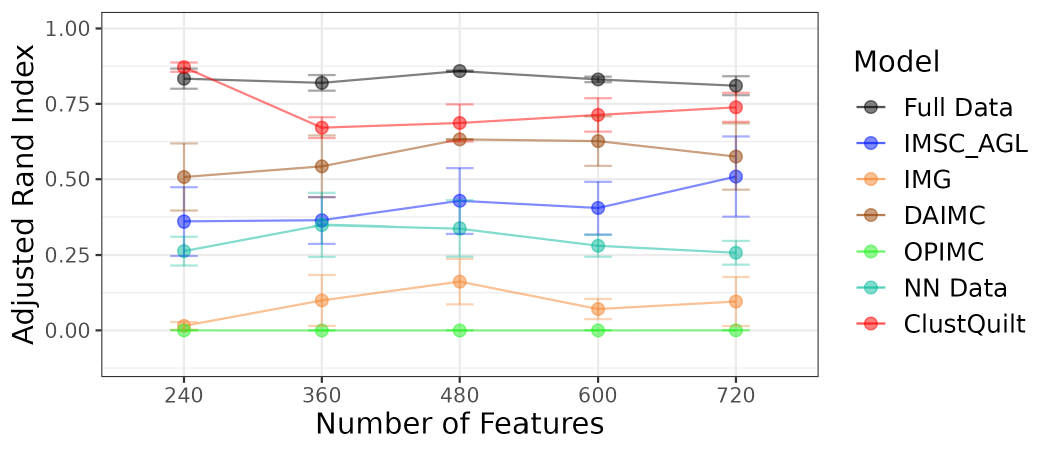}
        \caption{Changing number of total features.}
    \label{fig:sup_pmv_p}
    \end{subfigure}\hspace{0.5cm}%
    \begin{subfigure}[t]{0.4\linewidth}
    \centering
        \includegraphics[width=\linewidth]{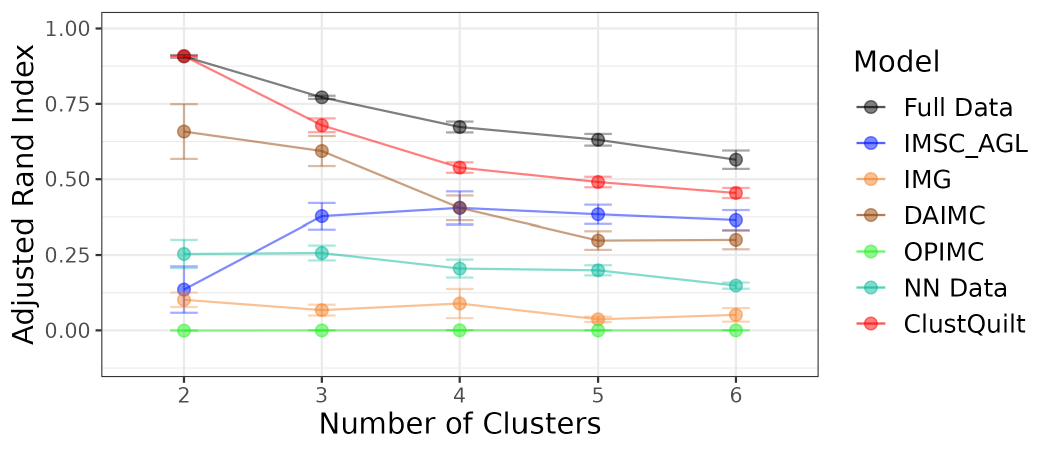}
        \caption{Changing number of true clusters.}
    \label{fig:sup_pmv_k}
    \end{subfigure}\hspace{0.5cm}%
    \begin{subfigure}[t]{0.4\linewidth}
    \centering
        \includegraphics[width=\linewidth]{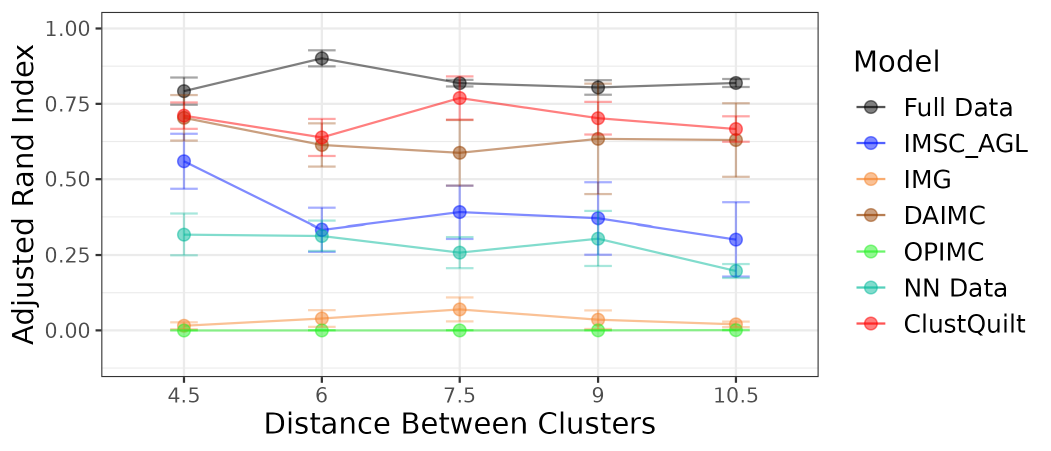}
        \caption{Changing cluster distance.}
    \label{fig:sup_pmv_d}
    \end{subfigure}\hspace{0.5cm}%
    \begin{subfigure}[t]{0.4\linewidth}
    \centering
        \includegraphics[width=\linewidth]{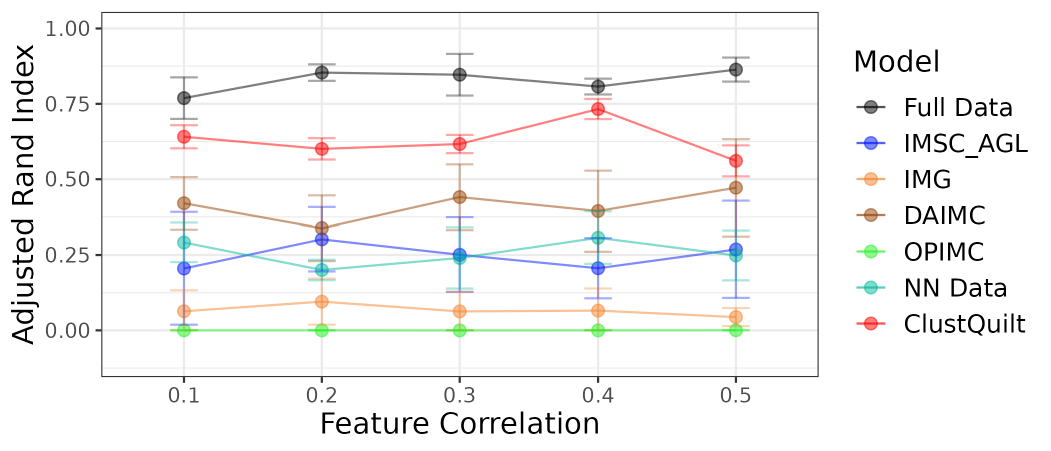}
        \caption{Autocorrelation between features.}
    \label{fig:sup_pmv_rho}
    \end{subfigure}
\end{center}
    \caption{Performance of Cluster Quilting and comparison cluster imputation methods with data-driven tuning on Gaussian mixture model data in the mosaic patch setting, measured by adjusted Rand index.}
    \label{fig:sup_pmv}
\end{figure}

\clearpage

\subsection{High Dimensional Studies}

Here, we provide additional simulation studies in a high-dimensional setting. For the sequential patch observation setting, the simulated data contain 710 observations spread across three true underlying clusters; 4 synthetic observation blocks of 210 observations are then created from the data. In Figures \ref{fig:sup_pmv_high} and \ref{fig:sup_pmv_high_dd} we present results when changing the number of features in the data between 1000 and 2000, as opposed to the 100 features in the simulations in the main text. As we would expect from the pattern from the main text simulations, we find that the increasing the number of features is correlated with increased accuracy of cluster estimates, although there seems to be an upper limit of the performance overall at an adjusted Rand Index of around 0.7.

For the mosaic patch observation setting in the high-dimensional setting, we generate a full data matrix of 840 observations split into 3 underlying cluster, and 12 views of 50 features each. The cluster centroid matrix used to generate the data for each view is a rank 2 matrix with $d = 7.5$. We then create 4 observation blocks, for which 6 views per observation block are selected to be part of the set of observed entries. Figures \ref{fig:sup_gmm_high} and \ref{fig:sup_gmm_high_dd} show the results for varying the number of features per view between 100 and 500. Overall, we see that the Cluster Quilting method performs better than any of the other incomplete spectral clustering methods in the simulation study across the varying simulation parameters.

\begin{figure}[t]
\begin{center}
    \begin{subfigure}[t]{0.4\linewidth}
    \centering
        \includegraphics[width=\linewidth]{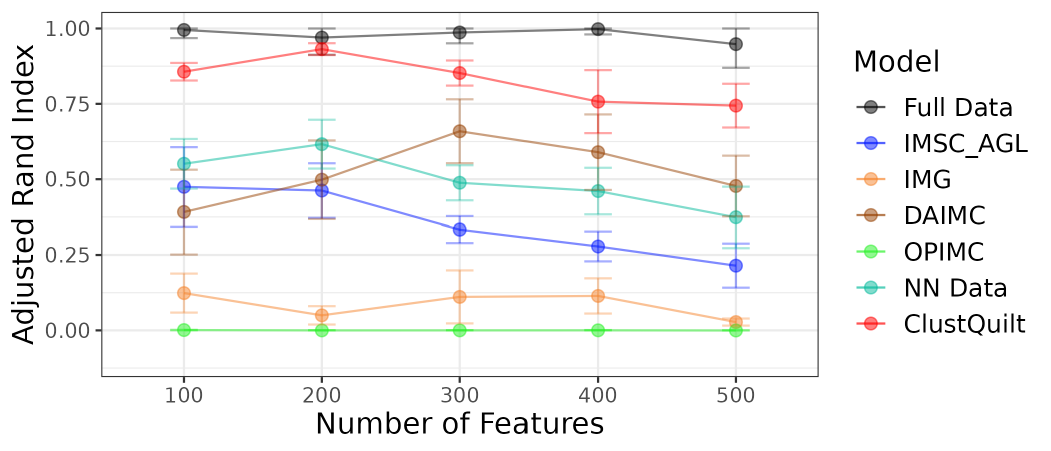}
        \caption{Sequential patch data, \\ oracle tuning.}
    \label{fig:sup_pmv_high}
    \end{subfigure}%
    \begin{subfigure}[t]{0.4\linewidth}
    \centering
        \includegraphics[width=\linewidth]{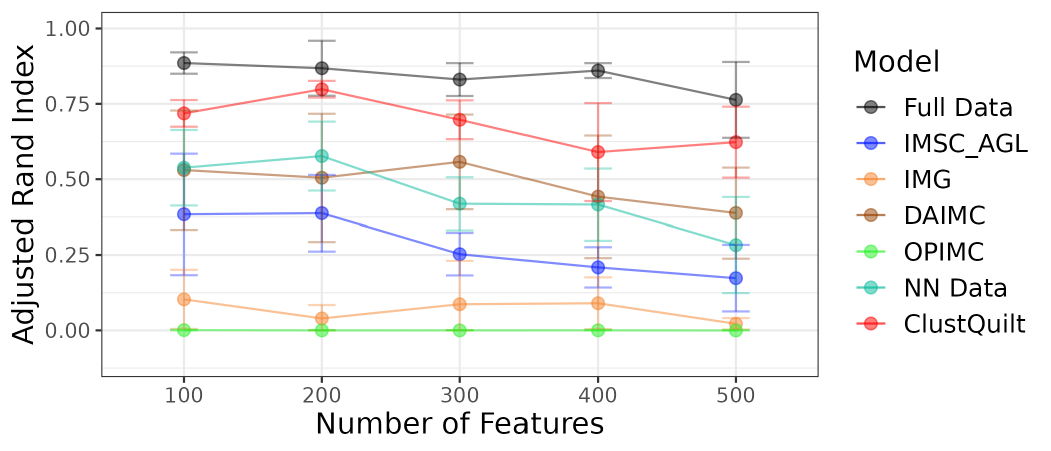}
        \caption{Sequential patch data, \\  data-driven tuning.}
    \label{fig:sup_pmv_high_dd}
    \end{subfigure} 
    \begin{subfigure}[t]{0.4\linewidth}
    \centering
        \includegraphics[width=\linewidth]{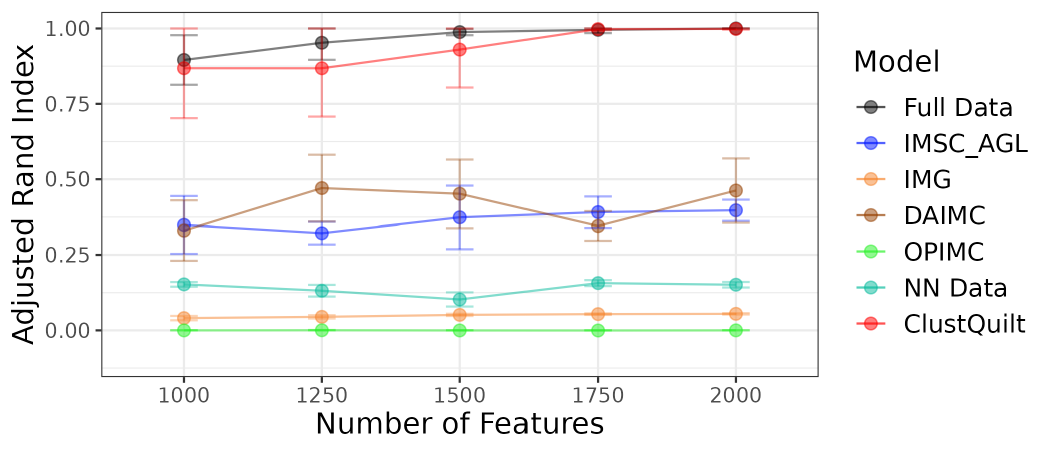}
        \caption{General patchwork data, \\ oracle tuning.}
    \label{fig:sup_gmm_high}
    \end{subfigure}%
    \begin{subfigure}[t]{0.4\linewidth}
    \centering
        \includegraphics[width=\linewidth]{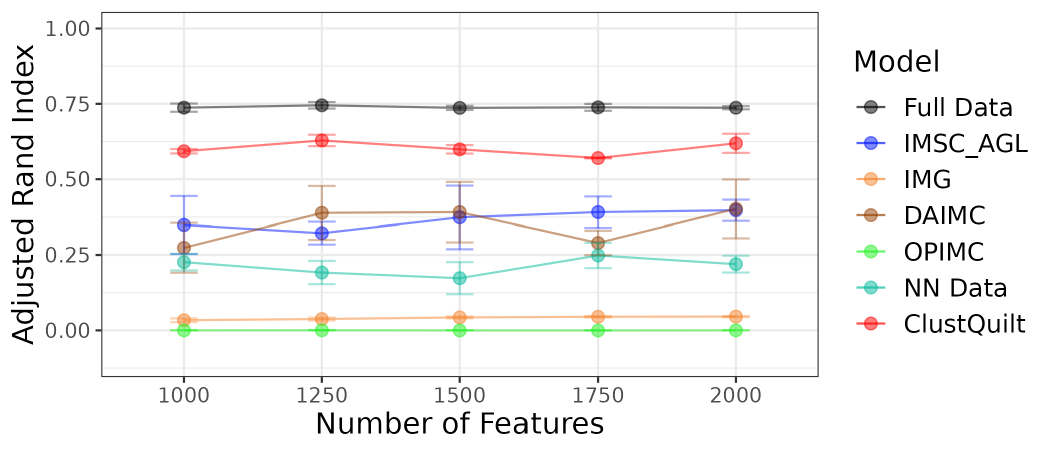}
        \caption{General patchwork data, \\  data-driven tuning.}
    \label{fig:sup_gmm_high_dd}
    \end{subfigure}
\end{center}
    \caption{Performance of Cluster Quilting and comparison cluster imputation methods on Gaussian mixture model data with different missingness patterns in a high-dimensional setting, measured by Adjusted Rand index with respect to the underlying cluster membership.}
    \label{fig:sup_gmm2}
\end{figure}

\clearpage
\subsection{t-Distribution Mixture Model, Sequential Patch Observation}

We now test the Cluster Quilting and comparison methods in the sequential patch observation setting on a mixture model with a heavy tailed non-sub-Gaussian distribution. To do this, we generate data from the Gaussian mixture model as described in Section 5a of the main text, then apply a copula transform to each univariate marginal distribution to a t-distribution with 5 degrees of freedom. We then create the sequential patch observation pattern by partitioning the data matrix into $M$ blocks with $o$ rows, with overlap between the rows included in consecutive observation blocks, and $p/M$ non-overlapping columns. We then estimate clusters from partially observed data matrix and measure performance against the objective of recovering the cluster labels used to generate the original Gaussian mixture model. As baseline simulation parameters, we set mixture model to have 710 observations of 100 features in 3 clusters. 4 synthetic observation blocks are then created from the data, each containing 210 observations. As in the main text, we study the effects of changing the number of observed blocks, the number of observations in each block, the number of features, the number of underlying clusters, and the distance between clusters. 

The simulation results are shown in Figure \ref{fig:t}. Similar to the Gaussian mixture model simulations, the Cluster Quilting method demonstrates superior performance across almost all of the different simulation parameter settings, while most of the other methods tend to be substantially less accurate. In terms of the effects of varying simulation parameters on performance, we also observe similar patterns as in the Gaussian mixture model results, in which increasing the number of features, size of blocks, and distance between clusters improves the accuracy of all methods, while increasing the number of blocks between which the observations are divided negatively impacts recovery of cluster labels. One interesting deviation from previous results is that, for the t-distribution mixture model, increasing the number of clusters while holding the number of total rows in the data constant often increases the accuracy of all methods. This may be due to the heavy-tailed shape of the t-distribution, which increases the probability of an observations in one cluster to be incorrectly classified empirically due to being closer to points in a different cluster which does not match the true label.

\begin{figure}[t]
\begin{center}
    \begin{subfigure}[t]{0.4\linewidth}
    \centering
        \includegraphics[width=\linewidth]{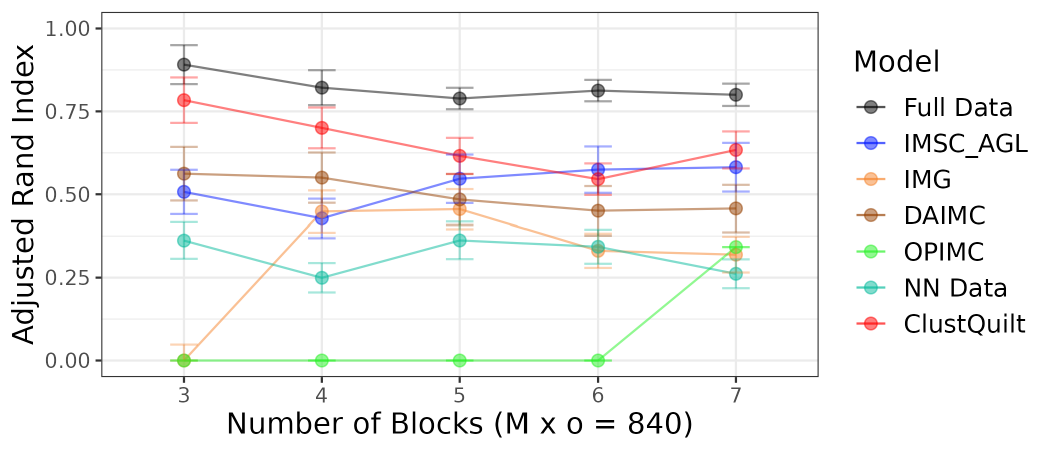}
        \caption{Changing number of blocks; \\ oracle tuning.}
    \label{fig:t_m}
    \end{subfigure}%
    \begin{subfigure}[t]{0.4\linewidth}
    \centering
        \includegraphics[width=\linewidth]{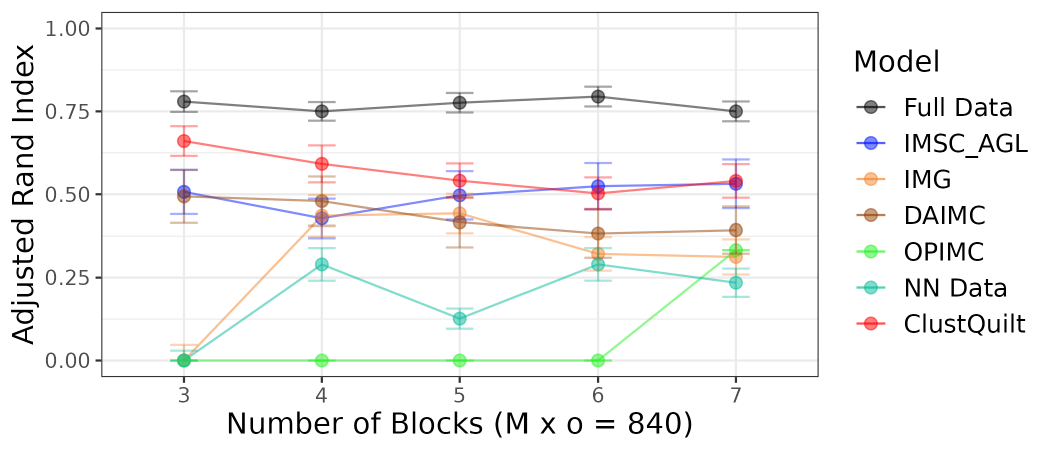}
        \caption{Changing number of blocks; \\ data-driven tuning.}
    \label{fig:t_m_dd}
    \end{subfigure}
    \begin{subfigure}[t]{0.4\linewidth}
    \centering
        \includegraphics[width=\linewidth]{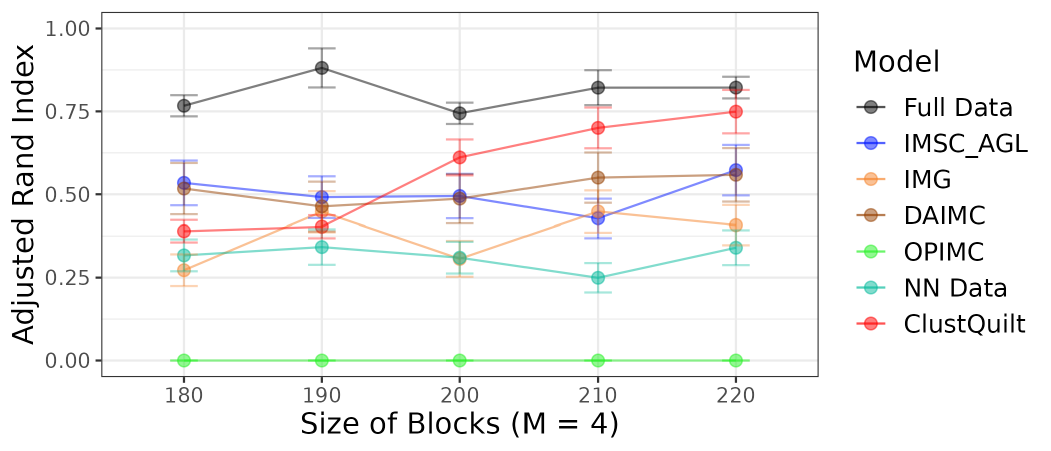}
        \caption{Changing size of observed blocks; \\ oracle tuning.}
    \label{fig:t_o}
    \end{subfigure}%
    \begin{subfigure}[t]{0.4\linewidth}
    \centering
        \includegraphics[width=\linewidth]{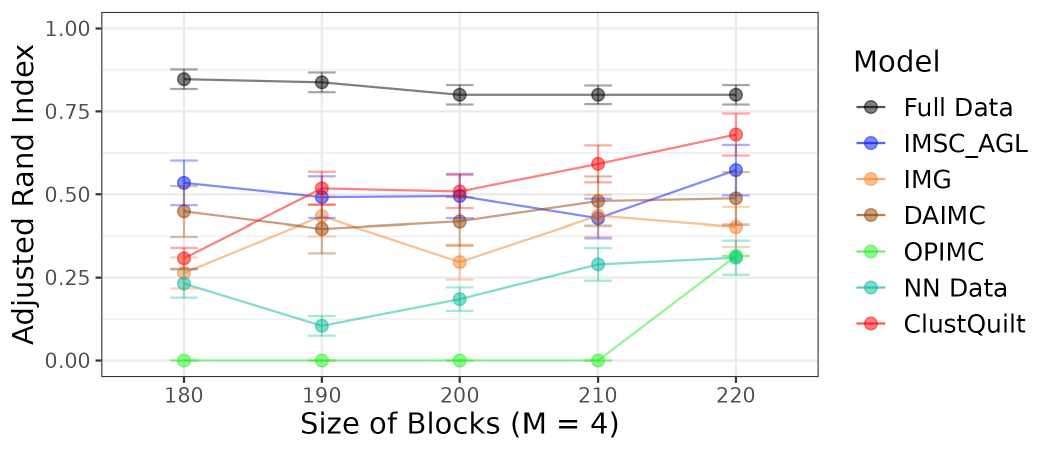}
        \caption{Changing size of observed blocks; \\ data-driven tuning.}
    \label{fig:t_o_dd}
    \end{subfigure}
    \begin{subfigure}[t]{0.4\linewidth}
    \centering
        \includegraphics[width=\linewidth]{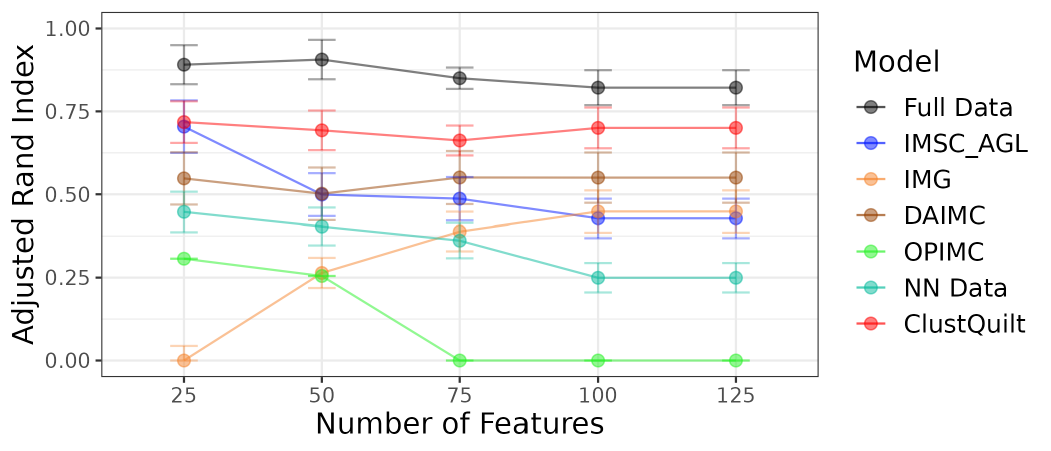}
        \caption{Changing number of total features; \\ oracle tuning.}
    \label{fig:t_p}
    \end{subfigure}%
    \begin{subfigure}[t]{0.4\linewidth}
    \centering
        \includegraphics[width=\linewidth]{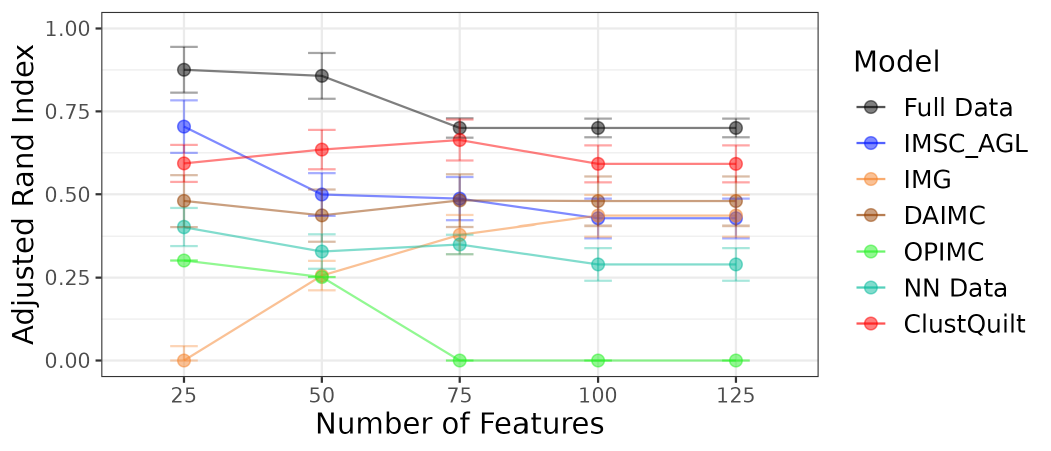}
        \caption{Changing number of total features; \\ data-driven tuning.}
    \label{fig:t_p_dd}
    \end{subfigure}
    \begin{subfigure}[t]{0.4\linewidth}
    \centering
        \includegraphics[width=\linewidth]{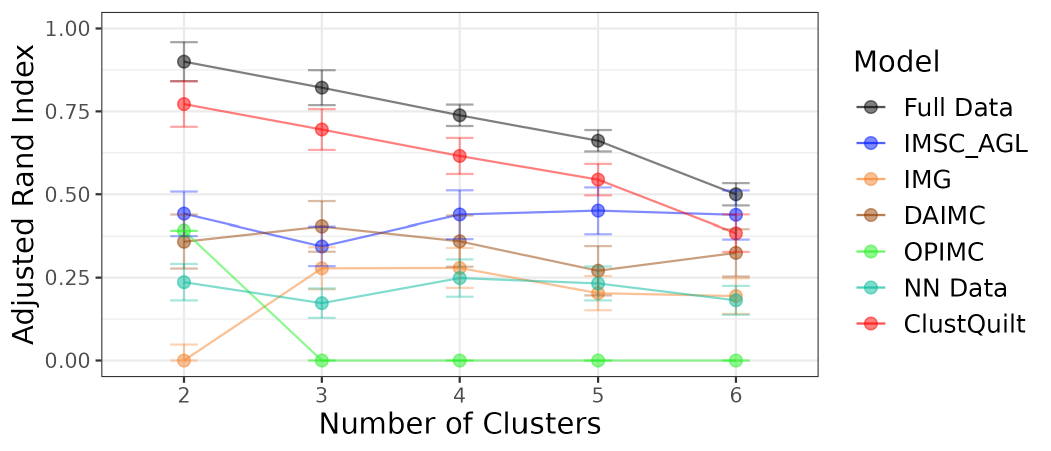}
        \caption{Changing number of underlying clusters; \\ oracle tuning.}
    \label{fig:t_k}
    \end{subfigure}%
    \begin{subfigure}[t]{0.4\linewidth}
    \centering
        \includegraphics[width=\linewidth]{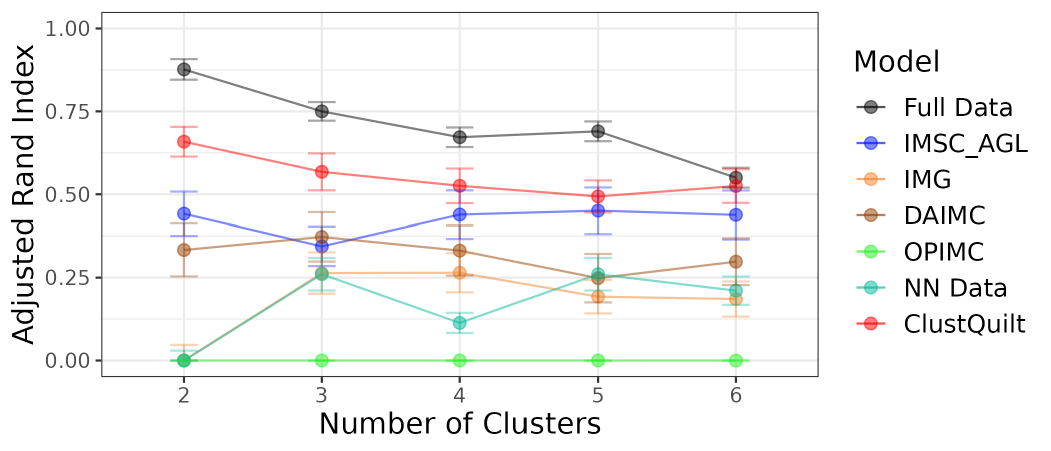}
        \caption{Changing number of underlying clusters; \\ data-driven tuning.}
    \label{fig:t_k_dd}
    \end{subfigure}
    \begin{subfigure}[t]{0.4\linewidth}
    \centering
        \includegraphics[width=\linewidth]{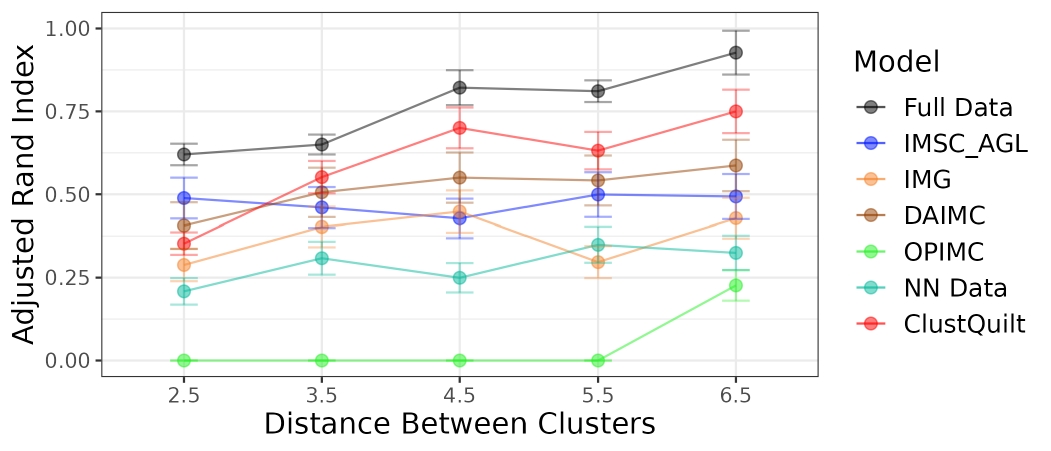}
        \caption{Changing cluster distance; \\ oracle tuning.}
    \label{fig:t_mu}
    \end{subfigure}%
    \begin{subfigure}[t]{0.4\linewidth}
    \centering
        \includegraphics[width=\linewidth]{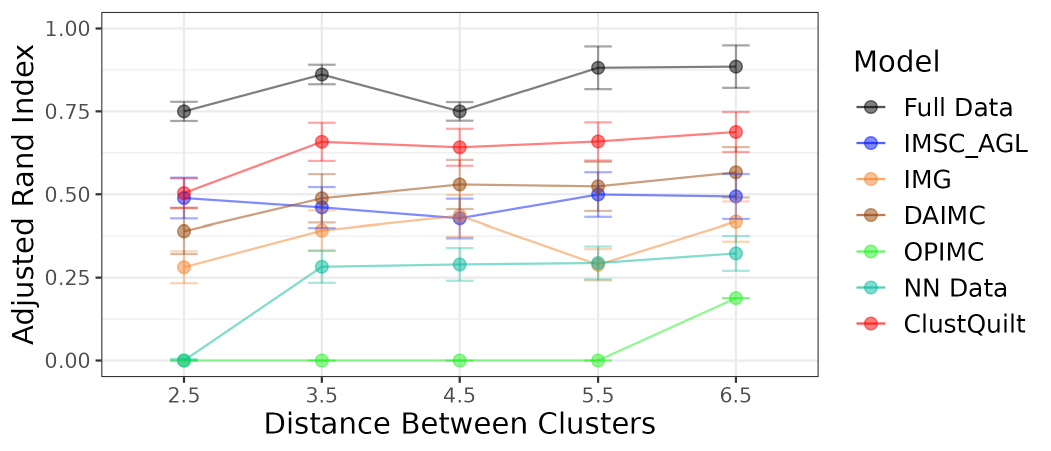}
        \caption{Changing cluster distance; \\ data-driven tuning.}
    \label{fig:t_mu_dd}
    \end{subfigure}
\end{center}
    \caption{Performance of Cluster Quilting and comparison cluster imputation methods on sequential patch observe data from a multivariate t-distribution with 5 degrees of freedom, measured by Adjusted Rand index with respect to the underlying cluster membership.}
    \label{fig:t}
\end{figure}

\clearpage

\subsection{Copula Mixture Model, Mosaic Patch Observation}

We also evaluate Cluster Quilting and other incomplete spectral clustering methods in the mosaic patch observation setting for data generated from a copula mixture model in which we set each view to follow a different multivariate distribution. To do this, we generate data from Gaussian mixture models from each view as described in Section 5b of the main text, then apply a copula transform to each column. We then create the mosaic patch observation pattern by randomly segmenting the rows of the full data matrix into $M$ blocks of observations and then randomly selecting $h$ views for each observation block to leave in the set of observed entries. We use default parameters of 840 observations, 3 underlying clusters, 12 views, and 50 features per view. The cluster centroid matrix is a rank 2 matrix with $d = 7.5$. After data generation, we randomly divide the rows into 4 blocks of equal size and select 6 views to be part of the set of observed entries. For this simulation, each view is randomly selected to follow either a skew normal, Gaussian, generalized normal, or beta distribution, with randomized shape parameters; all columns within a view follow the same distribution. 

We show the results of the simulation study in Figure \ref{fig:sup_pmvc}. In general, the Cluster Quilting method is able to more accurately recover the true underlying cluster labels used to generate the copula mixture model compared to the other incomplete spectral clustering methods for both oracle and data-driven hyperparameter tuning. The notable exceptions to this are when the number of observation blocks is relatively large, which follows what is presented in the theoretical results of the main text, and when the proportion of the data matrix that is observed is relatively small, when other methods tend to achieve better cluster recovery. When varying the simulation parameters, we find that all methods tend to perform better when the cluster centroids are further apart and when more data is observed; this matches the results from the Gaussian mixture model simulations in Section 5.

\begin{figure}[t]
\begin{center}
    \begin{subfigure}[t]{0.4\linewidth}
    \centering
        \includegraphics[width=\linewidth]{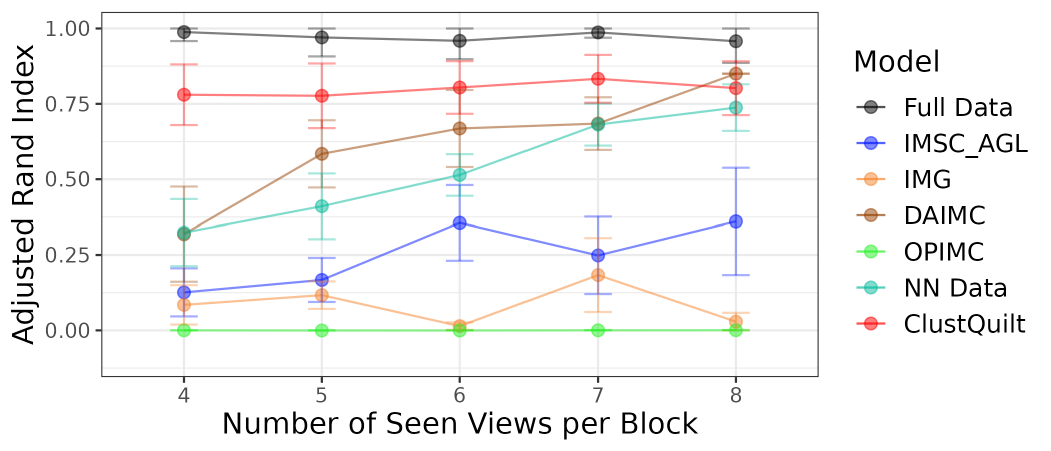}
        \caption{Changing views observed per block; \\ oracle tuning.}
    \label{fig:sup_pmvc_h}
    \end{subfigure} \hspace{0.5cm}%
    \begin{subfigure}[t]{0.4\linewidth}
    \centering
        \includegraphics[width=\linewidth]{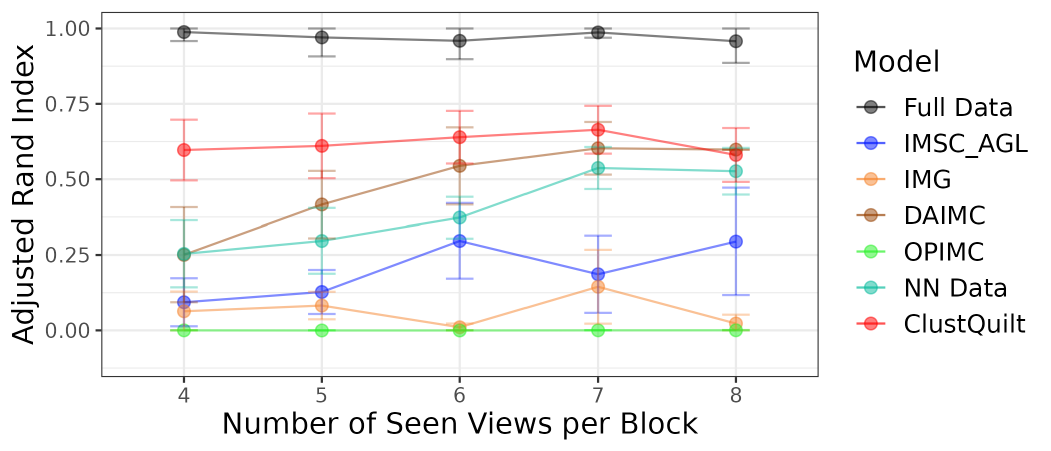}
        \caption{Changing views observed per block; \\ data-driven tuning.}
    \label{fig:sup_pmvc_h_dd}
    \end{subfigure}
    \begin{subfigure}[t]{0.4\linewidth}
    \centering
        \includegraphics[width=\linewidth]{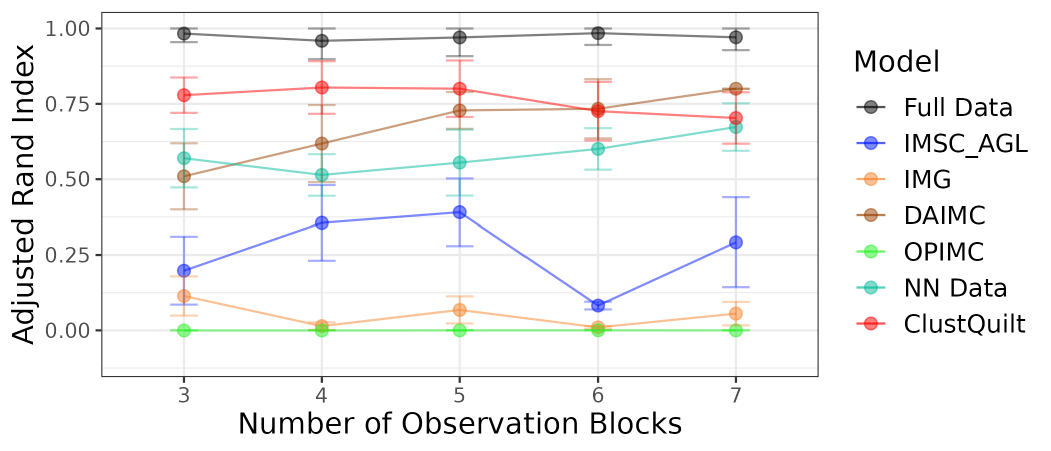}
        \caption{Changing number of blocks; \\ oracle tuning.}
    \label{fig:sup_pmvc_b}
    \end{subfigure}\hspace{0.5cm} %
    \begin{subfigure}[t]{0.4\linewidth}
    \centering
        \includegraphics[width=\linewidth]{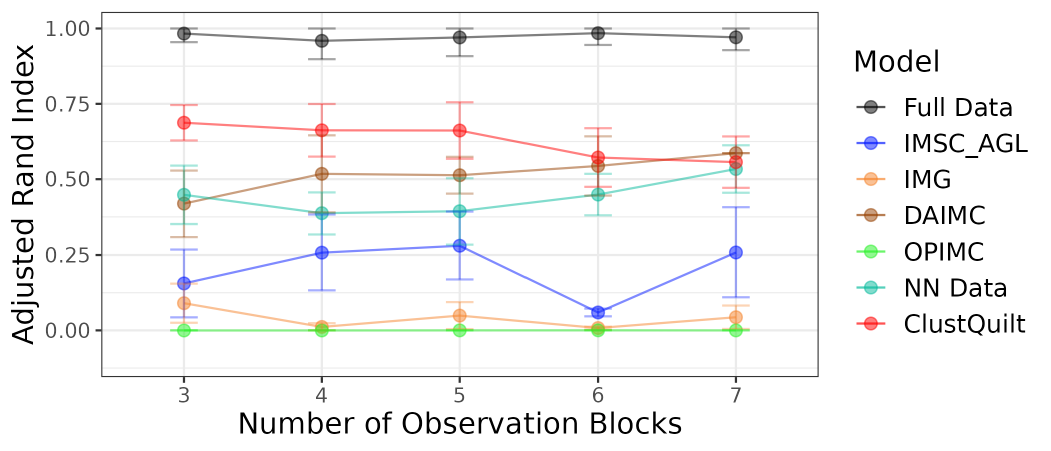}
        \caption{Changing number of blocks; \\ data-driven tuning.}
    \label{fig:sup_pmvc_b_dd}
    \end{subfigure}
    \begin{subfigure}[t]{0.4\linewidth}
    \centering
        \includegraphics[width=\linewidth]{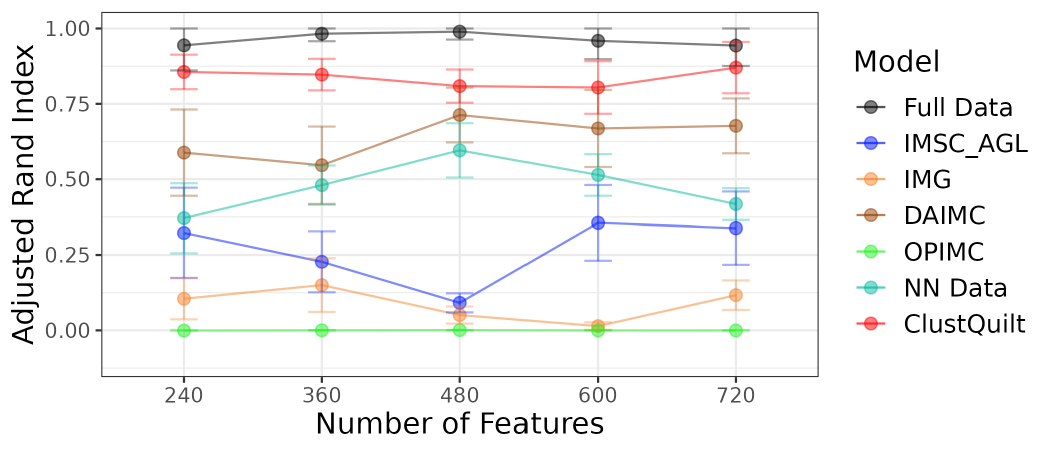}
        \caption{Changing number of total features; \\ oracle tuning.}
    \label{fig:sup_pmvc_p}
    \end{subfigure}\hspace{0.5cm}%
    \begin{subfigure}[t]{0.4\linewidth}
    \centering
        \includegraphics[width=\linewidth]{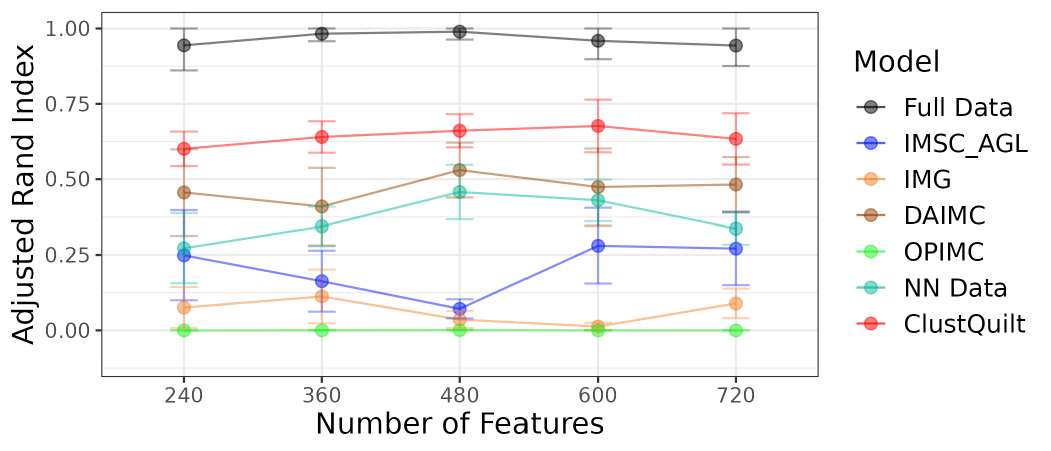}
        \caption{Changing number of total features; \\ data-driven tuning.}
    \label{fig:sup_pmvc_p_dd}
    \end{subfigure}
    \begin{subfigure}[t]{0.4\linewidth}
    \centering
        \includegraphics[width=\linewidth]{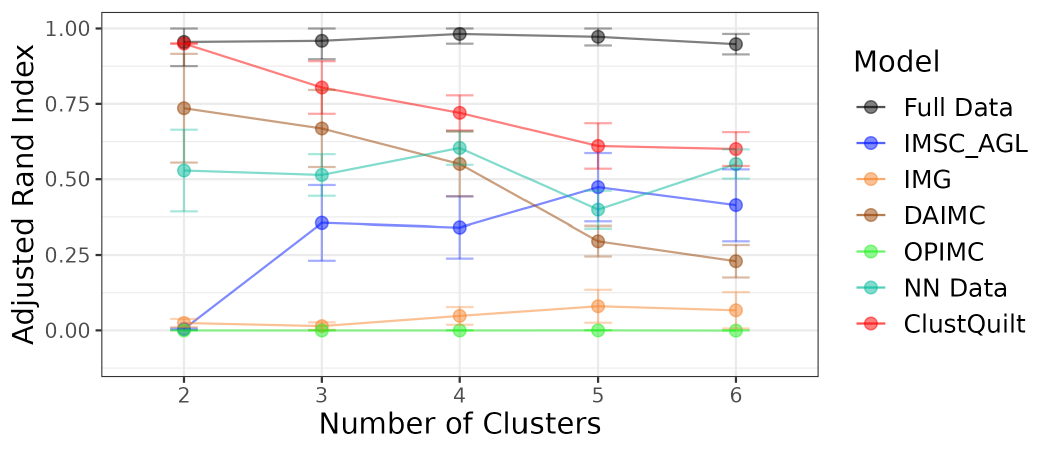}
        \caption{Changing number of true clusters; \\ oracle tuning.}
    \label{fig:sup_pmvc_k}
    \end{subfigure}\hspace{0.5cm}%
    \begin{subfigure}[t]{0.4\linewidth}
    \centering
        \includegraphics[width=\linewidth]{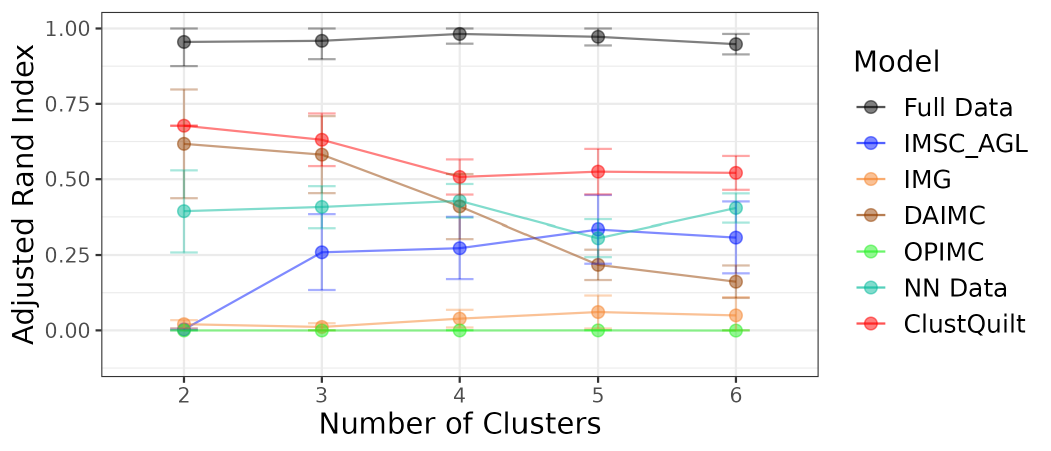}
        \caption{Changing number of true clusters; \\ data-driven tuning.}
    \label{fig:sup_pmvc_k_dd}
    \end{subfigure}
    \begin{subfigure}[t]{0.4\linewidth}
    \centering
        \includegraphics[width=\linewidth]{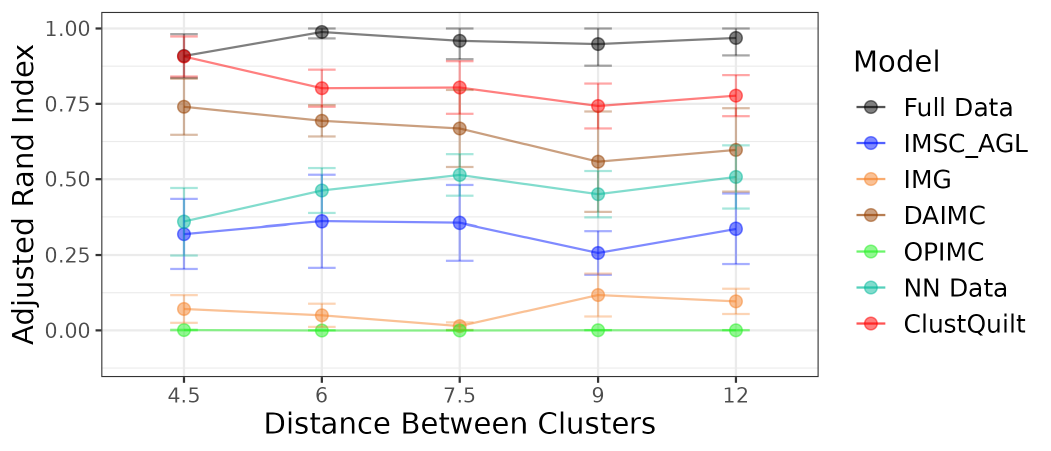}
        \caption{Changing cluster distance; \\ oracle tuning.}
    \label{fig:sup_pmvc_d}
    \end{subfigure}\hspace{0.5cm}%
    \begin{subfigure}[t]{0.4\linewidth}
    \centering
        \includegraphics[width=\linewidth]{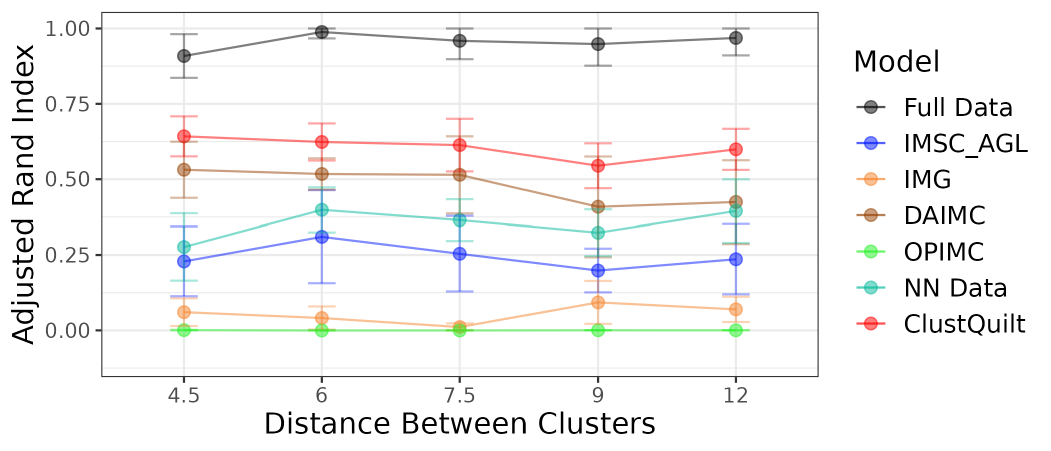}
        \caption{Changing cluster distance; \\ data-driven tuning.}
    \label{fig:sup_pmvc_d_dd}
    \end{subfigure}
\end{center}
    \caption{Performance of Cluster Quilting and comparison cluster imputation methods on copula mixture model data in the mosaic patch observation setting, measured by adjusted Rand index.}
    \label{fig:sup_pmvc}
\end{figure}

\clearpage
\section{Calcium Imaging Case Study}\label{ssec:ci}

Here, we apply Cluster Quilting to the problem of estimating functional neuronal clusters, i.e. how individual neurons can be organized into broad groups based on correlations in firing activity, using data collected via two-photon calcium imaging. Due to current technological limitations, only a small fraction of a brain region can be recorded with enough granularity to differentiate individual neurons. Neural activity data from calcium imaging for entire experiment is therefore generally collected via a series of scans at sequentially increasing depths, with some overlap between the neurons recorded between consecutive scans. This leads to the sequential patch observation scheme as described in Section \ref{s:intro}, and thus requires the implementation of an incomplete data clustering technique in order to estimate function clustering of neurons for a full brain region of interest. In this section, we study the applicability of the Cluster Quilting algorithm to calcium imaging data. Below, we analyze a data set from the Allen Institute containing fluorescence traces for 227 neurons measured across approximately 8000 observations during spontaneous activity. For this case study, we first apply spectral clustering to the complete, unmasked data set, with data-driven tuning of the rank and number of clusters using the prediction validation method described in Section \ref{sec:model}; we treat this as the ground truth clustering to which we compare the results of Cluster Quilting and competing methods. We then simulate the missingness pattern found in calcium imaging data collected in multiple patches by designating synthetic sequential patch observation blocks with overlap in the rows but not the columns as part of the empirical observed set of entries in the data matrix, and we use this as the input to the Cluster Quilting procedure and the comparison incomplete spectral clustering methods. We compare the procedures based on how well they recover the cluster assignments estimated by spectral clustering on the fully observed data for varying block sizes and for different numbers of sequential blocks while holding the total size of the set of observed entries in the data matrix constant. For each combination of the number of blocks and block size, we perform 50 replications and report the mean and standard deviation of the adjusted Rand index with respect to the cluster labels found using the fully observed data set for each method. We consider both oracle and data-driven tuning of the rank and the number of clusters; for this particular case study, we treat the rank and number of clusters found via data driven tuning with spectral clustering applied to the fully observed data set as the oracle rank and number of clusters to use as input to the incomplete spectral clustering methods.

\begin{figure}[t]
\begin{center}
    \begin{subfigure}[t]{0.4\linewidth}
    \centering
        \includegraphics[width=\linewidth]{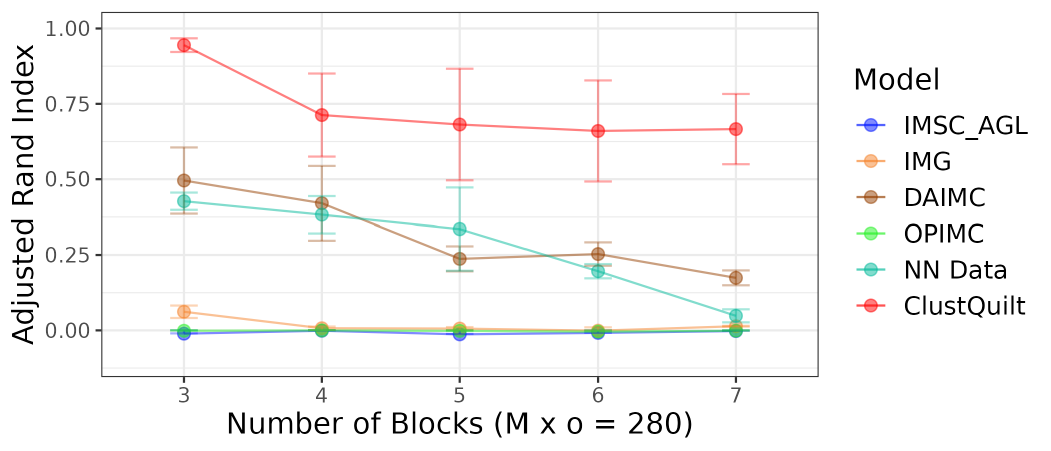}
        \caption{Changing number of blocks; \\oracle tuning.}
    \label{fig:aba_m}
    \end{subfigure}%
    \begin{subfigure}[t]{0.4\linewidth}
    \centering
        \includegraphics[width=\linewidth]{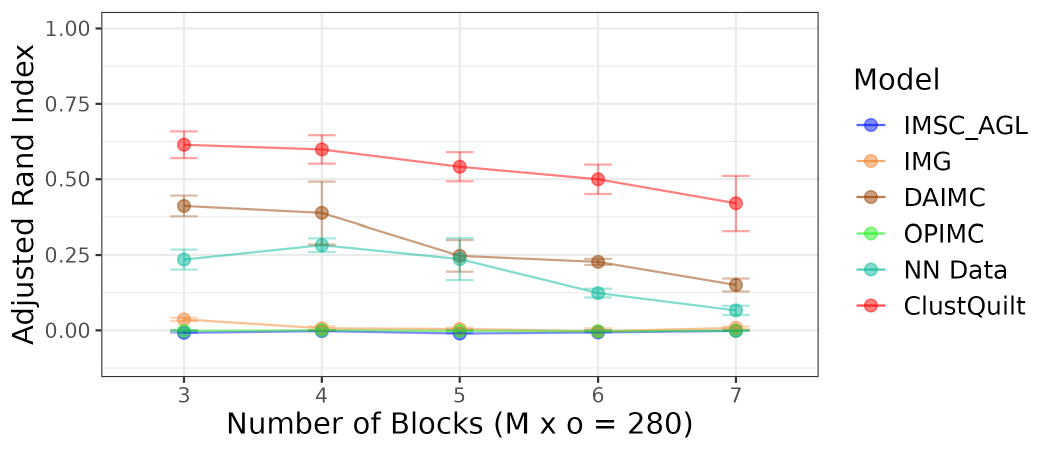}
        \caption{Changing number of blocks; \\ data-driven tuning.}
    \label{fig:aba_m_dd}
    \end{subfigure}
    \begin{subfigure}[t]{0.4\linewidth}
    \centering
        \includegraphics[width=\linewidth]{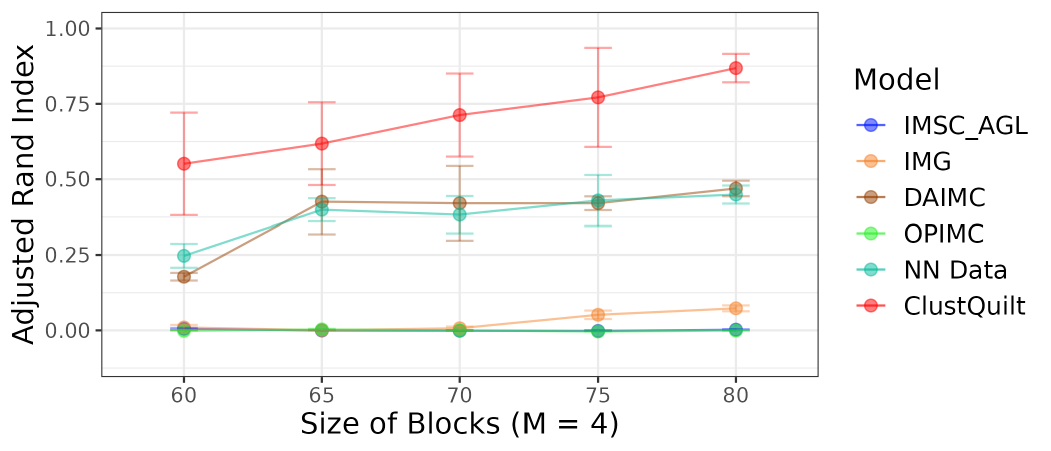}
        \caption{Changing size of observed blocks; \\ oracle tuning.}
    \label{fig:aba_o}
    \end{subfigure}%
    \begin{subfigure}[t]{0.4\linewidth}
    \centering
        \includegraphics[width=\linewidth]{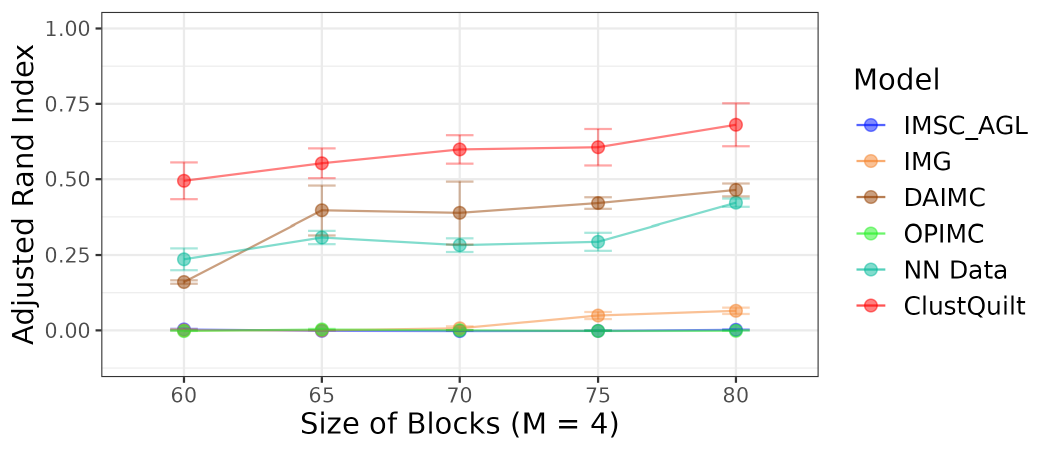}
        \caption{Changing size of observed blocks; \\ data-driven tuning.}
    \label{fig:aba_o_dd}
    \end{subfigure}
\end{center}
    \caption{Performance of Cluster Quilting and comparison cluster imputation methods on Allen Institute data with synthetic simulated block-missingness, measured by adjusted Rand index with respect to the cluster labels found with the fully observed data. In general, Cluster Quilting recovers cluster labels that most closely match those from the full data compared to other methods.}
    \label{fig:aba}
\end{figure}

The performance of Cluster Quilting and the comparison incomplete spectral clustering methods in recovering the cluster labels found with the fully observed data in Figure \ref{fig:aba}, with oracle hyperparameter tuning results in the left column and data-driven hyperparameter tuning in the right column. Across the varying block sizes and number of blocks, we find that the Cluster Quilting is the best performing method in terms of recovering the same clustering assignments that are assigned by applying spectral clustering to the fully observed data. Cluster labels from the imputed similarity matrix from the block singular value decomposition also matches relatively well on average with those estimated from the fully observed data. On the other hand, cluster assignments from applying spectral clustering after zero imputation of the incomplete similarity matrix is by far the worst performing method. These results show that Cluster Quilting may be the most appropriate method in order to estimate functional clusters in neurons recorded across multiple calcium imaging scans. As in the simulation studies in the main text, we see that increasing both the proportion of the data matrix that is observed and the size of the intersections between each pair of consecutive blocks, achieved by increasing the number of neurons in each observation block (Figures \ref{fig:aba_m} and \ref{fig:aba_m_dd}) or decreasing the total number of observation blocks amongst which the neurons are divided (Figures \ref{fig:aba_o} and \ref{fig:aba_o_dd}), increases the concordance between the clusters predicted by the imputation methods and the cluster labels found from applying spectral clustering on the full data.

\begin{figure}[t]
\begin{center}
    \begin{subfigure}[t]{0.4\linewidth}
    \centering
        \includegraphics[width=\linewidth]{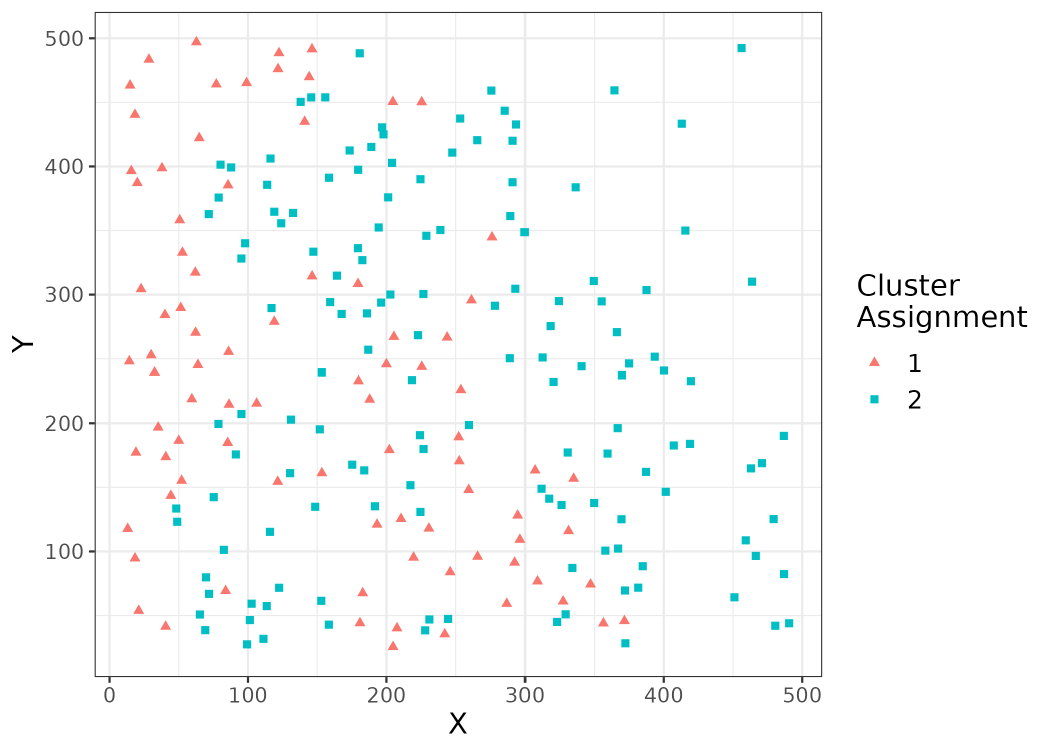}
        \caption{Full data.}
    \label{fig:full_loc}
    \end{subfigure}%
    \hspace{0.5cm}
    \begin{subfigure}[t]{0.4\linewidth}
    \centering
        \includegraphics[width=\linewidth]{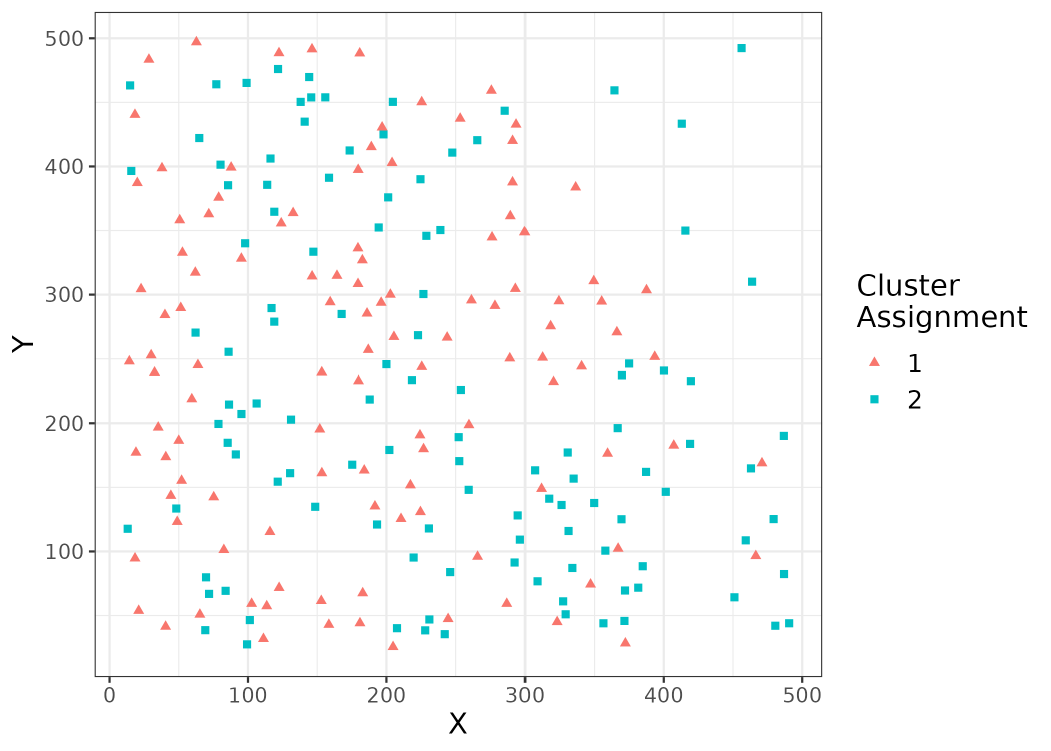}
        \caption{DAIMC.}
    \label{fig:daimc_loc}
    \end{subfigure}
    \begin{subfigure}[t]{0.4\linewidth}
    \centering
        \includegraphics[width=\linewidth]{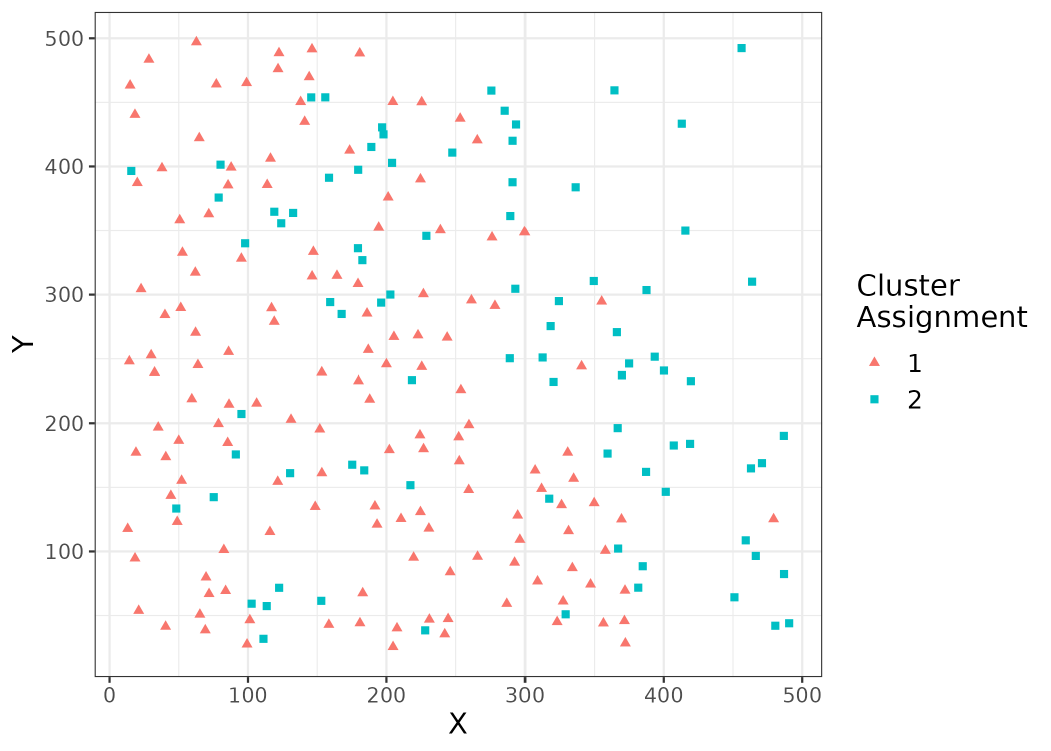}
        \caption{Likelihood data imputation.}
    \label{fig:imp_loc}
    \end{subfigure}%
    \hspace{0.5cm}
    \begin{subfigure}[t]{0.4\linewidth}
    \centering
        \includegraphics[width=\linewidth]{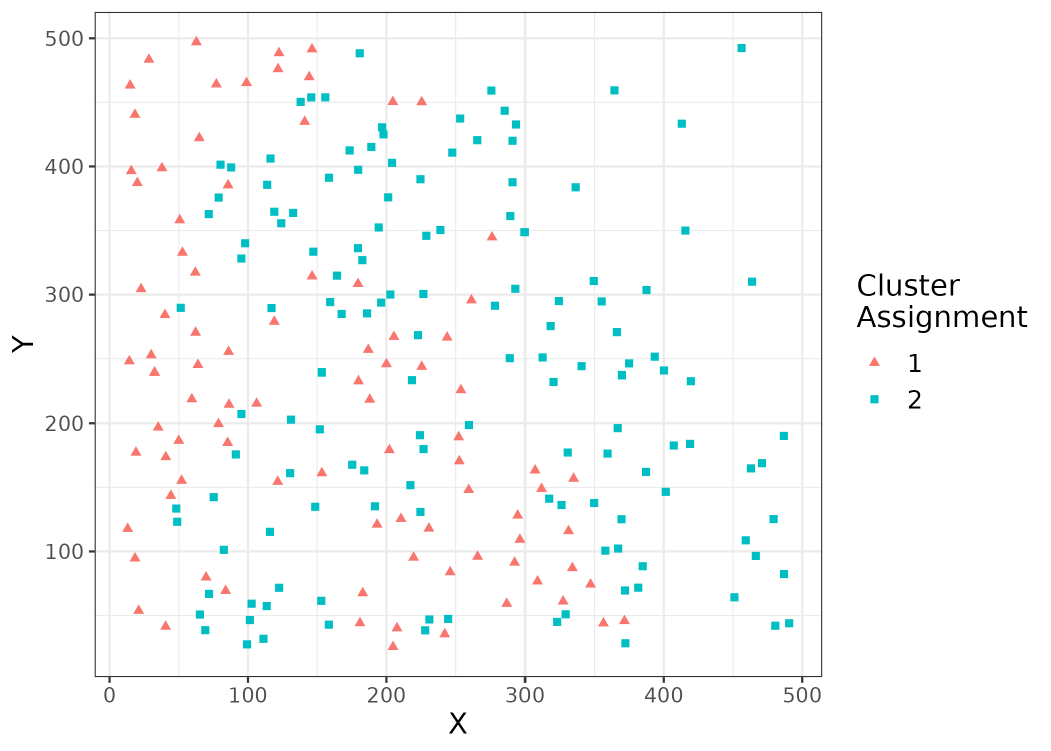}
        \caption{Cluster Quilting.}
    \label{fig:cqq_loc}
    \end{subfigure}
\end{center}
    \caption{Spatial distribution of functional clusters estimated from neuronal activity data collected via two-photon calcium imaging. We find that the spatial patterns of the functional clusters found via Cluster Quilting \textit{(d)} are most closely aligned with those derived from the full data \textit{(a)}, as compared to the functional clusters estimated with other incomplete spectral clustering techniques \textit{(b)} and nuclear-norm-regularized-likelihood imputation \textit{(c)} on the missing data. Calcium imaging provided by the Allen Institute.}
    \label{fig:aba_loc}
\end{figure}

We also analyze the spatial distribution of the functional neuronal clusters found with full data to those estimated via Cluster Quilting, alongside the estimates from the data with missing values imputed using nuclear-norm regularization in Figure \ref{fig:aba_loc}. In order to visualize the spatial patterns of the different clustering estimates, we show the cluster labels with respect to the location of the neurons in the brain slice. Figure \ref{fig:full_loc} shows the functional clusters estimated on the full data set. Here, we see that the estimated functional clusters appear to have a relationship with the spatial location of the neurons, as the neurons in the same function clusters tend to be grouped spatially into several distinct spatial areas; in particular, the clusters seem to be divided spatially along the horizontal axis of the brain slice. This finding follows what has been found previously in the study of functional neuronal clustering from calcium imaging data \citep{poliakov1999limited, dombeck2009functional, wilson2016orientation}. In Figures \ref{fig:daimc_loc} and \ref{fig:imp_loc}, we show an example of the functional clusters estimated from imputation of the masked data via the DAIMC and nuclear-norm-regularized likelihood methods, respectively. We see in these cases that the spatial patterns that underlie the functional clusters estimated from the full data are much less apparent in the functional clustering assignments from the data imputed using the low-rank factorization method and zero imputation, instead having much less structured spatial distributions. On the other hand, in Figure \ref{fig:cqq_loc}, which shows the functional clusters identified by applying Cluster Quilting to the same masked data, we see more separable spatial patterns as well as cluster assignments that more closely match the cluster labels found from fully observed data. From this case study, we find that Cluster Quilting provides imputed functional neuronal clustering estimates from calcium imaging data that better match what would be found if the full data were to be observed and which correspond more closely with what we would expect in the specific neuroscience context. specifically with respect to the relationship to the spatial distributions of the functional clusters.

\bibliographystyle{apalike}
\bibliography{reference}
	
\end{document}